\documentclass[onefignum,onetabnum, final]{siamonline250211}

\usepackage{lipsum}
\usepackage{amsfonts}
\usepackage{graphicx}
\usepackage{algorithmic}
\usepackage{subcaption}
\usepackage{amsmath}
\usepackage{amssymb}
\usepackage{dsfont}
\usepackage{mathrsfs}
\usepackage{geometry}
\usepackage{xcolor}
\usepackage{enumitem}

\usepackage{mathtools}

\usepackage{tcolorbox}
\tcbset{colback=blue!05!white}
\tcbset{colframe=blue!30!black}
\ifpdf
  \DeclareGraphicsExtensions{.eps,.pdf,.png,.jpg}
\else
  \DeclareGraphicsExtensions{.eps}
\fi

\usepackage{enumitem}
\setlist[enumerate]{leftmargin=.5in}
\setlist[itemize]{leftmargin=.5in}


\newsiamremark{remark}{Remark}
\newsiamremark{example}{Example}
\newsiamremark{hypothesis}{Hypothesis}
\newsiamremark{assumption}{Assumption}
\crefname{hypothesis}{Hypothesis}{Hypotheses}
\newsiamthm{claim}{Claim}
\newsiamremark{fact}{Fact}
\crefname{fact}{Fact}{Facts}

\headers{Scalability of SORM for SDEs with multiplicative noise}{T. Schorlepp and T. Grafke}

\title{Scalability of the second-order reliability method for\\stochastic differential equations with multiplicative noise\thanks{Resubmitted to the editors on November 25, 2025.
\funding{T.G.\ acknowledges the support received from the EPSRC projects EP/T011866/1 and EP/V013319/1.}}}

\author{Timo Schorlepp\thanks{Courant Institute of Mathematical Sciences, New York University, New York, NY 10012, USA 
  (\email{timo.schorlepp@nyu.edu}).}
\and Tobias Grafke\thanks{Mathematics Institute, University of Warwick, Coventry CV4 7AL, UK
  (\email{T.Grafke@warwick.ac.uk}).}}

\usepackage{amsopn}

\usepackage{tikz}
\usetikzlibrary{shapes.geometric, arrows, positioning}
\tikzstyle{startstop} = [rectangle, rounded corners, minimum width=3cm, minimum height=1cm, text centered, draw=black, text width=3cm] \tikzstyle{startstopnoborder} = [minimum width=3cm, minimum height=1cm, text centered, text width=3cm] \tikzstyle{decision} = [diamond, minimum width=3cm, minimum height=1cm, text centered, draw=black, text width=3cm] \tikzstyle{arrow} = [thick,->,>=stealth]

\newcommand{\EE}{\mathbb{E}}
\newcommand{\RR}{\mathbb{R}}

\newcommand{\NN}{\mathbb{N}}
\newcommand{\PP}{\mathbb{P}}
\newcommand{\eps}{\varepsilon}
\DeclareMathOperator*{\argmin}{arg\,min}

\DeclareMathOperator{\ppr}{pr}
\DeclareMathOperator{\Id}{Id}
\newcommand{\dd}{\mathrm{d}}
\DeclareMathOperator{\rank}{rank}
\DeclareMathOperator{\trace}{tr}
\newcommand{\dv}[2]{\frac{\dd#1}{\dd#2}}

\newcommand{\fdv}[2]{\frac{\delta #1}{\delta #2}}
\newcommand{\nfdv}[3]{\frac{\delta^{#1}#2}{\delta#3^{#1}}}
\newcommand{\abs}[1]{\left\lvert#1\right\rvert}
\newcommand{\norm}[1]{\left\lVert#1\right\rVert}

\ifpdf
\hypersetup{
  pdftitle={Scalability of SORM for SDEs with multiplicative noise},
  pdfauthor={T. Schorlepp and T. Grafke}
}
\fi


\begin{document}

\maketitle

\begin{abstract}
We show how to efficiently compute asymptotically sharp estimates of extreme event probabilities in stochastic differential equations (SDEs) with small multiplicative Brownian noise. The underlying approximation is known as sharp large deviation theory or precise Laplace asymptotics in mathematics, the second-order reliability method (SORM) in reliability engineering, and the instanton or optimal fluctuation method with 1-loop corrections in physics. It is based on approximating the tail probability in question with the most probable realization of the stochastic process, and local perturbations around this realization. We first recall and contextualize the relevant classical theoretical result on precise Laplace asymptotics of diffusion processes [Ben Arous~(1988), Stochastics, 25(3), 125-153], and then show how to compute the involved infinite-dimensional quantities -- operator traces and Carleman--Fredholm determinants -- numerically in a way that is scalable with respect to the time discretization and remains feasible in high spatial dimensions. Using tools from automatic differentiation, we achieve a straightforward black-box numerical computation of the SORM estimates in JAX. The method is illustrated in examples of SDEs and stochastic partial differential equations, including a two-dimensional random advection-diffusion model of a passive scalar. We thereby demonstrate that it is possible to obtain efficient and accurate SORM estimates for very high-dimensional problems, as long as the infinite-dimensional structure of the problem is correctly taken into account. Our JAX implementation of the method is made publicly available.
\end{abstract}

\begin{keywords}
extreme events, stochastic differential equations, precise Laplace asymptotics, Carleman-Fredholm determinant, automatic differentiation, scalability, passive scalar
\end{keywords}

\begin{MSCcodes}
60F10, 28C20, 60H15, 35R60, 35Q35, 65F40, 65C50
\end{MSCcodes}


\section{Introduction}

Estimating the probabilities of rare but impactful events, an important problem throughout science and engineering, is usually done through -- generally expensive -- Monte Carlo methods such as importance sampling~\cite{bucklew:2013} or importance splitting methods~\cite{au-beck:2001,cerou-guyader-rousset:2019}. A simple alternative approach, which is principled and sampling-free but only asymptotically exact under certain assumptions, consists of using a Laplace approximation, see e.g.~\cite{piterbarg-fatalov:1995,friz-klose:2022,spokoiny:2023,katsevich:2024} for general overviews. In the present context, the Laplace approximation is also known as the second-order reliability method (SORM) in the reliability engineering literature~\cite{breitung:1984,rackwitz:2001,breitung:2006,hu-mansour-olsson-etal:2021,der-kureghian:2022}. For a standard normal random parameter vector $\eta \sim {\cal N}\left(0, \text{Id} \right)$ in $\RR^N$ and a parameter-to-event map $F \colon \RR^N \to \RR$, this approximation estimates rare failure probabilities as
\begin{align}
\PP\left[F(\eta) \geq z \right] \approx (2 \pi)^{-1/2} C(z) \exp \left\{-I(z) \right\}\,,
\label{eq:sorm-finite}
\end{align}
where $I(z) = \min_{\eta} \tfrac{1}{2} \norm{\eta}^2$ subject to $F(\eta) \geq z$. We write $I(z) = \tfrac{1}{2} \norm{\eta_z}^2$ with the minimizer or design point $\eta_z$, which can be regarded as the most likely realization of the random variable $\eta$ that realizes the event $F(\eta) \geq z$. The SORM prefactor $C(z)$, which is necessary to get an asymptotically exact estimate and accounts for Gaussian fluctuations around $\eta_z$, is given by
\begin{align}
C(z) = \left[2 I(z) \det \left(\text{Id}_N - \lambda_z \text{pr}_{\eta_z^\perp} \nabla^2 F(\eta_z)\text{pr}_{\eta_z^\perp} \right) \right]^{-1/2}
\label{eq:sorm-prefac-finite}
\end{align}
in terms of the weighted and projected Hessian $\nabla^2 F$ of the map $F$ at the minimizer $\eta_z$, with $\text{pr}_{\eta_z^\perp}$ denoting the orthogonal projection onto the subspace $\eta_z^\perp$, and $\lambda_z = I'(z)$ the Lagrange multiplier for the constraint $F(\eta) \geq z$. Calculating $C(z)$ hence requires finding and calculating the determinant of an $N \times N$ matrix. Importantly, the Laplace approximation can provide valuable qualitative insights into the extreme event under study by representing its ``most likely realization'', which can then serve as the basis for further Monte Carlo-based approaches, or it may very well in itself already yield accurate approximations of tail probabilities. Yet, a commonly advocated point of view in the reliability community seems to be that the asymptotic expression~\eqref{eq:sorm-finite} is hopeless to use in high-dimensional and nonlinear settings, both because of its inaccuracy and the expensive computations involved~\cite{schueller-iwo-pradlwarter:2004,valdebenito-pradlwarter-schueller:2010,adams-bohnhoff-dalbey-etal:2020,olivier-giovanis-aakash-etal:2020,hu-mansour-olsson-etal:2021,breitung:2024}.\\

In a previous work~\cite{schorlepp-tong-grafke-etal:2023}, approaching the problem from the perspective of large deviation theory in mathematics~\cite{varadhan:1984,iltis:1995,dembo-zeitouni:1998,freidlin-wentzell:2012,grafke-vanden-eijnden:2019,tong-vanden-eijnden-stadler:2021} and the instanton method in physics~\cite{falkovich-kolokolov-lebedev-etal:1996,zinn-justin:2021,grafke-grauer-schaefer:2015}, it was demonstrated to the contrary that SORM can in fact be extended to the infinite-dimensional setup of path space in a scalable way under certain assumptions.
As we recall in more detail below, what was shown in~\cite{schorlepp-tong-grafke-etal:2023} is that directly analogous expressions to~\eqref{eq:sorm-finite} and~\eqref{eq:sorm-prefac-finite}, involving a Fredholm determinant~\cite{mckean:2011} instead of the finite-dimensional determinant in~\eqref{eq:sorm-prefac-finite}, can still be used asymptotically and remain feasible to evaluate numerically for stochastic processes $(X_t)$ described by stochastic differential equations (SDEs) on $\RR^n$ -- and even well-posed stochastic partial differential equations (SPDEs) -- driven by additive Brownian noise, i.e.\ with a state-independent diffusion matrix $\sigma = \text{const}$. The underlying mathematical theory for precise Laplace asymptotics of general diffusion processes on $\mathbb{R}^n$ has in fact been known since the 1980s~\cite{azencott:1982,ben-arous:1988}, see also the review~\cite{piterbarg-fatalov:1995}. Above, and in the following, by referring to a method as ``scalable'', we mean that the computational cost does not grow, as the temporal resolution $n_t$ is refined, beyond the increased cost of solving differential equations of the same size as the original SDE itself. In other words, the number of \textit{differential equation solves} is thus stable under refinement of the temporal resolution, which, as we will see below, is a crucial property of a computational method such as SORM to be applicable in high dimensions.\\

Our goal here is to extend the approach of~\cite{schorlepp-tong-grafke-etal:2023} to general SDEs with any type of multiplicative Brownian noise with $\sigma = \sigma(X_t)$, providing a recipe that can be applied to tail probability estimates via SORM for any given diffusion process (under certain smoothness and nondegeneracy assumptions). Notably, beyond just completing the program initiated in~\cite{schorlepp-tong-grafke-etal:2023}, it turns out that the resulting SORM estimate in infinite dimensions is \textit{not} directly analogous to~\eqref{eq:sorm-prefac-finite} for general multiplicative-noise SDEs, and requires a certain kind of \textit{renormalization}. This can already be seen in the seminal work~\cite{ben-arous:1988}, and we provide a mostly self-contained exposition and intuitive explanation of the central result of~\cite{ben-arous:1988} in this paper, which then leads us to theorem~\ref{thm:prefac} of the present work for the generalization of~\eqref{eq:sorm-prefac-finite}. This generalization involves a so-called Carleman--Fredholm (CF) determinant~\cite{simon:1977,simon:2005} of the projected Hessian operator, as well as the trace of a regularized operator, and we address the intuition for and numerical computation of these quantities in a matrix-free and scalable way. The latter is important because evaluating~\eqref{eq:sorm-prefac-finite} or generalizations thereof involves determinants and traces of large matrices of size $N \times N$ for $N = n \cdot n_t$, whose size grows with the time discretization dimension $n_t$, and can quickly become infeasible. As a high-dimensional (spatially extended) example where this is relevant, we consider the advection and diffusion of a passive scalar in a two-dimensional random fluid flow. Furthermore, we show numerically that if instead of using theorem~\ref{thm:prefac}, equation~\eqref{eq:sorm-prefac-finite} is applied ``naively'' to discretized SDEs with multiplicative noise, then scalability is lost, and evaluating SORM estimate for high-dimensional systems hence becomes hopeless.\\

Beyond the practical aspect of reliability analysis for SDEs, the results presented here should also be of interest for the mathematics and physics communities. Results such as theorem~\ref{thm:prefac} for the leading coefficient in precise Laplace asymptotics are known theoretically to hold for different stochastic processes beyond diffusion processes on $\RR^n$, such as fractional Brownian motions or rough differential equations~\cite{inahama:2013,friz-gassiat-pigato:2021,yang-xu-pei:2025}, and singular SPDEs~\cite{berglund-di_gesu-weber:2017,friz-klose:2022,klose:2022,berglund:2022}. While the analysis and application of numerical methods to compute the involved Fredholm or CF determinants has mostly focused on low-dimensional state-spaces so far~\cite{bornemann:2010,friz-gassiat-pigato:2022}, we show that directly computing these objects in high-dimensional state spaces is in fact feasible and merits further study. For instance, large deviations of fractional Brownian motion have been of recent interest in the statistical physics literature~\cite{walter-pruessner-salbreux:2022,meerson-benichou-oshanin:2022,hartmann-meerson:2024}, and the computation of pre-exponential factors could be addressed with the methods presented here. Further possible applications in physics include the macroscopic fluctuation theory~\cite{bertini-de-sole-gabrielli-etal:2015}, fluid dynamics and turbulence~\cite{falkovich-kolokolov-lebedev-etal:1996,balkovsky-falkovich-kolokolv-etal:1997,fuchs-herbert-rolland:2022,burekovic-schaefer-grauer:2024} and growth processes~\cite{meerson-katzav-vilenkin:2016,krajenbrink-doussal-prolhac:2018,hartmann-meerson-sasorov:2021}.\\

Our main \textit{contribution} is to show how the SORM estimate in finite dimensions needs to be generalized to obtain scalability, when applied to any SDE driven by Brownian motion. This is a common source of high-dimensional rare event probability estimation problems in practice. In particular, we demonstrate that the high dimensionality of these systems in itself is not a problem for SORM. We make the resulting method accessible through a publicly available JAX implementation~\cite{Schorlepp-github}, which can be used in a black-box way for tail probability estimates via SORM of any diffusion process in $\mathbb{R}^n$ on a finite time interval, examples of which arise e.g.\ in physics, structural engineering, or mathematical finance.\\

As for potential \textit{limitations}, as usual, the presented SORM estimate, while asymptotically exact, can for any given example with a complicated failure region yield an inaccurate approximation of the true tail probability. It should hence be considered as a first step of a rare event analysis in such a case, and not as a replacement for sampling methods. On the other hand, because the SORM estimate has very few tuning parameters, and the number of noise-to-event map and gradient/Hessian evaluations only scales weakly with the rareness of the event, it is ideally suited for this role. On the more theoretical side, we assume for simplicity that there is a unique and nondegenerate large deviation minimizer or design point throughout the paper, and only consider finite time horizon problems. This excludes questions of metastability~\cite{berglund:2011,bouchet-reygner:2016,heller-limmer:2024}, as well as degeneracies~\cite{iltis:1995} e.g.\ due to other continuous symmetries~\cite{ellis-rosen:1981}, from the present work.\\

The \textit{outline} of the paper as follows: In section~\ref{sec:motiv}, we first recall the relevant results of~\cite{schorlepp-tong-grafke-etal:2023} for additive noise SDEs, demonstrate through two examples that directly applying those results fails to give a scalable method for general multiplicative noise SDEs, and then state the main result of the present paper in theorem~\ref{thm:prefac}, which fixes this issue. We discuss numerical methods to evaluate the SORM estimate from theorem~\ref{thm:prefac} in practice for potentially high-dimensional state spaces in section~\ref{sec:num-methods}, apply them in examples, including a random advection-diffusion problem, in section~\ref{sec:examples}, and conclude with a discussion and outlook in section~\ref{sec:concl}. Appendix~\ref{app:ricc} contains additional remarks on the computation of the prefactor $C(z)$ via differential Riccati equations. In appendix~\ref{app:index}, we list explicit coordinate expressions for the first and second variation of the forward map in case of multiplicative noise. Appendix~\ref{app:bip:laplace} addresses parallels between the construction for theorem~\ref{thm:prefac} and the Gauss--Newton approximation. Finally, in appendix~\ref{app:theory}, we give an explanation of how~\cite{ben-arous:1988} proves the main theoretical ingredient for theorem~\ref{thm:prefac}, and further provide a heuristic derivation, intuition and an introduction of all necessary mathematical concepts for theorem~\ref{thm:prefac}.

\section{Motivation and statement of theoretical result}
\label{sec:motiv}

\subsection{SORM approximation for additive noise SDEs}
\label{sec:sorm-additive}

In this subsection, we recall the main results of~\cite{schorlepp-tong-grafke-etal:2023}, the extension of which we will subsequently discuss. We consider an SDE with additive noise on the time interval $[0,T]$ with state space $\RR^n$
\begin{align}
\begin{cases}
\dd X^\eps_t = b(X^\eps_t) \dd t + \sqrt{\eps} \sigma \dd W_t\,,\\
X^\eps_0 = x \in \RR^n,
\end{cases}
\label{eq:sde-additive}
\end{align}
with small noise strength $\eps > 0$. Here $b \colon \RR^n \to \RR^n$ is a deterministic ``drift'' vector field, and $\sigma \in \RR^{n\times n}$ is a constant ``diffusion'' matrix, which we do not assume to necessarily have full rank (i.e.\ the noise can be ``degenerate''). By $W = (W_t)_{t \in [0,T]}$ we denote a standard $n$-dimensional Brownian motion. We are interested in quantifying extreme event probabilities of the real-valued random variable~$f(X_T^\eps)$, where $f \colon \RR^n \to \RR$ is a final-time ``observable'' (we could also consider more general observables that are functions of the full trajectory, without significant changes). The central object for the following results is the so-called noise-to-observable map, which is defined as
\begin{align}
F \colon L^2([0,T],\RR^n) \to \RR\,, \quad &F[\eta] = f(\phi(T)) \text{ for } \begin{cases}
\dot{\phi} = b(\phi) + \sigma \eta\,,\\
\phi(0) = x\,.
\end{cases} 
\label{eq:noise-to-obs-additive}
\end{align}
We would like to estimate the following tail probability for $z > F(0)$, which we expect to be small as $\eps \downarrow 0$:
\begin{align*}
  P^\eps(z) :=
\PP[F(\sqrt{\eps}\xi) \geq z]\,.
\end{align*}
Here, $\left( \xi(t) \right)_{t \in [0,T]}$ denotes standard Gaussian white noise in $\RR^n$, i.e.\ formally $\xi(t) = \dd W_t / \dd t$, for whose realizations the SDE~\eqref{eq:sde-additive} is solved (note the slight abuse of notation: we defined $F$ on $L^2$ above, which is important for the subsequent calculation of derivatives, but still evaluate it on white noise here). In the following, we use $\overset{\eps\downarrow0}{\sim}$ to denote asymptotic equivalence, i.e.\ we write $g(\eps) \overset{\eps\downarrow0}{\sim} h(\eps)$ if and only if $\lim_{\eps \downarrow 0} g(\eps)/h(\eps) = 1$. Then we have the sharp asymptotic expansion for the tail probability
\begin{align}
P^\eps(z) \overset{\eps\downarrow0}{\sim}\eps^{1/2} (2 \pi)^{-1/2}
  \, C(z) \,\exp\left\{-\frac1\eps I(z)\right\}\,.
  \label{eq:p-eps-asymp-form}
\end{align}
On the right-hand side of~\eqref{eq:p-eps-asymp-form}, the observable rate function $I$ is given by
\begin{align}
I \colon
\RR \to \RR\,, \quad I(z) := \tfrac{1}{2}
\norm{\eta_z}^2_{L^2([0,T], \RR^n)}
\label{eq:rate-function-def}
\end{align}
where the optimal ``instanton'' noise $\eta_z$ (also called large deviation minimizer, or design point) is the solution of the constrained optimization problem
\begin{align}
\eta_z =  \argmin_{\begin{subarray}{c}\eta\in L^2([0,T],\RR^n)\\
  \text{s.t. }F[\eta] = z
    \end{subarray}}  \;
  \frac{1}{2}
\norm{\eta}_{L^2([0,T],\RR^n)}^2\,.
\label{eq:min-prob-eta}
\end{align}
We assume that $\eta_z$ is uniquely defined in the following.
The problem~\eqref{eq:min-prob-eta} corresponds to finding the most probable noise realization that leads to the prescribed outcome for the final-time observable. One can solve the constrained optimization problem~\eqref{eq:min-prob-eta} with gradient-based methods. As a necessary ingredient for this, the gradient of the noise-to-observable map $F$ at any $\eta$, using the adjoint state method~\cite{plessix:2006,schorlepp-grafke-may-etal:2022}, can be evaluated as
\begin{align}
\fdv{\left(\lambda F \right)}{\eta}  = \sigma^\top \theta\,,
\label{eq:first-var-additive}
\end{align}
where $(\phi, \theta)$ are found by solving
\begin{align}
\begin{cases}
\dot{\phi} = b(\phi) + \sigma \eta\,, \quad
&\phi(0) = x\,,\\
\dot{\theta} = -\nabla b(\phi)^\top \theta\,,
\quad &\theta(T) = \lambda \nabla f(\phi(T))\,.
\label{eq:grad-additive}
\end{cases}
\end{align}
The optimal state space trajectory, and associated adjoint variable/conjugate momentum are denoted by $(\phi_z, \theta_z)$. Here, $\lambda > 0$ denotes a Lagrange multiplier for the constraint $F[\eta] = z$, and we will write $\lambda_z$ for the Lagrange multiplier at optimality from the first-order necessary optimality conditions $\eta_z = \lambda_z \; \delta F/\delta \eta |_{\eta_z}$.\\

We now address the computation of the leading-order prefactor $C(z)$ in~\eqref{eq:p-eps-asymp-form}, which is required to obtain an asymptotically sharp estimate instead of a log-asymptotic estimate. The main ingredient for $C(z)$ is a Fredholm determinant of the symmetric and trace-class (TC) Hessian of the noise-to-observable map $F$. We assume, as a necessary condition for the expansion to hold, that the scaled Hessian $A_{\lambda_z}$ of $F$ at the instanton noise $\eta_z$, defined as
\begin{align*}
A_{\lambda_z} :=  \left. \nfdv{2}{\left(\lambda_z F\right)}{\eta} \right|_{
\eta_z}\,,
\end{align*}
is such that the operator
\begin{align}
\begin{aligned}
&\text{Id} - A_{\lambda_z} \colon L^2\left([0,T], \RR^n\right) \to L^2\left([0,T], \RR^n\right) \text{ is positive definite}\\
&\qquad \text{on the codimension 1 subspace }  \eta_z^\perp \subset L^2([0,T], \RR^n).
\end{aligned}
\label{eq:sufficent-optimal}
\end{align}
This is nothing else than a second-order sufficient optimality condition for the minimizer $\eta_z$ in the constrained optimization problem~\eqref{eq:min-prob-eta}, and hence a natural assumption. Assuming the existence of a unique instanton $\eta_z$ with such a Hessian, we have
\begin{align}
C(z) = \left[2 I(z) \det \left(\Id -
\ppr_{\eta_z^\perp}A_{\lambda_z} \ppr_{
\eta_z^\perp} \right) \right]^{-1/2}\,,
\label{eq:tail-prob-prefac-sde-additive}
\end{align}
with $\ppr_{\eta_z^\perp}$ denoting the orthogonal projection in $L^2([0,T],\RR^n)$ onto the subspace $\eta_z^\perp$. Here, the operator being TC means that the eigenvalues $\mu_z^{(i)}$ of $B_z = \ppr_{
\eta_z^\perp} A_{\lambda_z} \ppr_{
\eta_z^\perp}$ are absolutely summable, i.e.\ $\sum_{i = 1}^\infty \abs{\mu_z^{(i)}} < \infty$. Then the Fredholm determinant $\det$ can be defined via $\det \left(\text{Id} - B_z \right) = \prod_{i= 1}^\infty \left(1 - \mu_z^{(i)} \right) \in (0, \infty)$, and we can approximate this infinite product for example through just finitely many leading eigenvalues with largest absolute value. We come back to the definition and properties of TC operators and Fredholm determinants in appendix~\ref{app:theory}. Intuitively speaking, TC operators are among the ``nicest'' ones to work with, for which e.g.\ notions of trace and determinant immediately carry over from finite dimensional matrices to infinite dimensions. This is in line with the fact that the precise Laplace asymptotics for additive noise SDEs in equations~\eqref{eq:p-eps-asymp-form} and \eqref{eq:rate-function-def} and in particular the prefactor~\eqref{eq:tail-prob-prefac-sde-additive} look exactly like their corresponding expressions~\eqref{eq:sorm-finite} and~\eqref{eq:sorm-prefac-finite} for a finite-dimensional random parameters space, as in~\cite{tong-vanden-eijnden-stadler:2021} and SORM in the engineering literature~\cite{breitung:1984,rackwitz:2001,breitung:2006,hu-mansour-olsson-etal:2021}.\\

To find the leading eigenvalues iteratively in a matrix-free and scalable way, we rely on matrix-vector products, or, in other words, evaluate the second variation $A_\lambda$ in a direction $\delta \eta$ with second order adjoints via
\begin{align}
A_\lambda \delta \eta = \sigma^\top \zeta
\label{eq:second-var-additive}
\end{align}
where
\begin{align}
\begin{cases}
\dot \gamma^{(1)} =\nabla b(\phi) \gamma^{(1)} + \sigma \delta \eta\,,\\
\dot \zeta = -\left \langle \nabla^2 b(\phi), \theta \right
\rangle \gamma^{(1)} - \nabla b^\top(\phi) \zeta\,,
\end{cases}
\text{ with } \begin{cases}
\gamma^{(1)}(0) = 0\,,\\
\zeta(T)
= \lambda \nabla^2 f(\phi(T)) \gamma^{(1)}(T)\,,
\end{cases}
\label{eq:second-order-adj-eq-additive}
\end{align}
with notation $\left[ \left \langle \nabla^2 b(\phi), \theta \right
\rangle \right]_{ij} = \sum_{k = 1}^n \theta_k \partial_i \partial_j b_k(\phi)$. The equations~\eqref{eq:second-var-additive} and~\eqref{eq:second-order-adj-eq-additive} can be obtained by linearizing~\eqref{eq:first-var-additive} and~\eqref{eq:grad-additive}. The strategy of approximating $C(z)$ in this way is scalable with respect to the time discretization and remains practically feasible in high spatial dimensions $n \gg 1$ if the forcing is low-rank, i.e.\ $\rank \sigma \ll n$, as was demonstrated in~\cite{schorlepp-tong-grafke-etal:2023} for a stochastic Korteweg--De Vries as well as the three-dimensional Navier--Stokes equations, see also~\cite{burekovic-schaefer-grauer:2024} for an application to a stochastic nonlinear Schr{\"o}dinger equation. We briefly summarize an alternative approach to prefactor computations using differential Riccati equations in appendix~\ref{app:ricc}.

\subsection{Breakdown of ``naive'' SORM scalability in SDEs with multiplicative noise}

We now discuss what happens if we transition from additive noise SDEs~\eqref{eq:sde-additive} to general multiplicative noise It{\^o} SDEs
\begin{align}
\begin{cases}
  \dd X^\eps_t = b\left(X^\eps_t\right)\,\dd t + \sqrt{\eps} \sigma\left(X^\eps_t\right)\,\dd W_t\,,\\
  X_0^\eps = x \in \RR^n
\end{cases}
 \label{eq:SDE}
\end{align}
with $\sigma = \sigma(x) \colon \RR^n \to \RR^{n \times n}$ not constant. The leading-order term does not change structurally~\cite{freidlin-wentzell:2012}, and we still have the rate function $I$ as defined above in~\eqref{eq:rate-function-def} with noise-to-observable map $F$ defined as
\begin{align}
F \colon L^2([0,T],\RR^n) \to \RR\,, \quad &F[\eta] = f(\phi(T)) \text{ for } \begin{cases}
\dot{\phi} = b(\phi) + \sigma(\phi) \eta\,,\\
\phi(0) = x\,.
\end{cases} 
\label{eq:noise-to-obs-mult}
\end{align}
The prefactor~\eqref{eq:tail-prob-prefac-sde-additive} does change significantly, however. By a similar computation as for the additive-noise case, we may evaluate the gradient of the noise-to-observable map $F$ at any $\eta \colon [0,T] \to \RR^n$ using an adjoint variable $\theta$ via
\begin{align*}
\fdv{\left(\lambda F \right)}{\eta}  = \sigma(\phi)^\top \theta\,,
\end{align*}
where $(\phi, \theta)$ are now found by solving
\begin{align}
\begin{cases}
\dot{\phi} = b(\phi) + \sigma(\phi) \eta\,, \quad
&\phi(0) = x\,,\\
\dot{\theta} = -\nabla b(\phi)^\top \theta - \left(\nabla \sigma(\phi) \eta\right)^\top \theta\,,
\quad &\theta(T) = \lambda \nabla f(\phi(T))\,.
\end{cases}
\label{eq:first-order-adjoint-mult-noise}
\end{align}
Similarly, for the second-order adjoint equations, we find
\begin{align}
A_\lambda \delta \eta =  \nfdv{2}{\left(\lambda F\right)}{\eta} \delta \eta  =  \sigma(\phi)^\top \zeta + \left \langle \theta,\left(\nabla \sigma(\phi) \cdot \right) \gamma^{(1)} \right \rangle
 \,,
 \label{eq:second-var-mult}
\end{align}
to evaluate the second variation operator $A_\lambda$ in the direction $\delta \eta$. Here, $(\gamma^{(1)}, \zeta)$ now solve the second-order tangent and adjoint equations
\begin{align}
\begin{aligned}
&\begin{cases}
\dot{\gamma}^{(1)} =  L[\eta, \phi] \gamma^{(1)} +  \sigma(\phi) \delta \eta\,,\\
\dot{\zeta} \quad= -L[\eta, \phi]^\top \zeta- 
\left(\nabla \sigma (\phi) \delta \eta \right)^\top \theta- \left \langle \nabla^2 b(\phi),
\theta \right \rangle \gamma^{(1)} - 
\left(\nabla^2 \sigma(\phi) \eta\right) \gamma^{(1)} \theta\,,
\end{cases}\\
\text{ with } &\begin{cases}
\gamma^{(1)}(0) = 0\,,\\
\zeta(T) = \lambda \nabla^2 f(\phi) \gamma^{(1)}(T)\,,
\end{cases}
\end{aligned}
\label{eq:second-order-adjoint-mult-noise}
\end{align}
with time-dependent and in general non-symmetric  $n \times n$ matrix $L[\eta, \phi] := \nabla b(\phi) + \nabla \sigma(\phi) \eta \colon \allowbreak [0,T] \to \RR^{n \times n}$. Explicit component expressions for these equations and the notation used are given in appendix~\ref{app:index}.
 Note that these are still just the linearizations of the first order adjoint equations~\eqref{eq:first-order-adjoint-mult-noise}. Importantly, however, it turns out that it is not enough to simply insert the modified second variation $A_{\lambda_z}$ into the additive-noise prefactor $C(z)$ as stated in~\eqref{eq:tail-prob-prefac-sde-additive}. One possible concern as to why~\eqref{eq:tail-prob-prefac-sde-additive} may need to be modified is the need to correctly account for It{\^o} calculus in the Taylor expansion to derive the prefactor, but there is in fact a more fundamental problem as well.\\

One could argue that a time discretization of the SDE~\eqref{eq:tail-prob-prefac-sde-additive} with $[0,T]$ into~$n_t$ points reduces the problem to a finite-dimensional one, so that finite-dimensional SORM~\eqref{eq:sorm-prefac-finite} should apply. After all, one always has to discretize the problem numerically, and is effectively using a Laplace approximation in high but finitely many dimensions, with a discretized Hessian $A_{\lambda_z}^{(n_t)} \in \RR^{(n \cdot n_t) \times (n \cdot n_t)}$ of the discrete noise-to-event map $F^{(n_t)} \colon \RR^{n \cdot n_t} \to \RR$. In the following two toy examples, we demonstrate that this strategy does indeed give correct results for the prefactor $C(z)$, but is ultimately doomed to fail in high-dimensional applications, since one needs the \textit{full} determinant in
\begin{align}
C^{(n_t)}(z) = \left[ 2 I^{(n_t)}(z) \det \left(\text{Id}_{n \cdot n_t} - \ppr_{\eta_z^\perp}  A_{\lambda_z}^{(n_t)} \ppr_{
\eta_z^\perp} \right) \right]^{-1/2}
\label{eq:prefac-wrong-discretize}
\end{align}
to get the answer and \textit{cannot} truncate after using just the leading eigenvalues. Hence, this prefactor computation strategy is \textit{not} scalable for multiplicative noise SDEs in general, thereby limiting the applicability of the Laplace approximation in high discretization or state space dimensions, unless it is modified in some way.

\subsubsection{Example: predator-prey model}
\label{sec:predprey}

In this subsection, we want to demonstrate how the application of finite-dimensional SORM to a numerical discretization of the continuous process is undesirable, and in particular not scalable as $n_t$ becomes large. For this, we consider a modified Lotka--Volterra model for $\left(X_t^\eps, Y_t^\eps\right)_{t \in [0,T]}$ with state space~$\RR^2$, where $X_t^\eps$ is the prey concentration and $Y_t^\eps$ the predator concentration at time $t \in [0,T]$, with~\cite{grafke-vanden-eijnden:2019}
\begin{align}
  \label{eq:predator-prey-SDE}
  \begin{cases}
   \dd X_t^\eps = \left(-\beta X_t^\eps Y_t^\eps + \alpha X_t^\eps + \delta\right)\,\dd t + \sqrt{\eps} \sqrt{\beta X_t^\eps Y_t^\eps + \alpha X_t^\eps + \delta} \,\dd W_x\,, & X_0^\eps = x_0\,,\\
    \dd Y_t^\eps = \left(+\beta X_t^\eps Y_t^\eps - \gamma Y_t^\eps + \delta\right)\,\dd t + \sqrt{\eps}\sqrt{\beta X_t^\eps Y_t^\eps + \gamma Y_t^\eps + \delta}\, \dd W_y\,, & Y_0^\eps= y_0\,.
  \end{cases}  
\end{align}
The process starts at the fixed point of the drift vector field $(x_0, y_0) \in \RR^2$. The equations can be
derived as the law of large numbers of a stoichiometric reaction
network modeling reproduction of prey, predation, and the death of
predators (with positive rates $\alpha$, $\beta$, and $\gamma$, respectively),
as well as additional migration of predators and prey from neighboring
habitats at rate $\delta > 0$. The migration term is added to avoid the
otherwise absorbing state of complete extinction of both species. The
multiplicative noise is consistent with the fluctuations obtained from
a central limit theorem.\\

We are interested in estimating the probability of an atypically high concentration of prey $X_T^\eps$ at time $T = 10$. Hence, the observable here is $f \colon \RR^2 \to \RR$, $f(x,y) = x$. We consider an event with $z = 0.5$, and set the rates to be $\alpha = 1$, $\beta = 5$, $\gamma = 1$, and $\delta = 0.1$. Taking a noise strength of $\eps = 0.01$ as an example, we perform $5.2 \cdot 10^5$ Monte Carlo simulations of~\eqref{eq:predator-prey-SDE} with $n_t = 1000$ equidistant Euler--Maruyama steps. This yields a tail probability estimate of $P^\eps(z) \in \left[1.56 \cdot 10^{-4},2.32 \cdot 10^{-4} \right]$ at $95\%$ asymptotic confidence by counting the number of samples that satisfy $X_T^\eps \geq z = 0.5$. A few sample paths of the SDE~\eqref{eq:predator-prey-SDE} that do end up in the extreme event set are shown in the top left of figure~\ref{fig:pred-prey-demo}.\\

\begin{figure}
\includegraphics[width = \textwidth]{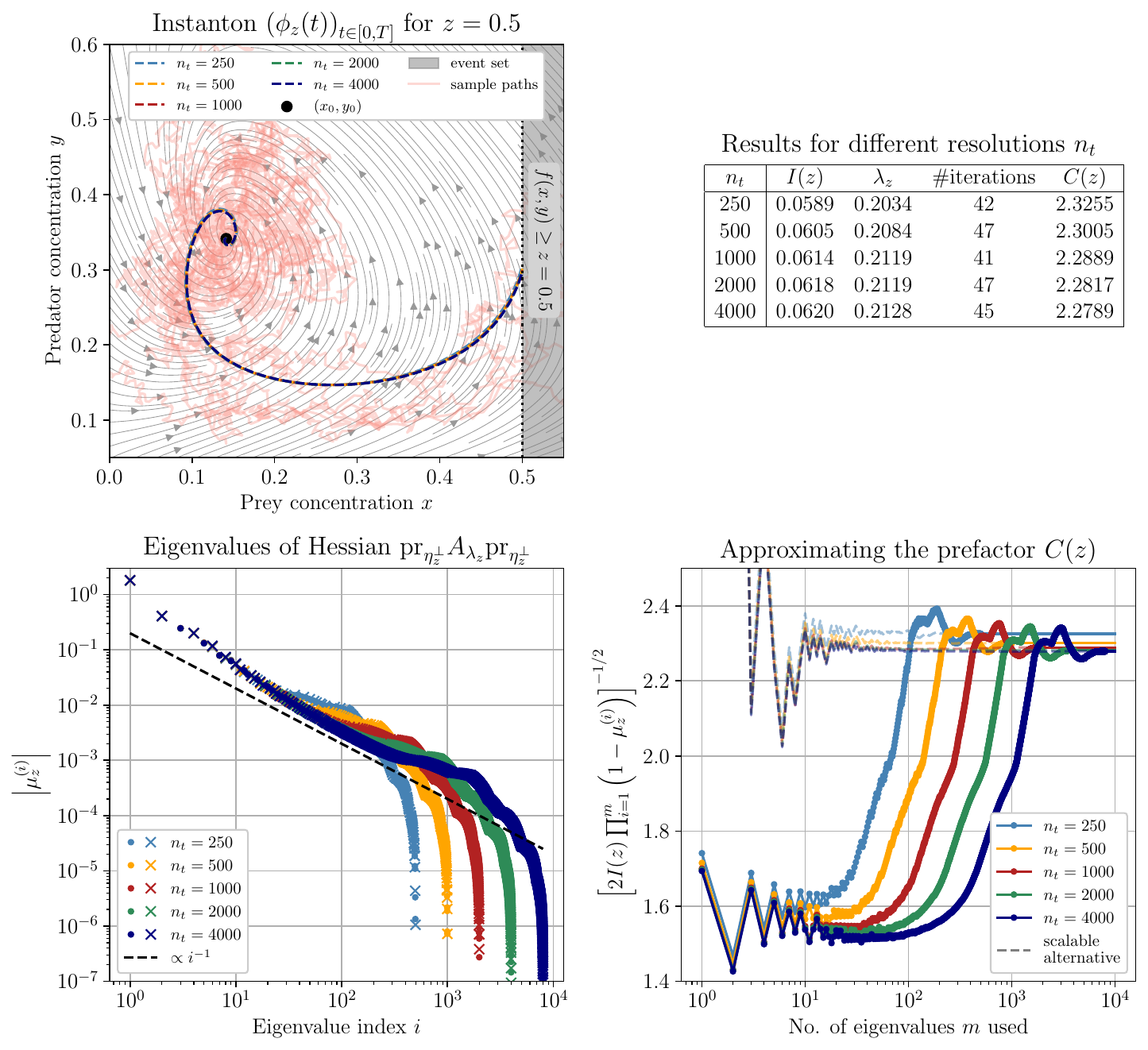}
\caption{A numerical example -- the probability estimation of an extreme prey concentration event $f(X_T^\eps, Y_T^\eps) = X_T^\eps \geq z = 0.5$ in the stochastic Lotka--Volterra model with multiplicative noise~\eqref{eq:predator-prey-SDE} -- demonstrating that ``naively'' trying to compute the prefactor $C(z)$ in~\eqref{eq:p-eps-asymp-form} using a discretized version of~\eqref{eq:tail-prob-prefac-sde-additive} fails to be scalable with respect to the time discretization dimension $n_t$. Top left: state space instanton trajectories $\phi_z$ for different $n_t$ (dashed) found from numerical optimization, and a few sample paths of the SDE~\eqref{eq:predator-prey-SDE} at noise strength $\eps = 0.01$ that reach the extreme event set at final time $T = 10$. The grey field lines visualize the drift vector field of the SDE~\eqref{eq:predator-prey-SDE}. Top right: table that summarizes the numerical results for the observable rate function $I(z)$, Lagrange multiplier $\lambda_z$, number of L-BFGS iterations in the instanton optimization, and prefactor $C(z)$ from~\eqref{eq:prefac-wrong-discretize} using all eigenvalues of the discrete projected Hessian $\text{pr}_{\eta_z^\perp} A_{\lambda_z} \text{pr}_{\eta_z^\perp}$. Bottom left: Full spectrum of all eigenvalues $\mu_z^{(i)}$ of the discretized projected Hessian at different resolutions. Positive eigenvalues are drawn as dots, negative eigenvalues as crosses. Bottom right: Prefactor $C(z)$ when approximating~\eqref{eq:prefac-wrong-discretize} using only the $m$ leading eigenvalues. No discretization-independent convergence is observed, and the number of necessary eigenvalues increases with $n_t$. The dashed lines, for references, are obtained by discretizing the correct continuum limit from theorem~\ref{thm:prefac} instead of~\eqref{eq:tail-prob-prefac-sde-additive}, as explained later in section~\ref{sec:num-methods}.}
\label{fig:pred-prey-demo}
\end{figure}

Now, with the Monte Carlo result as our ground truth, we turn to the instanton and prefactor calculation, which we will perform at a number of different time resolutions $n_t$ to investigate scalability. We note that in contrast to Monte Carlo simulations, the computations for these asymptotic estimates are independent of $\eps$, and the only $\eps$-dependence is explicit in equation~\eqref{eq:p-eps-asymp-form}. As a consequence, the asymptotic estimate, once computed, can be used to estimate probabilities for any~$\eps$ without any additional computation. The first step is the calculation of the instanton noise $\eta_z$ with associated state space trajectory $\phi_z$, which in particular yields the observable rate function~$I(z)$ in the exponent in~\eqref{eq:p-eps-asymp-form}. We follow~\cite{schorlepp-grafke-may-etal:2022} for this, and use an augmented Lagrangian formulation where a number of functionals
\begin{align}
{\cal L} = \tfrac{1}{2} \norm{\cdot}^2_{L^2} - \lambda (F - z) + \tfrac{\mu}{2}(F - z)^2
\label{eq:augmented-functional}
\end{align}
are minimized for an increasing sequence of penalty parameters $\mu>0$. Here, and for all other computations in this paper, we use the L-BFGS method of scipy.optimize for these unconstrained minimizations, and compute gradients of the noise-to-observable map $F$ defined in~\eqref{eq:noise-to-obs-mult} (and discretized with Euler steps) automatically via JAX~\cite{bradbury-frostig-hawkins-etal:2018}, cf.\ section~\ref{sec:implement} and~\cite{Schorlepp-github}. The resulting optimal trajectories for different time discretizations $n_t$ are shown in the top left of figure~\ref{fig:pred-prey-demo}, and show that the instanton trajectory does not change much as the time resolution is increased. More quantitatively, the table in the top right of figure~\ref{fig:pred-prey-demo} shows that $I(z)$ and $\lambda_z$ converge as the resolution is increased, while the total number of L-BFGS iterations remains constant.\\

The instanton is then used as an input for the prefactor calculation of $C(z)$ according to~\eqref{eq:prefac-wrong-discretize} (which we stress is \textit{not} the strategy that one \textit{should} use, as we demonstrate here). We also note that for the present example with small state space dimension $n = 2$, the Riccati approach from appendix~\ref{app:ricc} would be more efficient to evaluate the prefactor, or one could proceed along the lines of~\cite{psaros-kougioumtzoglou:2020,zhao-psaros-petromichelakis-etal:2022}. Nevertheless, computing the discretized Hessian $A_{\lambda_z}^{(n_t)}$ automatically via JAX (cf.\ section~\ref{sec:implement}), we obtain the results in the table in figure~\ref{fig:pred-prey-demo} for $C(z)$ when taking the full discrete determinant, i.e.\ using all eigenvalues of the discrete Hessian to compute the determinant. With this, we obtain an asymptotic estimate~\eqref{eq:p-eps-asymp-form} of $P^\eps(z) \approx 1.85 \cdot 10^{-4}$ for the tail probability from the instanton and prefactor calculation at the highest resolution $n_t = 4000$. This matches the Monte Carlo estimate, indicating that indeed, the prefactor calculation via~\eqref{eq:prefac-wrong-discretize} yields correct results, that also converge as the time discretization is refined (as evident from the table). However, this only holds if all eigenvalues of the discretized second variation are taken into account for the determinant calculation when using~\eqref{eq:prefac-wrong-discretize}, as the bottom right plot in figure~\ref{fig:pred-prey-demo} shows. In other words, to get anywhere near a good estimate, the number of eigenvalues needed scales linearly with $n_t$ and the problem becomes unfeasible to solve for large enough resolution.\\

A possible explanation is suggested by the $i^{-1}$ scaling of the eigenvalue spectrum of the discretized Hessians in the bottom left of figure~\ref{fig:pred-prey-demo}: The second variation operator is \textit{not TC} here, and the Fredholm determinant in~\eqref{eq:tail-prob-prefac-sde-additive} is hence not well-defined in general for multiplicative noise SDEs! Since we are not discretizing the correct continuum object, we cannot expect scalability of~\eqref{eq:prefac-wrong-discretize}, and cannot approximate the discrete determinant in~\eqref{eq:prefac-wrong-discretize} using just the leading eigenvalues of the discrete projected Hessian~$\text{pr}_{\eta_z^\perp} A_{\lambda_z} \text{pr}_{\eta_z^\perp}$.

\subsubsection{Example: geometric Brownian motion}
\label{sec:geombb}

The need to use a renormalized estimate instead of the classical one goes beyond arguments of numerical scalability. In this subsection, we demonstrate how a naive application of the finite-dimensional result in the infinite-dimensional setup leads to the wrong prediction even in cases where the computation can be carried out in an analytical way. For this, we consider the arguably simplest It{\^o} SDE with multiplicative noise, geometric Brownian motion in one dimension $n = 1$ with~\cite[example B.4]{schorlepp-grafke-grauer:2023}
\begin{align}
\begin{cases}
\dd X_t^\eps = -\beta X_t^\eps \dd t + \sqrt{2 \eps} X_t^\eps \dd W_t\,,\\
X_0^\eps = 1\,,
\end{cases}
\label{eq:geom-bm}
\end{align}
i.e.\ $b(x) = - \beta x$ and $\sigma(x) = \sqrt{2} x$ here.
To enable straightforward analytical calculations, we consider the observable $f(x) = \tfrac{1}{2} \left( \log x \right)^2$ (since $X_T^\eps$ follows a log-normal distribution), and instead of calculating a tail probability $P^\eps(z) = \PP \left[f(X_T^\eps) \geq z \right]$ as in the previous example, we directly consider the so-called moment-generating function (MGF) $J^\eps(\lambda) := \EE \left[\exp \left\{\frac{\lambda}{\eps} f\left(X_T^\eps\right) \right\} \right] \in [0, \infty]$ of the observable. This is the central quantity that underlies the derivation of~\eqref{eq:tail-prob-prefac-sde-additive} for additive noise, as well as the corresponding extension to multiplicative noise SDEs which we will introduce in subsection~\ref{sec:theory-result} and derive in appendix~\ref{app:theory} using the MGF. Indeed, the sharp asymptotics of $J^\eps(\lambda)$ as $\eps \downarrow 0$ that are described in~\cite{schorlepp-tong-grafke-etal:2023} for additive noise read $J^\eps(\lambda) \sim R_\lambda \exp \left\{+ \eps^{-1} I^*(\lambda) \right\}$ where $R_\lambda = \det \left(\text{Id} - A_\lambda \right)^{-1/2}$, and $I^*$ is the Legendre--Fenchel (LF) transform of the observable rate function $I$. In this simple example, the exact answer is known without any asymptotic estimates. Solving the SDE~\eqref{eq:geom-bm} using It{\^o}'s lemma and integrating the PDF of $X_T^\eps$ yields
\begin{align}
J^\eps(\lambda) = \underbrace{\left[1 - 2 \lambda T\right]^{-1/2} \exp
\left\{\frac{\beta \lambda T^2}{1 - 2 \lambda T} \right\}}_{= R_\lambda}
\underbrace{\exp \left\{\frac{\eps \lambda T^2}{2 \left( 1 - 2 \lambda T \right)}
\right\}}_{= 1 + {\cal O}(\eps)}  \underbrace{\exp \left\{ \frac{\lambda}{\eps} \frac{\beta^2
T^2}{2 \left( 1 - 2 \lambda T \right)} \right\}}_{=\exp \left\{+ \eps^{-1} I^*(\lambda) \right\}}\,.
\label{eq:mgf-geom-bm-ex}
\end{align}
as the exact result for the MGF for $\lambda < 1/(2T)$. We want to compare this exact result to the Laplace approximation with prefactor. As explained in~\cite{schorlepp-grafke-grauer:2023}, the instanton noise for the MGF, which solves $\eta_\lambda = \argmin_{\eta \in L^2} \tfrac{1}{2} \norm{\eta}^2_{L^2} - \lambda F[\eta]$, is given by $\eta_\lambda(t) = -\sqrt{2} \beta \lambda T / (1 - 2 \lambda T) = \text{const}$ here, with corresponding optimal state space trajectory $\phi_\lambda(t) = \exp \left\{-\beta t / (1 - 2 \lambda T) \right\}$ and momentum $\theta_\lambda(t) = - \beta \lambda T / (1 - 2 \lambda T) \exp \left\{\beta t / (1 - 2 \lambda T) \right\}$. This gives the correct exponential term from $\tfrac{1}{\eps} \left( \tfrac{1}{2} \norm{\eta_\lambda}^2_{L^2} - \lambda F[\eta_\lambda]\right)$ in~\eqref{eq:mgf-geom-bm-ex}. For the prefactor $R_\lambda$, however, we note that after a straightforward calculation, we can see that  the second variation operator $A_\lambda$ from~\eqref{eq:second-var-mult} and~\eqref{eq:second-order-adjoint-mult-noise} acts as
\begin{align*}
\left( A_\lambda \delta \eta \right)(t) = 2 \lambda \int_0^T \delta \eta(t') \dd t' \quad \forall t \in [0,T]\,.
\end{align*}
Notably, the right-hand side is independent of $t$, and hence the only nonzero eigenvalue of $A_\lambda$ is $\mu^{(1)}_\lambda=2 \lambda T$ with constant eigenfunction. Consequently, $A_\lambda$ \textit{is} TC here, but $\det(\text{Id} - A_\lambda)^{-1/2} = \left[1 - 2 \lambda T\right]^{-1/2}$ only recovers part of the leading order prefactor in the exact result~\eqref{eq:mgf-geom-bm-ex}. We conclude that the naive generalization of the finite-dimensional prefactor estimate~(\ref{eq:sorm-prefac-finite}) yields an incorrect answer here. The issue is clearly a missing It{\^o}--Stratonovich correction, which will be confirmed in the next subsection, and will yield the missing exponential term in~$R_\lambda$.\\

\begin{figure}
\includegraphics[width = \textwidth]{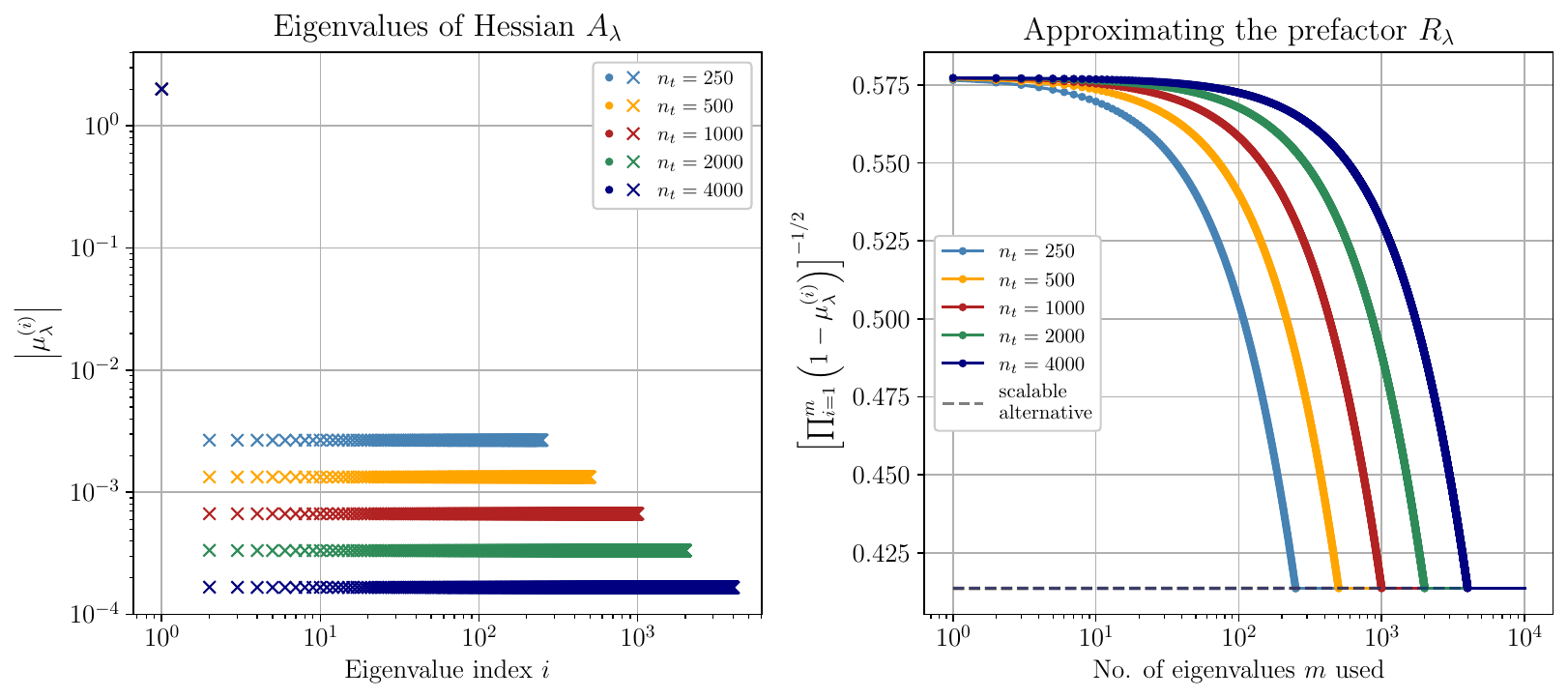}
\caption{Geometric Brownian motion toy example, where we want to estimate the MGF $J^\eps(\lambda)$ of geometric Brownian motion~\eqref{eq:geom-bm} at $\lambda = -1$, and see that again the prefactor calculation according to the MGF equivalent of~\eqref{eq:prefac-wrong-discretize} is not scalable here. The other parameters are $\beta = T = 1$, such that the second variation in continuous time $A_\lambda$ has exactly one nonzero eigenvalue $2 \lambda T = -2$. Left: Eigenvalue spectrum of the discrete Hessians at different resolutions $n_t$, with the crosses indicating negative eigenvalues. Importantly, in addition to the predicted eigenvalue $\approx -2$, all other eigenvalues of the discrete Hessian are small but nonzero here. Right: Approximation of $R_\lambda$ through $m$ leading eigenvalues as $\left[\prod_{i = 1}^m \left(1 - \mu_\lambda^{(i)} \right) \right]^{-1/2}$. We see that this is not scalable, since all of the small but nonzero eigenvalues of the discrete Hessian are needed to ``make up for the missing exponential term'' compared to the exact result~\eqref{eq:mgf-geom-bm-ex} for the MGF prefactor $R_\lambda$. Dashed lines plotted according to the numerical evaluation of the correct prefactor expression~\eqref{eq:mgf-prefac} for the MGF, which immediately gives the correct result after one eigenvalue.}
\label{fig:geom-bb}
\end{figure}

However, with the reasoning presented in the beginning of this section, one should again still be able to discretize the problem in time and calculate $R_\lambda^{(n_t)} = \det \left(\text{Id}_{n_t} -  A_{\lambda}^{(n_t)}\right)^{-1/2}$ for the resulting finite-dimensional problem, and should expect convergence of $R_\lambda^{(n_t)}$ to the correct $R_\lambda$ as $n_t \to \infty$. We show in figure~\ref{fig:geom-bb} that this is true, but that an approximation of the discrete determinant in terms of just the leading eigenvalues of $A_{\lambda}^{(n_t)}$ again fails. For this, we also implement an explicit Euler time stepper for the noise-to-observable map $F$ as defined in~\eqref{eq:noise-to-obs-mult} for the SDE~\eqref{eq:geom-bm} with $n_t$ time steps in JAX~\cite{bradbury-frostig-hawkins-etal:2018}, and differentiate automatically to find the gradient for minimization of $\tfrac{1}{2} \norm{\cdot}^2_{L^2} - \lambda F[\cdot]$ (no augmented Lagrangian~\eqref{eq:augmented-functional} is necessary here) as well as to find the full discrete Hessian at the large deviation minimizer $\eta_\lambda$ afterwards. The intuitive reason for the failure of approximating the discrete determinant by the leading eigenvalues is transparent here, since we know the exact result for $R_\lambda$: We are also not discretizing the correct continuum form of the prefactor $R_\lambda$ in this instance.

\subsection{Underlying theoretical result: SORM prefactor for multiplicative noise SDEs}
\label{sec:theory-result}

The examples from the previous subsection show that we need a proper infinite-dimensional theory for precise Laplace asymptotics of multiplicative noise SDEs, and can only hope for scalability of numerical methods when taking into account properties of the infinite-dimensional operators. While the classical reference for the present case of in\-finite-di\-men\-sional Laplace asymptotics of diffusion processes~\cite{ben-arous:1988} solves this problem on a theoretical level, our goal here is to provide further intuition and explanations of the corresponding abstract result, and ultimately turn it into a viable and scalable computational tool. We note that philosophically, our perspective of developing an infinite-dimensional theory and only discretizing in the end is similar to the modern treatment of Bayesian inverse problems in infinite dimensions~\cite{stuart:2010,ghattas-wilcox:2021,alexanderian:2021}. We provide further comments on the parallels of the result presented here to Laplace approximations in Bayesian inverse problems in appendix~\ref{app:bip:laplace}.

\begin{theorem}
\label{thm:prefac}
For a multiplicative noise SDE~\eqref{eq:SDE} and $z > f \left(X^0_T \right)$, assuming the existence of a unique and nondegenerate instanton noise $\eta_z$ in~\eqref{eq:min-prob-eta}, such that the second-order sufficient optimality condition~\eqref{eq:sufficent-optimal} holds, as well as sufficient differentiability and boundedness of the drift $b$, volatility $\sigma$ and observable $f$, we have the asymptotic expansion
\begin{align}
  P^\eps(z) = \PP\left[f\left(X_T^\eps\right) \geq z \right] \overset{\eps\downarrow0}{\sim}\eps^{1/2} (2 \pi)^{-1/2}
  \, C(z) \,\exp\left\{-\frac1\eps I(z)\right\}
  \label{eq:tail-asymp}
\end{align}
for the probability of a tail event, with observable rate function $I$ at $z$ defined as in~\eqref{eq:rate-function-def}. The leading-order prefactor $C(z) \in (0, \infty )$ is given by
\begin{align}
\begin{aligned}
C(z) = &\left[2 I(z) {\det}_2 \left(\Id - \text{pr}_{\eta_z^\perp} A_{\lambda_z} \text{pr}_{\eta_z^\perp} \right) \right]^{-1/2} \times \\
&\quad \times \exp \left\{\tfrac{1}{2}\trace \left[\text{pr}_{\eta_z^\perp}  \left(A_{\lambda_z} - \tilde{A}_{\lambda_z}\right) \text{pr}_{\eta_z^\perp}  \right] - \tfrac{1}{2} \left \langle e_z, \tilde{A}_{\lambda_z} e_z \right \rangle_{L^2} \right\}\,.
\end{aligned}
\label{eq:tail-prefac-projected}
\end{align}
\end{theorem}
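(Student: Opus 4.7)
The plan is to prove the result via the moment generating function (MGF) $J^\eps(\lambda) = \EE[\exp(\lambda f(X_T^\eps)/\eps)]$ of the observable, established for $\lambda$ in a neighborhood of the critical value $\lambda_z = I'(z)$; the tail asymptotics \eqref{eq:tail-asymp} then follow by a saddle-point / inverse Laplace argument against the Lagrange multiplier $\lambda$, which produces the $\eps^{1/2}(2\pi)^{-1/2}[2I(z)]^{-1/2}$ factor and, by restricting fluctuations to directions transverse to the constraint, replaces the full operators by their projections onto $\eta_z^\perp$. The MGF is the natural object since it is the infinite-dimensional analogue of a Laplace integral and converts the constrained optimization \eqref{eq:min-prob-eta} into an unconstrained one.

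To analyze $J^\eps(\lambda)$, I would first write it as a Wiener-measure expectation of $\exp(\lambda F[\sqrt{\eps}\xi]/\eps)$ using \eqref{eq:noise-to-obs-mult}, then apply Girsanov's theorem to shift $\xi \mapsto \xi + \eta_\lambda/\sqrt{\eps}$ around the MGF instanton $\eta_\lambda$. This extracts the exponential factor $\exp(I^*(\lambda)/\eps)$, where $I^*$ is the Legendre transform of $I$, and reduces the task to estimating, under the original Wiener measure, an expectation of the form $\EE[\exp(\tfrac{1}{2}\langle \xi, A_\lambda \xi\rangle + R_\eps[\xi])]$, with quadratic form built from \eqref{eq:second-var-mult}-\eqref{eq:second-order-adjoint-mult-noise} and a remainder $R_\eps$ collecting cubic-and-higher terms in the Taylor expansion of $F$; Itô's formula must be invoked here so that the stochastic Taylor expansion is correctly interpreted.

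The main obstacle is the Gaussian integration step. Unlike the additive-noise case, the operator $A_\lambda$ produced by \eqref{eq:second-var-mult}-\eqref{eq:second-order-adjoint-mult-noise} is only Hilbert-Schmidt on $L^2([0,T],\RR^n)$, not trace-class -- the $(\nabla\sigma(\phi)\delta\eta)^\top \theta$ contribution generates a pointwise-multiplication-type operator whose diagonal is singular, as confirmed by the $i^{-1}$ eigenvalue decay observed in the preceding examples. Consequently the Fredholm determinant $\det(\Id - A_\lambda)$ is ill-defined, and one must work with the Carleman-Fredholm determinant ${\det}_2(\Id - A_\lambda)$, which exists for any Hilbert-Schmidt operator and formally satisfies ${\det}_2(\Id - A) = \det(\Id - A)\exp(-\trace A)$. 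The resolution is to split $A_\lambda = \tilde{A}_\lambda + (A_\lambda - \tilde{A}_\lambda)$, where $\tilde{A}_\lambda$ isolates the singular part of the diagonal arising from white-noise multiplication; $\trace\tilde{A}_\lambda$ is then absorbed into the Itô-Stratonovich renormalization of the Gaussian expectation (this is exactly the missing exponential term in the geometric-Brownian-motion example of section~\ref{sec:geombb}), while $\trace(A_\lambda - \tilde{A}_\lambda)$ is a well-defined trace that appears explicitly in \eqref{eq:tail-prefac-projected}.

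The remaining work is to combine these pieces: control $R_\eps$ stochastically so that $\EE[\exp R_\eps] = 1 + o(1)$, identify $\tilde A_\lambda$ by the explicit Itô correction rule, project onto $\eta_z^\perp$ via the Lagrangian geometry of the saddle-point inversion, and pass from the MGF asymptotics to the tail asymptotics at $\lambda = \lambda_z$. The hardest step throughout is the rigorous justification of the infinite-dimensional Gaussian computation with a merely Hilbert-Schmidt Hessian and the exchange of limits with the Carleman-Fredholm determinant; this is precisely what is carried out in \cite{ben-arous:1988} via finite-dimensional Euler-Maruyama discretization, uniform control of the discrete Hessians in Hilbert-Schmidt norm, and passage to the continuous-time limit, and we defer the detailed recapitulation to appendix~\ref{app:theory}.
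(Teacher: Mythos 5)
Your proposal follows essentially the same route as the paper: establish sharp asymptotics of the MGF $J^\eps(\lambda)$ around $\lambda_z=I'(z)$ by Girsanov-shifting around the MGF instanton and performing a second-order stochastic Taylor expansion, identify the CF determinant as the correct replacement for the Fredholm determinant because $A_\lambda$ is only HS, split off the singular part $\tilde A_\lambda$ to obtain a finite trace, and then transform to tail probabilities via inverse Laplace transform / saddle point.

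A few details are worth flagging. First, there is a sign error in your factorization formula: for TC $A$ one has ${\det}_2(\Id - A) = \det(\Id - A)\exp\{+\trace A\}$, not $\exp\{-\trace A\}$; this follows immediately from ${\det}_2(\Id - A) = \det((\Id - A)e^A)$ and $\det e^A = e^{\trace A}$. Second, the paper's appendix does not carry out the rigorous passage to the limit via finite-dimensional Euler--Maruyama discretization and HS-norm control of discrete Hessians. Instead it works directly in continuous time: it identifies $\tfrac12\hat Q_\lambda$ as an element of the inhomogeneous Wiener--It{\^o} chaos $\mathscr{H}_0\oplus\mathscr{H}_2$, shows $A_\lambda$ is HS via the HS kernel theorem, shows $A_\lambda - \tilde A_\lambda$ is TC via Goodman's theorem for abstract Wiener space (Corollary~\ref{cor:goodman}), and then computes $\EE[\exp\{\tfrac12\hat Q_\lambda\}]$ in one step from Proposition~\ref{prop:cf-det-expec}. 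The chaos-decomposition viewpoint is what makes the CF determinant emerge automatically rather than being inserted as a counter-term by hand, and Goodman's theorem is the precise mechanism that replaces your informal remark about isolating the "singular diagonal". Third, the final passage from the MGF prefactor $R_{\lambda_z}$ to the projected form~\eqref{eq:tail-prefac-projected} with the $2I(z)$ factor uses the identity $(\Id - A_{\lambda_z})\partial_z\eta_z = (I''(z)/\lambda_z)\eta_z$ together with a Cramer's rule generalization for HS operators (equation~\eqref{eq:hilbert-schmidt-adjugate}); your phrase "restricting fluctuations to directions transverse to the constraint" gestures at this but does not supply the algebraic identity that actually produces the projection operators and the $\langle e_z, \tilde A_{\lambda_z}e_z\rangle$ term.
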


We derive this result from~\cite{ben-arous:1988} (which treats expectations such as MGFs instead of tail probabilities) and explain it in detail in appendix~\ref{app:theory}. Here, ${\det}_2$ denotes a so-called Carleman--Fredholm (CF) determinant, instead of a Fredholm determinant $\det$ in the additive noise case~\eqref{eq:tail-prob-prefac-sde-additive}. The CF determinant is defined as ${\det}_2\left(\text{Id} - B\right) := \det \left( \left(\text{Id} - B \right) \exp \left\{B \right\} \right)$ for Hilbert--Schmidt (HS) operators~$B$ (cf.~\cite{simon:1977,simon:2005}), meaning that we only demand the eigenvalues of $B$ to be square-summable here, instead of absolutely summable for TC operators. The additional $\exp\left\{B \right\}$ in the definition of ${\det}_2$ regularizes the determinant, such that it is well-defined for HS operators. Roughly, this is because, when neglecting higher-order terms in the exponential, we have ${\det}_2\left(\text{Id} - B\right) \approx \det \left(\left(\text{Id} - B\right)\left(\text{Id} + B\right) \right) = \det \left(\text{Id} - B^2\right)$, and~$B^2$ \textit{is} TC. If~$B$ itself does happen to be TC, the CF determinant factorizes into ${\det}_2\left(\text{Id} - B\right) = \det \left(\text{Id} - B \right) \exp \left\{\trace \left[ B \right] \right\}$. We provide a more comprehensive introduction of these regularized determinants and their properties in appendix~\ref{app:theory}. Note that the regularized determinant ${\det}_2$ is exactly the necessary object to make sense of the predator-prey example we encountered in subsection~\ref{sec:predprey}, where numerical evidence in figure~\ref{fig:pred-prey-demo} suggested that the second variation operator $A_\lambda$ is not TC but HS.\\

In~\eqref{eq:tail-prefac-projected}, we defined the unit vector $e_z = \eta_z / \norm{\eta_z}_{L^2} \in L^2([0,T], \RR^n)$ in the instanton noise direction. The operator $\tilde{A}_\lambda$, as given below in~\eqref{eq:atilde-return} and~\eqref{eq:a-tilde-adjoint-mult-noise}, collects the terms appearing in $A_\lambda$ that correspond to iterated It{\^o} integrals in the ``stochastic'' Taylor expansion to obtain~\eqref{eq:tail-prefac-projected}, which ultimately lead to a less regular second variation operator for multiplicative noise. Substracting this part, the remaining operator $A_\lambda - \tilde{A}_\lambda$ is trace class. In contrast to this, the actual second variation operator $A_\lambda = (A_\lambda - \tilde{A}_\lambda) + \tilde{A}_\lambda$ is only guaranteed to be HS for multiplicative noise because of the presence of $\tilde{A}_\lambda$. Comparing with~\eqref{eq:atilde-return} and~\eqref{eq:a-tilde-adjoint-mult-noise} below, the result~\eqref{eq:tail-prefac-projected} correctly reduces to the previous prefactor~\eqref{eq:tail-prob-prefac-sde-additive} for additive noise, as $\tilde{A}_\lambda = 0$ in that case.\\

\begin{figure}
\centering
\begin{tikzpicture}[node distance=1.5cm,scale=0.8, every node/.style={scale=0.8}]
\node (deta) [startstop] {$\delta \eta$};
\node (gamma) [startstop, right=of deta,] {$\gamma^{(1)}$};
\node (zeta) [startstop, right=of gamma] {$\zeta$};
\node (Adeta) [startstop, right=of zeta] {$A_\lambda \delta \eta$};
\draw [arrow] (deta) -- (gamma) node[midway,above] () {$\int \dd t$};
\draw [arrow] (gamma) -- (zeta)  node[midway,above] () {$\int \dd t$};
\draw [arrow] (zeta) -- (Adeta);
\draw [->,out=90,in=90,looseness=0.3,red,dashed] (deta.north) to node[above]{$\int \dd t$}  (zeta.north);
\draw [->,out=270,in=270,looseness=0.3,red,dashed] (gamma.south) to  (Adeta.south);
\end{tikzpicture}
\caption{Flow chart that visualizes how the second variation $A_\lambda$ of the noise-to-event map $F$ acts on $\delta \eta \in L^2\left([0,T], \RR^n \right)$ for additive noise according to~\eqref{eq:second-var-additive} and~\eqref{eq:second-order-adj-eq-additive} and multiplicative noise as in~\eqref{eq:second-var-mult} and~\eqref{eq:second-order-adjoint-mult-noise}. Here, each integral symbol $\int \dd t$ above an arrow corresponds to solving an ODE with input given by the previous box, and arrows without integral symbols correspond to directly plugging the entry of the previous box into an equation to calculate the next box. The dashed red arrows are only present for multiplicative noise.}
\label{fig:flowcharts}
\end{figure}
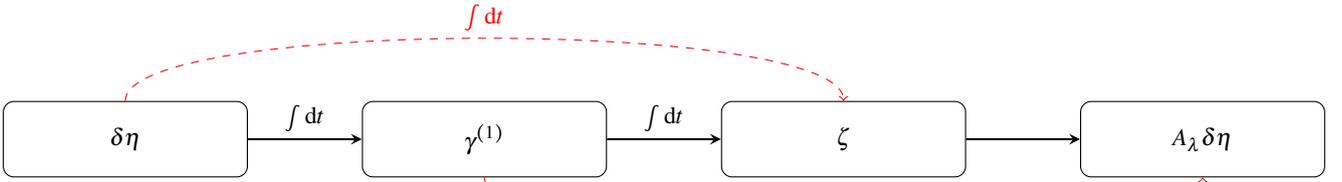

What has changed \textit{structurally} compared to the additive noise case here? Consider the second variation operator for multiplicative noise as given in~\eqref{eq:second-var-mult} and~\eqref{eq:second-order-adjoint-mult-noise}, compared to the corresponding operator for additive noise in~\eqref{eq:second-var-additive} and~\eqref{eq:second-order-adj-eq-additive}. For additive noise, the flow chart of operations to map a noise perturbation $\delta \eta \in L^2([0,T], \RR^n)$ to the second variation output $A_\lambda \delta \eta$ according to~\eqref{eq:second-var-additive} and~\eqref{eq:second-order-adj-eq-additive} is shown by the black arrows in figure~\ref{fig:flowcharts}. Notably, here, there are two subsequent integrations of the input. Intuitively, each integration or linear ODE solution should add regularity to the output, and we obtain a TC operator $A_\lambda$ for additive noise. The rightmost black arrow is just plugging $\zeta$ into $A_\lambda \delta \eta = \sigma^\top \zeta$ without further integrations. In general, for multiplicative noise, the flow chart for $\delta \eta \mapsto A_\lambda \delta \eta$ given by~\eqref{eq:second-var-mult} and~\eqref{eq:second-order-adjoint-mult-noise} now looks like the full sketch in figure~\ref{fig:flowcharts}. This means that at two points in the flow chart, indicated by the red dashed arrows, there is only one integration operation between input $\delta \eta$ and output $A_\lambda \delta \eta$. From this perspective, it is intuitive to expect that for multiplicative noise, the second variation operator will generally be less regular than for additive noise, and this needs to be taken into account when finding an alternative to~\eqref{eq:tail-prob-prefac-sde-additive}, since the Fredholm determinant in~\eqref{eq:tail-prob-prefac-sde-additive} will no longer be well-defined in general, cf.\ appendix~\ref{app:theory}. We can also immediately read off from the flow chart how $\tilde{A}_\lambda$, the regularizing subtrahend from $A_\lambda$, will have to look like: It needs to cut off the additional arrows, shown in red in figure~\ref{fig:flowcharts}, in the flow chart that lead to just one integration from input to output, such that we set
\begin{align}
\tilde{A}_\lambda \delta \eta = \sigma(\phi)^\top \zeta_{\text{sing}} + \left \langle \theta,\left(\nabla \sigma(\phi) \cdot \right) \gamma^{(1)} \right \rangle
 \,,
 \label{eq:atilde-return}
\end{align}
where
\begin{align}
\begin{cases}
\dot{\gamma}^{(1)} =  L[\eta, \phi] \gamma^{(1)} +  \sigma(\phi) \delta \eta\,, \quad &\gamma^{(1)}(0) = 0\,,\\
\dot{\zeta}_{\text{sing}} = -L[\eta, \phi]^\top \zeta_{\text{sing}}- 
\left(\nabla \sigma (\phi) \delta \eta \right)^\top \theta\,, \quad 
&\zeta_{\text{sing}}(T) = 0\,,
\end{cases}
\label{eq:a-tilde-adjoint-mult-noise}
\end{align}
where $\gamma^{(1)}$ the same as above in~\eqref{eq:second-order-adjoint-mult-noise}.
We see that the intuitive reasoning presented here in fact corresponds to the proofs that $A_{\lambda} - \tilde{A}_\lambda$ is TC and $A_\lambda$ HS in general in appendix~\ref{app:theory}. We note that~\eqref{eq:atilde-return} and~\eqref{eq:a-tilde-adjoint-mult-noise} indeed lead to $\tilde{A}_\lambda = 0$ for additive noise with $\nabla \sigma = 0$.\\

However, if $\tilde{A}_{\lambda_z}$ (and hence $A_{\lambda_z}$) does happen to be TC for a multiplicative noise It{\^o} SDE~\eqref{eq:SDE}, the result~\eqref{eq:tail-prefac-projected} will not just reduce to the additive noise answer~\eqref{eq:tail-prob-prefac-sde-additive}, but to
\begin{align}
C(z) = \left[2 I(z) {\det} \left(\Id - \text{pr}_{\eta_z^\perp} A_{\lambda_z} \text{pr}_{\eta_z^\perp} \right) \right]^{-1/2}  \exp \left\{-\tfrac{1}{2}\trace \left[\tilde{A}_{\lambda_z}\right] \right\}
\label{eq:tail-prefac-projected-mult-tc}
\end{align}
instead, with
\begin{align}
\trace \left[\tilde{A}_{\lambda_z} \right] = \int_0^T \trace\left[\sigma^\top(\phi_z(t))\nabla \sigma(\phi_z(t)) \theta_z(t) \right] \dd t = \int_0^T \sigma_{jk}(\phi_z(t)) \partial_j \sigma_{ik}(\phi_z(t)) \theta_{z,i}(t) \dd t
\label{eq:trace-atilde-integral}
\end{align}
in this case, as we verify in appendix~\ref{app:theory}. This additional term is directly related to the standard It{\^o}--Stratonovich correction term when converting between both formulations, which was the missing term in the geometric Brownian motion example in subsection~\ref{sec:geombb}. More generally, we can state the following result for the prefactor $C$ if we start from a Stratonovich SDE instead of an It{\^o} SDE in the first place:

\begin{remark}
\label{rem:ito-strato}
For a Stratonovich SDE
\begin{align}
\begin{cases}
  \dd \hat{X}^\eps_t = b\left(\hat{X}^\eps_t\right)\,\dd t + \sqrt{\eps} \sigma\left(\hat{X}^\eps_t\right) \circ \dd W_t\,,\\
  \hat{X}_0^\eps = x \in \RR^n\,.
\end{cases}
\label{eq:strato-sde}
\end{align} 
the corresponding It{\^o} SDE with the same law is
\begin{align}
\begin{cases}
  \dd X^\eps_t = \left[ b_i\left(X^\eps_t\right) + \frac{\eps}{2} \sigma_{jk}\left(X_t^\eps\right) \partial_j \sigma_{ik}\left(X_t^\eps\right) \right]_{1 \leq i \leq n}\,\dd t + \sqrt{\eps} \sigma\left(X^\eps_t\right) \, \dd W_t\,,\\
  X_0^\eps = x \in \RR^n\,.
\end{cases}
\label{eq:strat-to-ito-sde}
\end{align} 
Due to the $\eps$-dependent drift term, there is an additional term in the prefactor $C(z)$, which now becomes~\cite{ben-arous:1988}
\begin{align}
\begin{aligned}
&C(z) = \left[2 I(z) {\det}_2 \left(\Id - \text{pr}_{\eta_z^\perp} A_{\lambda_z} \text{pr}_{\eta_z^\perp} \right) \right]^{-1/2} \exp \bigg\{\frac{1}{2}\trace \left[\text{pr}_{\eta_z^\perp}  \left(A_{\lambda_z} - \tilde{A}_\lambda\right) \text{pr}_{\eta_z^\perp}  \right] \\
&\qquad \qquad - \tfrac{1}{2} \left \langle e_z, \tilde{A}_{\lambda_z} e_z \right \rangle + \tfrac{1}{2}\int_0^T \trace\left[\sigma^\top(\phi_z(t))\nabla \sigma(\phi_z(t)) \theta_z(t) \right] \dd t  \bigg\}\,.
\end{aligned}
\label{eq:strato-pref-complete}
\end{align}
Notably, if we have multiplicative noise in the Stratonovich sense and $A_{\lambda_z}$ is TC for a particular SDE~\eqref{eq:strato-sde}, then, by our discussion above, the prefactor $C(z)$ does actually reduce to~\eqref{eq:tail-prob-prefac-sde-additive}, i.e.\ to the direct analogue of finite-dimensional SORM. This is in line with the general rationale that stochastic calculus for Stratonovich SDEs should follow the rules of ordinary calculus.
\end{remark}

\begin{remark}
Before turning to its numerical evaluation and computational examples, we comment on the assumptions and applicability of theorem~\ref{thm:prefac}:
\begin{enumerate}
\item
Verifying whether the abstract assumptions of theorem~\ref{thm:prefac} hold in a given example can be challenging, and we refer to~\cite{deuschel-etal:2014} for a discussion of this issue for  coordinate projection observables $f \colon \RR^n \to \RR^m$, $f(x_1, \dots, x_n) = (x_1, \dots, x_m)$ with $m \in \{1, \dots, n\}$, and further references on heat kernel asymptotics.
Regarding existence of an instanton, H{\"o}rmander conditions on~$b$ and~$\sigma$ for hypoellipticity are commonly assumed as a sufficient condition. Uniqueness of the instanton can be hard to establish, and a practical remedy is to perform multiple optimization runs from different random initializations of $\eta$ to detect possibly non-unique solutions and select the ones with minimum action.
The non-degeneracy condition~\eqref{eq:sufficent-optimal} can be verified a posteriori by numerically computing the dominant eigenvalues of the projected second variation around the instanton as shown in the next section. Alternatively, geometric conditions based on nonfocality as introduced in~\cite{deuschel-etal:2014} could be formulated.
Lastly, for a comment on dropping the boundedness assumptions of theorem~\ref{thm:prefac}, see~\cite[Remark 2.11]{deuschel-etal:2014}.
\item
The presented theory applies rigorously to SDEs with state space $\mathbb{R}^n$. 
Nevertheless, we expect direct generalizations to hold for well-posed SPDEs with spatially sufficiently smooth, white-in-time Gaussian noise, such as the advection-diffusion problem of section~\ref{sec:advec-diffuse} below. 
There, we confirm numerically that theorem~\ref{thm:prefac}, applied to semi-discretized versions of this SPDE with spatial resolution $n$, leads to convergent tail probability estimates as $n$ increases, which further match the results of direct Monte Carlo simulations of the SPDE.
For theoretical works on small-noise expansions and precise large deviations for well-posed SPDEs, leading to analogous results, we refer to~\cite{rovira-tindel:2000,rovira-tindel:2001,rovira-tindel:2011,albeverio-di-persio-mastrogiacomo:2011}. 
For a recent extension to a singular SPDE, requiring a modification of the prefactor to account for renormalization, and a discussion of further references, see~\cite{friz-klose:2022,klose:2022}.
\end{enumerate}
\end{remark}

\section{Numerical implementation}
\label{sec:num-methods}

In this section, we address how to evaluate the asymptotically sharp tail probability estimate~\eqref{eq:tail-asymp} with prefactor~\eqref{eq:tail-prefac-projected} from theorem~\ref{thm:prefac} numerically in practice, and will subsequently illustrate our approach in different examples in section~\ref{sec:examples}. We do not discuss details of the computation of the instanton itself by solving~\eqref{eq:min-prob-eta} here, which has been presented elsewhere (see e.g.~\cite{grafke-vanden-eijnden:2019,schorlepp-grafke-may-etal:2022,zakine-vanden-eijnden:2023} and references therein, as well as section~\ref{sec:predprey}), and hence assume that for any given $z$, we do have access to the instanton noise $\eta_z$, state space trajectory $\phi_z$, momentum $\theta_z$, Lagrange multiplier $\lambda_z$, and rate function $I(z)$.

\subsection{Iterative and scalable determinant and trace computations}

It remains to evaluate the three real numbers
\begin{enumerate}
\item ${\det}_2 \left(\Id - B_z \right)$ with $B_z := \text{pr}_{\eta_z^\perp} A_{\lambda_z} \text{pr}_{\eta_z^\perp}$
\item $\trace \left[D_z  \right]$ with $D_z := \text{pr}_{\eta_z^\perp}  \left(A_{\lambda_z} - \tilde{A}_{\lambda_z}\right) \text{pr}_{\eta_z^\perp}$
\item $\left \langle e_z, \tilde{A}_{\lambda_z} e_z \right \rangle_{L^2}$ with $e_z = \eta_z / \norm{\eta_z}$ (computing this just requires one operator evaluation according to~\eqref{eq:atilde-return} and~\eqref{eq:a-tilde-adjoint-mult-noise})
\end{enumerate}
to obtain a sharp tail probability estimate with $C(z)$ given by~\eqref{eq:tail-prefac-projected}. We stress that quantities 1.\ and 2.\ involve operators $B_z$ and $D_z$ defined on infinite-dimensional path space (effectively $L^2([0,T],\RR^n)$ here), and in practice one always has to discretize time to approximately calculate the (regularized) determinant and trace. This corresponds to a projection of these operators onto a finite-dimensional subspace, either through discretizing the linear differential equations that need to be solved to evaluate the operators by any time stepping scheme with $n_t$ points in time, or by explicitly applying them to any finite set of $n_t$ orthonormal basis functions. The resulting finite-dimensional matrices $B_z^{(n_t)}, D_z^{(n_t)} \in \RR^{(n \cdot n_t) \times (n \cdot n_t)}$, if found explicitly in this way, may still be large, in particular if the state space dimension $n$ is already large in itself, e.g.\ if the SDE~\eqref{eq:SDE} stems from the semidiscretization of a stochastic PDE (say with $n_t = 1000$ and $n = 1024^2$). If $n$ is small, say $n = 1$ or $2$, then using explicit quadratures and computing the full determinants and traces of $B_z^{(n_t)}, D_z^{(n_t)}$ can yield accurate results, see e.g.~\cite{bornemann:2010,friz-gassiat-pigato:2022}. Here, however, our goal is to compute the quantities 1.\ and 2.\ in an iterative and matrix-free way, that avoids having to construct and store $B_z^{(n_t)}, D_z^{(n_t)}$ explicitly, and scales to very high dimensions.\\

Hence, we only define $B_z^{(n_t)}, D_z^{(n_t)}$ implicitly through their action on vectors, without ever storing these matrices. Note that the projection operators act as
\begin{align}
(\ppr_{\eta^\perp_z} \delta \eta)(t) = \delta \eta(t) -
\frac{\left \langle \eta_z, \delta \eta \right \rangle_{L^2([0,T],
\RR^n)}}{\norm{\eta_z}^2_{L^2([0,T],\RR^n)}} \eta_z(t)\,,
\label{eq:proj}
\end{align}
the action of $A_{\lambda_z}$ on $\delta \eta$ is defined through~\eqref{eq:second-var-mult} and~\eqref{eq:second-order-adjoint-mult-noise}, and for convenience, we also note here that for the regularized operator
\begin{align}
\left(A_{\lambda_z} - \tilde{A}_{\lambda_z}\right) \delta \eta =  \sigma(\phi_z)^\top \zeta_{\text{reg}}\,,
\label{eq:a-atilde-action}
\end{align}
where
\begin{align}
\begin{aligned}
&\begin{cases}
\dot{\gamma}^{(1)} =  L[\eta_z, \phi_z] \gamma^{(1)} +  \sigma(\phi_z) \delta \eta\,,\\
\dot{\zeta}_{\text{reg}} \quad= -L[\eta_z, \phi_z]^\top \zeta_{\text{reg}}- \left \langle \nabla^2 b(\phi_z),
\theta_z \right \rangle \gamma^{(1)} - 
\left(\nabla^2 \sigma(\phi_z) \eta_z\right) \gamma^{(1)} \theta_z\,,
\end{cases}\\
\text{ with }&\begin{cases}
\gamma^{(1)}(0) = 0\,,\\
\zeta_{\text{reg}}(T) = \lambda_z \nabla^2 f(\phi_z) \gamma^{(1)}(T)\,,
\end{cases}
\end{aligned}
\label{eq:a-atilde-diff-eq}
\end{align}
with time-dependent $n \times n$ matrix $L[\eta, \phi] = \nabla b(\phi) + \nabla \sigma(\phi) \eta$. Assuming that $\eta_z$, $\phi_z$, and $\theta_z$ are all known, in addition to calculating projections~\eqref{eq:proj}, each evaluation of $B_z^{(n_t)}$ or $D_z^{(n_t)}$ acting on a test vector $\delta \eta \in \RR^{n_t \cdot n}$ hence amounts to solving two differential equations in $\RR^n$, each of comparable cost to integrating the original SDE~\eqref{eq:SDE}: one (tangent equation) forward in time and one ((regularized) second order adjoint equation) backwards in time. For the discretization of these forward and backward differential equations, one can choose arbitrary stable and consistent time stepping schemes. As we will discuss below, we actually evaluate the actions of both $A_{\lambda_z}^{(n_t)}$ and $A_{\lambda_z}^{(n_t)} - \tilde{A}_{\lambda_z}^{(n_t)}$ through automatic differentiation for this paper, so~\eqref{eq:second-var-mult}, \eqref{eq:second-order-adjoint-mult-noise}, \eqref{eq:a-atilde-action} and~\eqref{eq:a-atilde-diff-eq} do not need to be implemented explicitly, and their time-stepping simply follows from how we choose to discretize the forward map~\eqref{eq:noise-to-obs-mult}.\\

Equipped with a way of evaluating how the discretized operators act on vectors, we proceed to approximate the CF determinant in~\eqref{eq:tail-prefac-projected} via
\begin{align*}
{\det}_2 \left(\text{Id} - B_z \right) = \prod_{i =1}^\infty \left(1 - \mu_z^{(i)} \right) \exp \left\{\mu_z^{(i)}  \right\} \approx 
\underbrace{\prod_{i =1}^M \left(1 - \mu_z^{(n_t),(i)} \right) \exp \left\{\mu_z^{(n_t),(i)}  \right\}}_{=:{\mathfrak B}_z(n_t, M)}\,,
\end{align*}
with the $M$ leading eigenvalues (in terms of absolute value) $\mu_z^{(n_t),(i)} \in (- \infty, 1)$ of $B_z^{(n_t)}$. To lighten the notation, we will omit the superscript $(n_t)$ for the eigenvalues in the following. Note, however, that we only have access to the eigenvalues of $B_z^{(n_t)}$ in practice -- which can be expected to converge to those of $B_z$ as $n_t \to \infty$ -- and hence there are two sources of error here through finite $M$ and finite $n_t$ (and a third one through finite $n$ for semi-discretized SPDEs), cf.\ remark~\ref{rem:error} below. To actually compute the leading eigenvalues $\mu_z^{(i)}$ in a scalable way, we use an iterative eigenvalue solver as in~\cite{schorlepp-tong-grafke-etal:2023}, relying only on matrix-vector products. Specifically, we use the implicitly restarted Arnoldi method of ARPACK~\cite{lehoucq-sorensen-yang:1998}, but randomized algorithms could also be employed~\cite{halko-martinsson-tropp:2011}.
Importantly, the number $M$ of necessary eigenvalues for a good approximation of the CF determinant can be considered as fixed, i.e.\ independent of $n_t$, as soon as all relevant modes are resolved by the time discretization. Iterative eigenvalue solvers then typically require a number of matrix-vector evaluations that is linear in $M$, but will be independent of the resolution $n_t$. All of this is discussed in~\cite{schorlepp-tong-grafke-etal:2023}; except for the fact that we evaluate CF determinants instead of Fredholm determinants, the same statements apply here. We find empirically in the examples considered in this paper that taking $M$ of the order of a few hundred eigenvalues is typically sufficient to use in practice.

\begin{remark}
\label{rem:error}
It would be valuable to establish rigorous error estimates of the approximate CF determinant in terms of $n_t$ and $M$, or explicitly incorporate the asymptotic behavior of eigenvalues $\mu_z^{(i)} \to 0$ as $i \to \infty$ into the determinant estimation: We can decompose the ratio between the approximate and true CF determinant
${\mathfrak B}_z(n_t, M) / {\det}_2 \left(\text{Id} - B_z \right) =   \epsilon_z(n_t, M) r_z(M)$
into a remainder term (with cutoff $\bar{\mu} = \bar{\mu}(M)$)
\begin{align*}
r_z(M) = \exp \left\{-\sum_{i=M+1}^\infty \log\left(1 - \mu_z^{(i)} \right) + \mu_z^{(i)} \right\} = \exp \left\{-\int_{-\bar{\mu}}^{\bar{\mu}} \left(\log(1-\mu) + \mu \right) \dd \rho_z(\mu) \right\}\,,
\end{align*}
accounting for the truncation after $M$ eigenvalues, and an error term
\begin{align*}
\epsilon_z(n_t, M) = \exp \left\{\sum_{i=1}^M \log\left(1 - \mu_z^{(i),(n_t)} \right) - \log\left(1 - \mu_z^{(i)} \right) + \mu_z^{(i),(n_t)} - \mu_z^{(i)}  \right\}
\end{align*}
which corresponds to the time discretization error for the first $M$ eigenvalues. 
Here, $\rho_z$ denotes the spectral measure of~$B_z$, and $\bar{\mu} > 0$ is a threshold such that $\left \lvert \mu_z^{(1)} \right \rvert, \dots, \left \lvert \mu_z^{(M)} \right \rvert > \bar{\mu}$, and all further eigenvalues satisfy $\left \lvert \mu_z^{(i)} \right \rvert < \bar{\mu}$ for $i > M$.
Knowledge of the asymptotic behavior of $\rho_z(\mu)$ as $\mu \to 0$ could yield useful truncation criteria to determine~$M$, and allow for correcting for eigenvalues $\mu_z^{(i)}$, $i > M$, by integrating the asymptotic spectral density~\cite{garcia-hofmann:2024}.
Regarding the discretization error~$\epsilon_z(n_t, M)$, \cite{atkinson:1975} shows that for compact integral operators with eigenvalues of multiplicity $1$, the order of convergence of eigenvalues of discretized approximations of the operator to the true eigenvalues is inherited from the quadrature rule used.
Suppose, e.g., $\mu_z^{(i),(n_t)} = \mu_z^{(i)} (1 + c_z^{(i)} \Delta t + O(\Delta t^2))$ for a first-order discretization, then $\epsilon_z(n_t, M) = \exp \left\{-\left(\sum_{i=1}^M c_z^{(i)} \left(\mu_z^{(i)} \right)^2 / \left(1 - \mu_z^{(i)} \right) \right) \Delta t + O(\Delta t^2) \right\}$. Another source of discretization errors is the instanton itself, found from discrete versions of the control problem~\eqref{eq:min-prob-eta}, which enters into the coefficient functions of $B_z$. Already on the level of the rate function, this leads to an error of, e.g., $I^{(n_t)}(z) = I(z) + O(\Delta t)$ (cf.~\cite{dontchev-hager:2001}), which may dominate the overall numerical error for small~$\varepsilon$ when computing $\exp\{-I^{(n_t)}(z) / \varepsilon \}$ for~\eqref{eq:tail-prefac-projected}. We leave a more detailed error analysis as future work.
\end{remark}

Lastly, for the estimation of the operator trace $\trace \left[D_z \right]$ in a matrix-free way, a popular way of approximating the trace is through Monte Carlo methods using the Hutchinson estimator~\cite{hutchinson:1989}
$
\trace \left[D_z^{(n_t)} \right] \approx M^{-1} \sum_{m = 1}^M \left \langle \xi_m, D_z^{(n_t)} \xi_m \right \rangle \Delta t
$
with $M$ discrete white noise realizations $\xi_m \sim {\cal N}\left(0, (\Delta t)^{-1} \text{Id}_{n_t \cdot n} \right)$. It turns out that for the predator-prey example from section~\ref{sec:predprey} that we will further elaborate on in section~\ref{sec:pred-prey-continuation}, because of the approximate pairing symmetry of positive and negative eigenvalues of $D_z$, the Hutchinson estimator has a high variance, and we will hence just approximate $\trace \left[D_z^{(n_t)} \right]$ in terms of the sum of $M$ leading eigenvalues of $D_z^{(n_t)}$ for this paper, which are computed analogously to the previous paragraph. Recently introduced improved randomized trace estimators, such as Hutch++~\cite{meyer-musco-musco-etal:2021,persson-cortinovis-kressner:2022} or XTRACE~\cite{epperly-tropp-webber:2024} would also be interesting alternatives to the simple leading eigenspace projection method we use here.
Remark~\ref{rem:error} applies, mutatis mutandis, to the estimation of $\trace \left[D_z \right]$ and associated errors as well.

\subsection{Implementation via automatic differentiation}
\label{sec:implement}

While it does not necessarily yield optimal performance without further fine-tuning, we will illustrate our simple implementation, available under~\cite{Schorlepp-github}, of the strategy outlined in the previous subsection in JAX~\cite{bradbury-frostig-hawkins-etal:2018} by leveraging automatic differentiation. Despite the considerable effort required to \textit{prove} and provide intuition for theorem~\ref{thm:prefac}, \textit{evaluating} the asymptotic tail probability estimate~\eqref{eq:tail-asymp} with prefactor~\eqref{eq:tail-prefac-projected} numerically can actually be rather straightforward in just a few lines of Python code.\\

All we require in principle is an automatically differentiable implementation of the discretized forward map $F$ from noise to observable~\eqref{eq:noise-to-obs-mult}, that takes as its input the discretized noise vector $\eta$ in $\RR^{n \cdot n_t}$, solves $\dot{\phi} = b(\phi) + \sigma(\phi) \eta$ starting from given $\phi(0) = x$ with any choice of time stepper (e.g.\ forward Euler steps as the simplest option), and returns the real number~$f(\phi(T))$. Collecting this procedure in a function \texttt{integrate\_forward\_obs\_jax}, to find the instanton from~\eqref{eq:min-prob-eta} for a given $z$, we implement~\eqref{eq:augmented-functional} as a function \texttt{target\_func}, and use \texttt{target\_func\_grad = jax.jacrev(target\_func)} to get the gradient automatically. By supplying these two functions to \texttt{scipy.optimize.minimize}, we easily have access to the instanton $\eta_z$ and Lagrange multiplier $\lambda_z$ as we increase the penalty parameter $\mu$. Note that the reverse-mode automatic differentiation solves exactly a discrete version of~\eqref{eq:first-order-adjoint-mult-noise} to find the gradient, but we do not need to implement this manually.\\

Given $\eta_z$ (\texttt{eta}) and $\lambda_z$ (\texttt{lbda}), defining the action of the matrix $A_{\lambda_z}^{(n_t)}$ on vectors $\delta \eta$ (\texttt{deta}) is achieved in one line of code:
\begin{align*}
&\texttt{Adeta = lambda deta: lbda/dt *}\\
&\qquad \qquad \quad \texttt{ jax.jvp(jax.grad(integrate{\_}forward{\_}obs{\_}jax),(eta,),(deta,))[1]}\,,
\end{align*}
where the only noteworthy point is the factor of $1/\Delta t$ for a time stepper with equidistant time steps $\Delta t = T / n_t$, to account for the discretization $\delta / \delta \eta \approx (\Delta t)^{-1} \nabla_\eta$ of the functional derivative. The function \texttt{Adeta}, composed with projections~\eqref{eq:proj}, is then wrapped as a \texttt{scipy.sparse.linalg.{\allowbreak}LinearOperator} and plugged into the iterative eigenvalue solver \texttt{scipy.sparse.linalg.eigs} to approximate ${\det}_2 \left(\Id - B_z \right)$.\\

For the trace computation, to implement $\left( A_{\lambda_z} - \tilde{A}_{\lambda_z} \right) \delta \eta$ automatically, we use the $A_{\lambda_z} \delta \eta$ implementation that is already available, and subtract $\tilde{A}_{\lambda_z} \delta \eta$ according to~\eqref{eq:atilde-return}. To obtain $\tilde{A}_{\lambda_z} \delta \eta$ -- which is also needed to calculate $\left \langle e_z, \tilde{A}_{\lambda_z} e_z \right \rangle_{L^2}$ in~\eqref{eq:tail-prefac-projected} -- from automatic differentiation of our implementation of the ODE solution map $\eta \mapsto \phi[\eta]$, we first differentiate $\eta \mapsto \left \langle \theta_z, \sigma(\phi[\eta]) \delta \eta \right \rangle_{L^2}$ with \texttt{jax.jvp} at $\eta$ in the tangent direction $\delta \eta$ to get $\left \langle \delta \eta,  \left \langle \theta_z,\left(\nabla \sigma(\phi) \cdot \right) \gamma^{(1)} \right \rangle \right \rangle_{L^2}$, and then do another \texttt{jax.grad} (divided by $\Delta t$) with respect to $\delta \eta$ on this expression to get $\tilde{A}_{\lambda_z} \delta \eta$. While this strategy does rely on having easy access to the optimal conjugate momentum $\theta_z$, e.g.\ from solving $\eta_z = \sigma(\phi_z)^\top \theta_z$ for invertible $\sigma$, as long as this is the case, it means that the \textit{only} system-specific function that needs to be implemented for a given SDE is the forward map~$F$, and everything else is generic.

\section{Examples}
\label{sec:examples}

We present numerical results of applying theorem~\ref{thm:prefac} in two examples in this section: First, in subsection~\ref{sec:pred-prey-continuation}, we show further results for the predator-prey model that was already considered in section~\ref{sec:predprey}. Afterwards, we move on to a high-dimensional problem in subsection~\ref{sec:advec-diffuse}: Estimating the probability of high concentrations of a passive scalar in a two-dimensional advection-diffusion model, where the advecting velocity field has a random component.

\subsection{More on the predator-prey model}
\label{sec:pred-prey-continuation}

Having introduced and derived theorem~\ref{thm:prefac} now, we return to the introductory predator-prey example from section~\ref{sec:predprey} here. With the same parameters as given in section~\ref{sec:predprey}, we show the results of estimating tail probabilities of high prey concentrations for different thresholds $z$ in figure~\ref{fig:pred-prey-prob}, with $9 \cdot 10^8$ samples for each noise strength $\eps \in \{0.001, 0.004, 0.01 \}$ considered. For a range of $43$ equidistant values of $z \in [0.15, 1]$, we calculate the instanton and evaluate the corresponding prefactor estimate~\eqref{eq:tail-prefac-projected} as described in the previous section. The rate function $I$ and leading prefactor $C$ for the predator-prey system are shown in the left column of figure~\ref{fig:pred-prey-prob}, and the resulting tail probability estimates in the right subfigure. We see that the asymptotic estimate from~\ref{thm:prefac} describes the tail probabilities, as obtained from Monte Carlo simulations, very well, and allows to estimate them far beyond the directly observable regime for each $\eps$. In particular, with regard to our discussion of convexity of the rate function in appendix~\ref{sec:convex-rf}, we see that $I''(z)<0$ for a range of $z$'s in the present example, but that this does not pose a problem for the asymptotic estimate according to~\eqref{eq:tail-prefac-projected}. More detailed numerical results for one particular observable value of $z = 1$ are shown in figure~\ref{fig:pred-prey-spec} and table~\ref{tab:pred-prey-z-1.0}. In particular, figure~\ref{fig:pred-prey-spec} nicely shows the effect of subtracting $\tilde{A}_{\lambda_z}$ from the full Hessian $A_{\lambda_z}$, which, in agreement with the theory, results in a TC operator instead of HS.

\begin{figure}
\centering
\includegraphics[width = .8\textwidth]{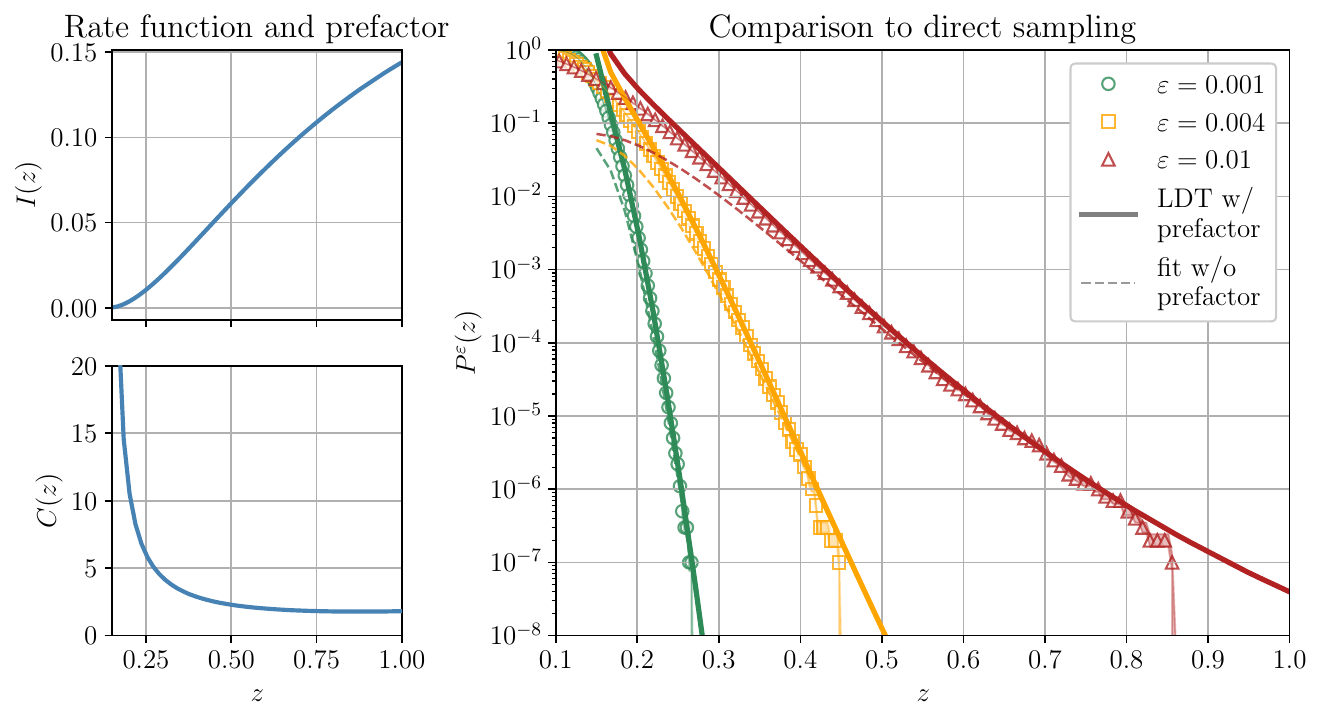}
\caption{Rate function $I$ and prefactor $C$ (left column) as well as probability $P^\eps(z)$ (right) of observing a large
    concentration of prey $f(x(T), y()T)) = x(T)$ in the Lotka--Volterra
    model~(\ref{eq:predator-prey-SDE}), exceeding $z$ at time $t=T$
    when starting at the fixed point $(x_0, y_0)$ of the system. The right subfigure shows a comparison between
    the sharp large deviation estimate~\eqref{eq:tail-asymp} from theorem~\ref{thm:prefac} (solid lines) with
    prefactor according to~\eqref{eq:tail-prefac-projected} against Monte Carlo estimates (data points, with $99\%$ Wilson score intervals as shaded area) with
    $9 \cdot 10^8$ samples for each noise strength $\eps \in \{0.001, 0.004, 0.01\}$. Temporal resolution $n_t = 1000$ for sampling and all asymptotic estimates, and all other parameters are as described in section~\ref{sec:predprey}. The dashed lines are $\text{const} \cdot \exp \left\{-I(z) / \eps \right\}$ with $\text{const}$ chosen in such a way that the curves match at large $z$. This is the best we can do without knowledge of the prefactor, and is not an a priori estimate without Monte Carlo data. The result is not too different from the full prediction including the prefactor $C(z)$ in this particular example, since $C(z)$ does not depend strongly on $z$ for large $z$ in this example.}
\label{fig:pred-prey-prob}
\end{figure}

\begin{figure}
\centering
\includegraphics[width = \textwidth]{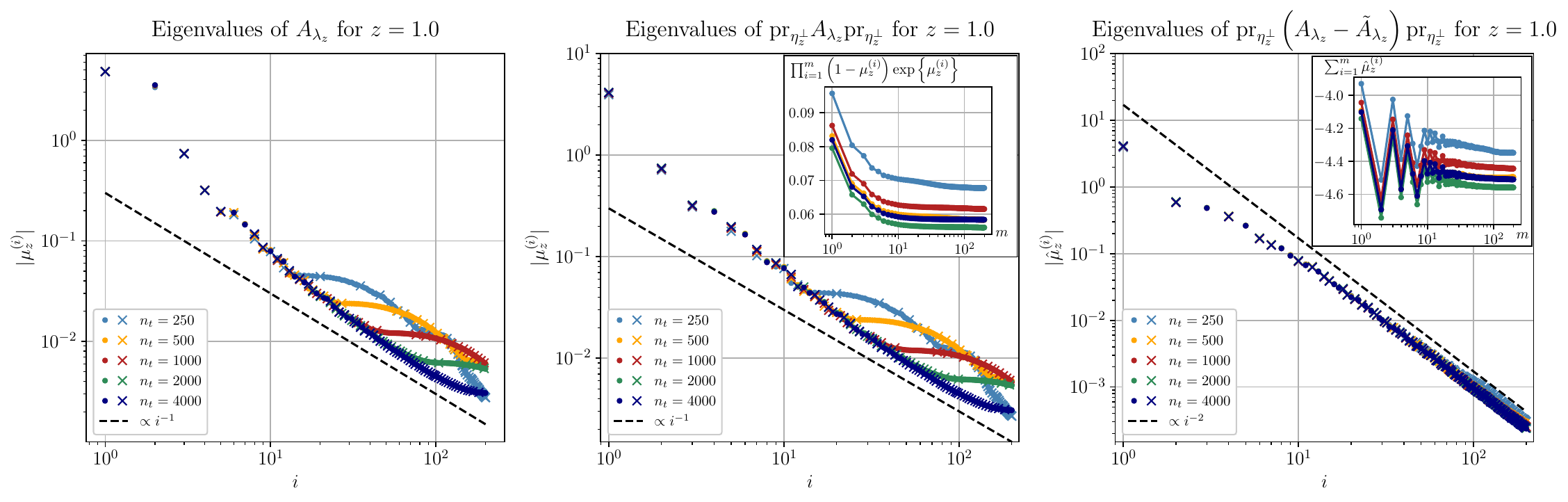}
\caption{Numerical results for the leading $M=200$ eigenvalues of the spectra of the operators that are needed to evaluate the tail probability estimate~\eqref{eq:tail-asymp} from theorem~\ref{thm:prefac} for the predator-prey model~\eqref{eq:predator-prey-SDE} for an observable value of $f(x,y) = z = 1.0$ at different time resolutions $n_t$. In all spectra, dots correspond to positive eigenvalues and crosses to negative eigenvalues.
Left: Spectrum of the Hessian $A_{\lambda_z}$ without projection operators. Note that $I''(z = 1) < 0$ in this example, as evident from  figure~\ref{fig:pred-prey-prob}, and there is hence indeed a single eigenvalue $\mu_z^{(2)} > 1$ in the spectrum, and $\det_2 \left(\text{Id} - A_{\lambda_z}\right) < 0$. Center: Spectrum of the projected Hessian $\text{pr}_{\eta_z^\perp}A_{\lambda_z} \text{pr}_{\eta_z^\perp}$, as it actually enters into the prefactor formula~\eqref{eq:tail-prefac-projected}. The projection operators change the spectrum and in particular remove the eigenvalue $>1$, and hence the CF determinant shown in the inset is positive now. Right: Subtracting the HS part $\tilde{A}_{\lambda_z}$ from the full Hessian leads to a trace-class operator $\text{pr}_{\eta_z^\perp} \left( A_{\lambda_z} - \tilde{A}_{\lambda_z} \right) \text{pr}_{\eta_z^\perp}$ with eigenvalue decay $\propto i^{-2}$ instead of $\propto i^{-1}$, and the trace is indeed seen to converge in the inset. See table~\ref{tab:pred-prey-z-1.0} for further details.}
\label{fig:pred-prey-spec}
\end{figure}

\begin{table}
\caption{Numerical results for all necessary quantities to evaluate the tail probability estimate~\eqref{eq:tail-asymp} for the predator-prey model~\eqref{eq:predator-prey-SDE} for an observable value of $f(x,y) = z = 1.0$ at different time resolutions $n_t$ in tabular form. Here, $I''(z) < 0$ and $\det_2 \left(\text{Id} - A_{\lambda_z}\right) < 0$, but~\eqref{eq:tail-prefac-projected} can still be used to calculate the well-defined tail probability prefactor $C(z)$. All quantities in this table should converge to a well-defined continuum limit as $n_t$ increases. $M = 200$ eigenvalues were used to approximate all of the listed determinants and traces.}
\label{tab:pred-prey-z-1.0}
\centering
\begin{tabular}{|c|c|c|c|c|}
\hline
$n_t$ & $\lambda_z$ & $I(z)$ & $\det_2 \left(\text{Id} - A_{\lambda_z}\right)$ & $\det_2 \left(\text{Id} -\text{pr}_{\eta_z^\perp} A_{\lambda_z} \text{pr}_{\eta_z^\perp}\right)$ \\[.2cm]
\hline
$250$  & $0.115443$ & $0.139546$ & $-2.353707$ & $0.067781$ \\
$500$  & $0.119596$ & $0.142610$ & $-2.689994$ & $0.058361$ \\
$1000$ & $0.117907$ & $0.144161$ & $-2.883130$ & $0.061549$ \\
$2000$ & $0.121072$ & $0.144943$ & $-2.985856$ & $0.056080$ \\
$4000$ & $0.119295$ & $0.145329$ & $-3.024294$ & $0.058393$ \\\hline
\end{tabular}\\[.1cm]
\begin{tabular}{|c|c|c|c|}
\hline
$n_t$ & $\trace \left[\text{pr}_{\eta_z^\perp} \left( A_{\lambda_z} - \tilde{A}_{\lambda_z} \right) \text{pr}_{\eta_z^\perp}\right]$& $\left \langle e_z, \tilde{A}_{\lambda_z} e_z \right \rangle_{L^2}$ & $C(z)$\\[.2cm]
\hline
$250$  & $-4.348229$ & $-1.537991$ & $1.783767$\\
$500$  &  $-4.494852$ & $-1.581076$ & $1.805635$\\
$1000$ &  $-4.444308$ & $-1.600456$ & $1.810986$\\
$2000$ &  $-4.558893$ & $-1.611645$ & $1.796790$\\
$4000$ &  $-4.509487$ & $-1.619085$ & $1.809190$\\\hline
\end{tabular}
\end{table}

\subsection{Advection-diffusion equation}
\label{sec:advec-diffuse}

The scalability of the algorithm is of particular importance when the
total number of degrees of freedom is very large. The typical example
of this is a semidiscretization of a stochastic PDE, especially if the
PDE has multiple spatial dimensions on top of the temporal one. We  consider the transport of a
pollutant in a $(d = 2)$-dimensional spatial domain $\Omega\subset\RR^2$ over
time. Emitting the pollutant at a localized point in the domain
$x_{\text{inj}}\in\Omega$ with a constant rate, we are interested in estimating how random fluctuations
in a background velocity field lead to an exceedance of the pollutant
concentration of some threshold $z>0$ at a target point $x_{\text{meas}}\in\Omega$. Concrete
examples in this class of problems might be to estimate the risk that
an oil spill in the ocean reaches a specific point on the coast~\cite{yang-khan-lye:2013,ji-johnson-wikel:2014,zhang-schaefer-kress:2021}, or
that nuclear fallout from a disaster area reaches a city.\\

For this purpose, we consider the following SPDE for a scalar
concentration field $c^\eps:\Omega\times[0,T]\to\RR$ given by
\begin{align}
  \label{eq:advdiff-spde-strat}
  \begin{cases}
    \partial_t c^\eps = -(v\cdot\nabla) c^\eps - \sqrt{\eps} (w\odot\nabla) c^\eps + D_0 \Delta c^\eps + s\,,\\
    c^\eps(\cdot, 0) = 0\,,
  \end{cases}
\end{align}
where $v:\Omega\to\RR^2$ is a divergence-free background velocity field, $D_0>0$ the
pollutant's (bare) diffusivity, $s:\Omega\to\RR$ is the pollutant source,
$\odot$ is the Stratonovich dot product, and
$w:\Omega\times[0,T]\to\RR^2$ is a white-in-time, divergence-free stationary Gaussian
random vector field, with
\begin{align*}
\EE \left[w(x,t) \right] = 0\,, \quad \EE \left[w(x,t) \otimes w(x', t')  \right] = R_w(x-x') \delta(t-t')\,.
\end{align*}
The Stratonovich interpretation is chosen here because it corresponds to
the limit of infinite time scale separation between the velocity field
$w$ being much faster than the concentration field
$c$~\cite{donev-fai-vanden-eijnden:2014}. For simplicity, we consider
$\Omega=[-L/2,L/2]^2$ with periodic boundary conditions, and $L=2\pi$.\\

In contrast to the It{\^o} SDE~\eqref{eq:SDE}, equation~(\ref{eq:advdiff-spde-strat}) is a Stratonovich-type S(P)DE, see remark~\ref{rem:ito-strato} for the prefactor computation. In the general notation used in this paper, and with $R_w^{1/2} * R_w^{1/2} = R_w$, where $*$ denotes convolution, the correction term in the drift, needed to convert between the It{\^o} and Stratonovich versions of~(\ref{eq:advdiff-spde-strat}), for the diffusion operator $\sigma$ with
\begin{align*}
\left(\sigma[c]\eta\right)(x) &=
-\left( \left(R_w^{1/2} * \eta \right)(x)\cdot \nabla\right) c(x) \\&= - \sum_{j,k = 1}^2 \int_\Omega \dd x' \; R_{w,jk}^{1/2}(x-x') \partial_j c(x) \eta_k(x') =: \sum_{k = 1}^2 \int_\Omega \dd x' \tilde{\sigma}_k(x,x') \eta_k(x')\,,
\end{align*}
becomes
\begin{align*}
  \frac{\eps}{2} \sum_{k = 1}^2 \int_\Omega \dd x' \int_\Omega \dd x'' \; \tilde{\sigma}_k[c](x',x'') \frac{\delta \tilde{\sigma}_{k}}{\delta c(x')} (x,x'')  = \frac{\eps}{2} R_w(0) : \nabla^2 c(x)
\end{align*}
after a short calculation. The It{\^o} version of the SPDE~(\ref{eq:advdiff-spde-strat}) is hence given by
\begin{align}
  \label{eq:advdiff-spde-ito}
  \begin{cases}
    \partial_t c^\eps = -(v\cdot \nabla)c^\eps -\sqrt{\eps}(w\cdot\nabla)c^\eps + \left(D_0 I_d + \tfrac12\eps R_w(0)\right):\nabla^2 c^\eps + s\,,\\
    c^\eps(\cdot, 0) = 0\,.
  \end{cases}
\end{align}
In particular, as is well-known, the It{\^o}--Stratonovich correction term ``renormalizes'' the diffusivity in this example, resulting in an $\eps$-dependent effective diffusivity.\\

For the Gaussian random velocity field $w$, we choose a spatially
smooth field, because otherwise the SPDE~(\ref{eq:advdiff-spde-ito})
may become singular and would require renormalization. Note that such a smooth forcing, effectively only in terms of a few nonzero Fourier modes, corresponds to a low-rank diffusion matrix $\sigma$, which we will exploit in our code for both the instanton and prefactor computation, in the same way as detailed in~\cite{schorlepp-tong-grafke-etal:2023}. For a symmetric,
isotropic, and divergence-free smooth velocity field, the correlation function
$R_w\colon \Omega \to \RR^{d \times d}$ must necessarily be of the form~\cite{robertson:1940}
\begin{align*}
R_{w,ij}(x) = r(\norm{x}) \delta_{ij} + \frac{\norm{x} r'(\norm{x})}{d -1} \left(\delta_{ij} - \frac{x_i x_j}{\norm{x}^2} \right)\,,
\end{align*}
for a scalar function $r \colon [0,\infty) \to \RR$. We use $r(\norm{x}) = R_{w,0} \exp \left\{-\norm{x}^2 / (2 L_w^2)
\right\}$ as a generic choice, with a typical length scale $L_w>0$ of the random velocity
field $w$, such that
\begin{align}
R_{w,ij}(x) = R_{w,0} \exp \left\{-\frac{\norm{x}^2}{2 L_w^2} \right\} \left[\delta_{ij} - \frac{1}{d-1} \left(\frac{\norm{x}^2}{L_w^2} \delta_{ij} - \frac{x_i x_j}{L_w^2} \right) \right]\,.
\label{eq:forcing-correl-mex}
\end{align}

For both the source $s \colon \Omega \to \RR$ and the observable $f\colon \Omega \to \RR$, we introduce a
Gaussian mollifier on a small length scale $\ell$,
$\varphi_\ell(x) = (\pi\ell^2)^{-1}\exp\{-\norm{x}^2/\ell^2\}$,
and consider a constant and unit-rate source $s$ localized around point $x_{\text{inj}}\in\Omega$ on a scale $\ell$, $s(x) = \varphi_\ell(x-x_{\text{inj}})$,
and an observable $f$ sensitive around a point $x_{\text{meas}}\in\Omega$, also on the scale
$\ell$, so that the observable becomes
\begin{align}
  f(c(\cdot,T)) = (\varphi_\ell * c(\cdot,T))\left( x=x_{\text{meas}} \right)\,.
  \label{eq:adv-diff-obs}
\end{align}
This choice ensures that pollutant is injected into the domain at a
constant rate in a region of size $\ell$ around $x_{\text{inj}}\in\Omega$,
and we are asking for its concentration averaged over an
$\ell$-sized region around the point of measurement $x_{\text{meas}}\in\Omega$ at time $T$.\\

\begin{figure}
  \begin{center}
    \includegraphics[width=0.42\textwidth]{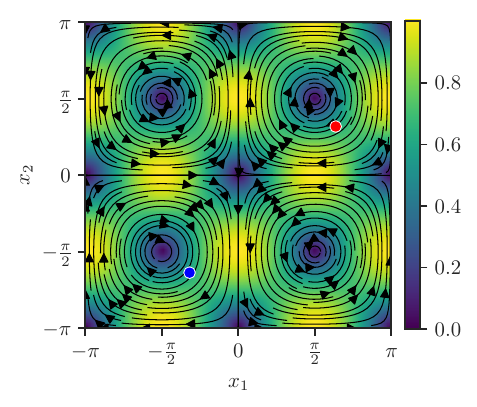}
    \includegraphics[width=0.42\textwidth]{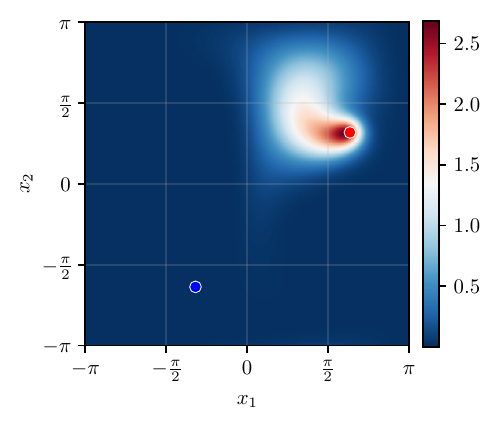}
  \end{center}
  \caption{Left: Streamlines and magnitude of the deterministic velocity field
    $v$ that advects the pollutant. There are four disconnected cells of flow, and traversal
    between them is only possible due to diffusion and
    noise in the model~\eqref{eq:advdiff-spde-strat}. Pollutant is constantly emitted at the source $x_{\text{inj}}$ (red), and we
    are monitoring its concentration at the target $x_{\text{meas}}$ (blue). Right: The
    shading shows the concentration of pollutant~$c^0(\cdot, T)$ at $T=5$ for the
    deterministic case, $\eps=0$, where no additional random velocity
    is present. Brighter values correspond to higher
    concentration, as indicated by the colorbar. Grid lines are shown to help visualize the quadrants/flow cells. The pollutant can only slowly leave its cell, since
    the transport is advection-dominated, and at time $T$ almost nothing of
    the concentration has left the upper right quadrant.}
  \label{fig:advdiff-setup}
\end{figure}

With these concrete choices, we perform numerical simulations of the SPDE~\eqref{eq:advdiff-spde-ito} with parameters
\begin{align*}
  v(x_1,x_2) = \nabla^\perp (\sin x_1\sin x_2) = 
  \begin{pmatrix}
    -\sin x_1\cos x_2\\
    \cos x_1\sin x_2
  \end{pmatrix}\,, \quad x_{\text{inj}} = 
  \begin{pmatrix}
    2\\1
  \end{pmatrix}\,, \quad x_{\text{meas}} =
  \begin{pmatrix}
    -1\\-2
  \end{pmatrix}\,,
\end{align*}
as well as $\eps \in \{ 0.05, 0.1\}$, $L_w=1.0$, $\ell=0.2$, $D_0=5\cdot
10^{-2}$, $T = 5$. The deterministic velocity field $v$ is shown in figure~\ref{fig:advdiff-setup} (left). In particular, this
choice of background velocity field $v$ introduces four independent
vortices that can only be bridged by diffusion and noise. The
pollutant emitted at the source around $x_{\text{inj}}$ (red) can only reach the target around $x_{\text{meas}}$ (blue)
under favorable realizations of the noise. If noise is absent, as in
figure~\ref{fig:advdiff-setup} (right), diffusion alone leads to
essentially no pollutant leaving the upper-right quadrant in the specified time interval.\\

For the numerical solution of~\eqref{eq:advdiff-spde-ito}, we use a pseudo-spectral discretization in terms of~$n_x$ spatial Fourier modes for each of the two directions~$x_1$ and~$x_2$, and an Euler-Maruyama integrator with an integrating factor for the diffusion term in Fourier space for time stepping, with~$n_t$ equidistant time steps within the time interval $[0,T]$. The Gaussian random field~$w = R_w^{1/2}*\xi$ that advects the pollutant is sampled directly in Fourier space by means of the convolution theorem, where~$\xi$ is space-time white noise (for implementation details, we refer to~\cite{lang-potthoff:2011,schorlepp-grafke-may-etal:2022}).
While more sophisticated numerical approaches would be possible for the advection-diffusion problem at hand, we focus on these simple choices here to implement the forward simulation from the noise~$\xi$ or~$\eta$ to the final-time observable value $\hat{F}^\eps[\sqrt{\eps} \xi] = f\left( c^\eps(\cdot, T) \right)$ directly in JAX and be able to differentiate through it automatically for the instanton and prefactor computations. Note that if we represent~$\xi$ or~$\eta$ in terms of (the non-redundant real and imaginary parts of its) Fourier modes, because the convolution with the smooth function~$R_w$ strongly dampens all higher modes, it is enough to explicitly retain only the first few Fourier modes of the noise $\eta$ and fluctuations $\delta \eta$. This makes the size of discretized $\eta$ and $\delta \eta$ vectors independent of the spatial resolution, and makes it possible to scale the instanton optimization and eigenvalue computations for the prefactor to high spatio-temporal resolutions, cf.~\cite{schorlepp-tong-grafke-etal:2023}. Concretely, for the forcing correlation~\eqref{eq:forcing-correl-mex} and numerical parameters given above, when keeping only modes with wave numbers $\norm{k} \leq 8$ for which $\abs{\hat{R}_{w,ij}(k)} \geq 10^{-10}$, we store only $2 \cdot (2 \cdot 8)^2 = 512$ spatial degrees of freedom for~$\eta$ and~$\delta \eta$ (which still depend on time), compared to the actual $2 \cdot n_x^2$ spatial degrees of freedom of the system for the forward solve. Consequently, the total number of unknowns of the optimization problem to find the instanton noise~$\eta_z$ is~$512\cdot n_t$, as opposed to the~$2\cdot n_x^2 n_t$ degrees of freedom of the naive discretization.\\

\begin{figure}
  \begin{center}
    \includegraphics[width=0.49\textwidth]{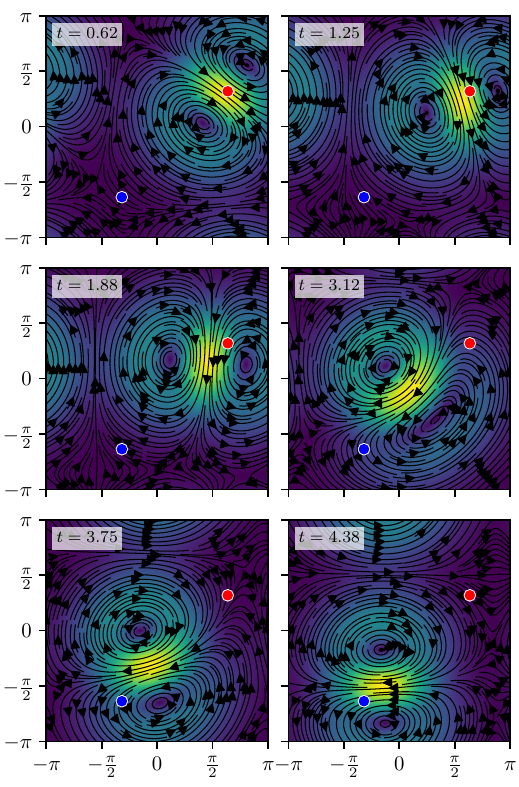}
     \includegraphics[width=0.49\textwidth]{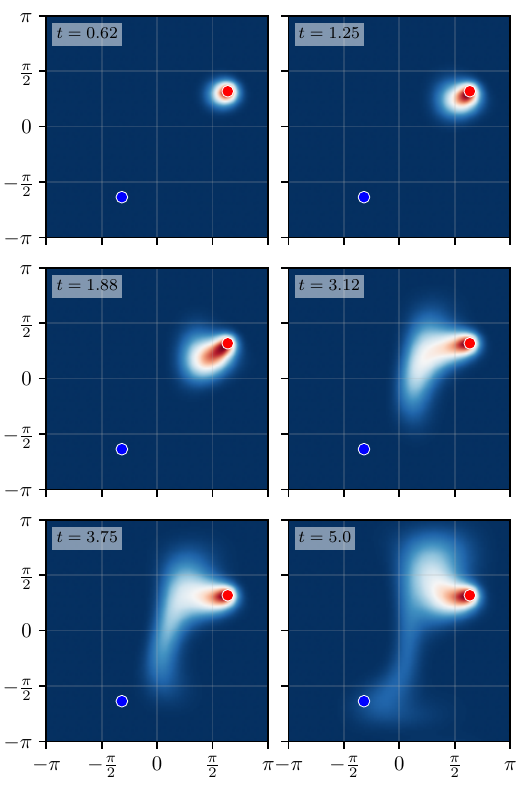}
  \end{center}
  \caption{Visualization of a numerically computed instanton for the SPDE~\eqref{eq:advdiff-spde-ito} for an observable value $f(c(\cdot,T)) = 0.15$ according to~\eqref{eq:adv-diff-obs}.
  Left 6 panels: Vector field (streamlines) and strength (shading) of the optimal velocity field realization $w_z(\cdot,t)$ facilitating transport from source $x_{\text{inj}}$ (red dot) to target $x_{\text{meas}}$ (blue dot): At small times, the forcing primarily pushes pollutant from the top right to the bottom right cell. Once the pollutant has been transported sufficiently by the background field, at late times the forcing advects it to the lower left cell. Right 6 panels: Corresponding evolution in time of the pollutant concentration $c_z(\cdot, t)$ from $t=0$ to $t=T=5.0$: Pollutant is emitted at the source, and reaches the target at final time. Numerical resolutions used for instanton computations: $n_x=128$, $n_t=1024$.}
  \label{fig:inst-time-series}
\end{figure}

\begin{figure}
  \begin{center}
    \includegraphics[height=0.315\textheight]{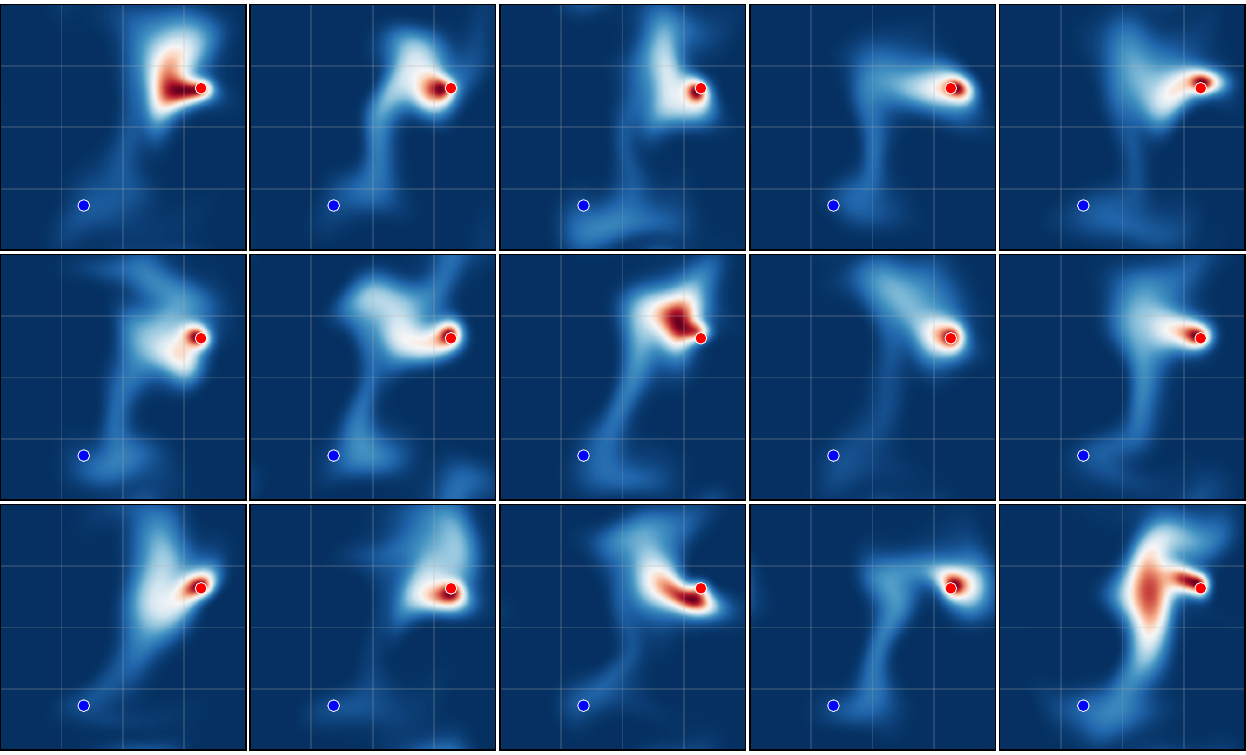}
    \includegraphics[height=0.315\textheight]{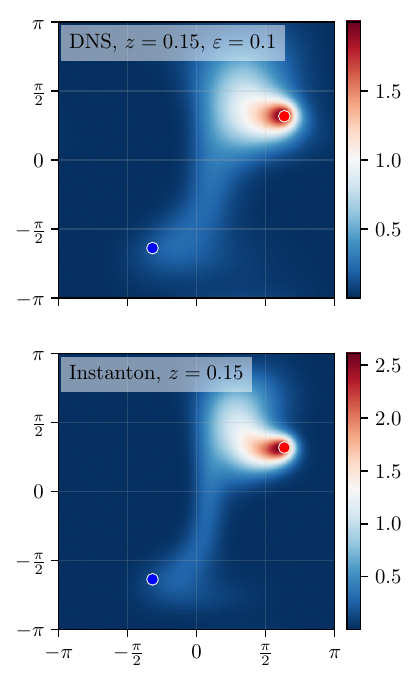}\ 
  \end{center}
  \caption{Left panels: 15 samples $c^{\eps}(\cdot, T)$ of the stochastic
    advection-diffusion equation~(\ref{eq:advdiff-spde-ito}) for
    $\eps=0.1$, conditioned on measuring a concentration of $z=0.15$
    at the target point $x_{\text{meas}}$ (blue) at final time $T=5.0$. Top right: Average of all 185
    sampled realizations of $c^{\eps}(\cdot, T)$ achieving $z=0.15$ at $T=5.0$. Bottom
    right: Concentration field $c_z(\cdot,T)$ of the instanton configuration
    for $z=0.15$. We see that the instanton has a similar structure as the conditional average for the same observable value, even though differences in terms of diffusivity and magnitude are visible. Numerical resolutions used for sampling and instanton computations: $n_x=128$, $n_t=1024$.}
    \label{fig:filter}
\end{figure}

\begin{figure}
  \begin{center}
    \includegraphics[height=0.245\textheight]{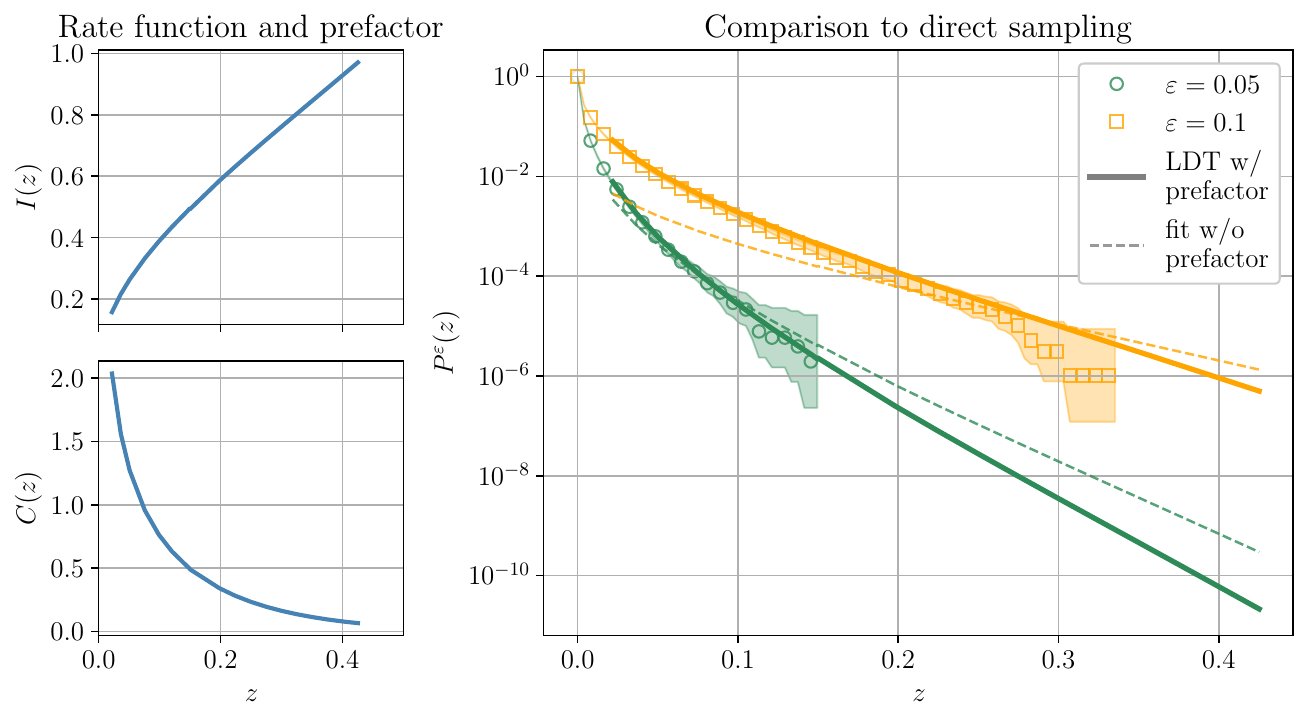}
    \includegraphics[height=0.245\textheight]{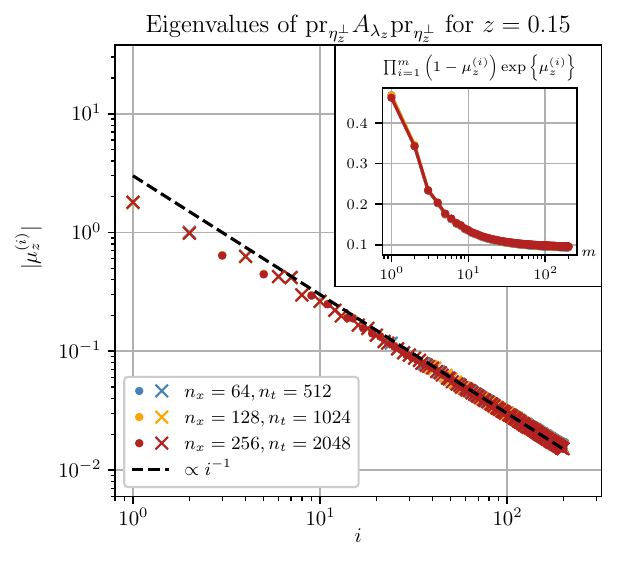}\ 
  \end{center}
  \caption{Rate function $I(z)$ (top left), prefactor $C(z)$ as given by~\eqref{eq:advdiff-prefactor-full} (bottom
    left), and probability $P^\eps(z)$ (center) of the concentration
    exceeding $z$ for noise strength $\eps$ for the stochastic
    advection diffusion equation~\eqref{eq:advdiff-spde-strat}. For the center figure, we compare
    the results of direct numerical simulation of the
    SPDE~(\ref{eq:advdiff-spde-ito}) with noise strengths $\eps=0.1$ with $\approx
    10^6$ samples and $\eps=0.05$ with $\approx 5\cdot10^5$
    samples (data points, with $99\%$ Wilson score intervals as shaded area), against the sharp large deviation
    estimate~\eqref{eq:tail-asymp} from theorem~\ref{thm:prefac}
    (solid lines) with prefactor according
    to~\eqref{eq:advdiff-prefactor-full}. In contrast to this, the dashed lines are $\text{const} \cdot \exp \left\{-I(z) / \eps \right\}$ with $\text{const}$ to match the Monte Carlo results in the tails, clearly showing worse agreement. Numerical resolutions used for these computations: $n_x = 128$, $n_t = 1024$. Also shown are the leading
    $M=200$ eigenvalues of the spectrum (right) of the projected second variation
    operator $\text{pr}_{\eta_z^\perp} A_{\lambda_z}
    \text{pr}_{\eta_z^\perp}$ for $z = 0.15$ at different spatio-temporal resolutions $(n_x, n_t)$, where dots correspond to positive and
    crosses to negative eigenvalues. The inset shows the numerical
    estimate of the CF determinant for $1\le m\le M=200$,
    demonstrating that only the first 200 eigenvalues are needed in
    this problem for the determinant to converge.}
    \label{fig:advdiff-cdf-and-spectrum}
\end{figure}

To evaluate the asymptotic estimate~\eqref{eq:tail-asymp} for the SPDE~(\ref{eq:advdiff-spde-ito}), we first
need the gradient of the noise-to-observable map $F[\eta] = f(c(\cdot, T))$,
which is given by
\begin{align*}
  \frac{\delta (\lambda F)}{\delta \eta} = \sigma(c)^\top \theta = R^{1/2}_w*(c\ \nabla\theta)
\end{align*}
with $(c, \theta)$ the solution to
\begin{align*}
  \begin{cases}
    \partial_t c = -(v\cdot\nabla) c + D_0 \Delta c - \left(\left(R_w^{1/2}*\eta\right)\cdot \nabla\right) c + s\,, & c(\cdot,0)=0\\
    \partial_t \theta =  - \left( v\cdot\nabla \right)\theta - D_0\Delta \theta - \left(\left(R_w^{1/2}*\eta\right)\cdot\nabla\right)\theta\,, & \theta(\cdot,T) = \lambda \left. \fdv{f}{c} \right|_{c(\cdot, T)} = \lambda\varphi_\ell(\cdot -x_{\text{meas}})\,.
  \end{cases}
\end{align*}
Note that here the It{\^o}-Stratonovich correction term disappears, as
formally $\eps=0$. We will need to correct for this term later in the prefactor. As in previous numerical examples, we minimize target functionals~\eqref{eq:augmented-functional} with automatically computed gradients to find instantons for prescribed concentrations $F[\eta] = z$. An instanton configuration that results from such a computation for a concentration of $z=0.15$ at final time $T=5.0$ around $x_{\text{meas}}$ is shown in figure~\ref{fig:inst-time-series} in terms of the concentration $c_z(\cdot, t)$ and advection field $w_z(\cdot, t)$ for different $t$. We can see how the optimal advection carries the pollutant efficiently from source to target by making use of the coherent background velocity field. Furthermore, in figure~\ref{fig:filter}, we compare the concentration field of the instanton for $z = 0.15$ at final time to samples from direct numerical simulation of the SPDE~\eqref{eq:advdiff-spde-ito} at $\eps = 0.1$, conditioned on displaying the same observable value, which is rare for the considered noise strength. We can see that the instanton correctly explains the mechanism for the rare event in the Monte Carlo simulations on a qualitative level. Differences between the instanton and conditional average can be attributed to the different drift terms/diffusivities that the instanton and samples are subject to, with the $\eps$-dependent part of the diffusivity in~\eqref{eq:advdiff-spde-ito} being accounted for only by the prefactor below in the asymptotic approximation of~\eqref{eq:strato-pref-complete}. In other words, even though the instanton is visibly computed at a lower effective diffusivity, the effect of the $\eps$-dependent Stratonovich correction onto the probabilities is taken care of in the prefactor computation.\\

Two technical comments on the instanton computations are in order here: (i) to solve the optimization problems for the instantons at different $z$, we initialize the optimizer with a specifically chosen initial guess for $\eta$ here, which is constant over time and transports $c$ diagonally from top right to bottom left of the domain. This is to facilitate finding the instanton in this example, as starting from $\eta = 0$, or a random guess, can lead to a ``vanishing gradient problem'' where the optimizer remains stuck at close to zero concentration around $x_{\text{meas}}$ at final time. (ii) we observed that for large penalty parameters~$\mu$, the line search method of the optimizer can sometimes struggle to find admissible step sizes in this example, possibly due to insufficient numerical accuracy of the Fourier transforms in the forward map and gradient. As a simple remedy, we note that for \textit{any} penalty parameter $\mu > 0$, minimizing the target functional~\eqref{eq:augmented-functional} gives a valid instanton for its respective observable value; \textit{increasing} $\mu$ is only necessary to reach the prescribed observable value $z$ with higher accuracy. Hence, using smaller $\mu$'s as well as good initial guess for $\lambda$ still allows for computing correct instantons reasonably close to the target observable values (we terminate at $<10\%$ discrepancy), while avoiding the aforementioned numerical problems in the optimization. This has the additional benefit of increasing the efficiency of the optimizer. For a more thorough discussion of implementation details of the augmented Lagrangian method, we refer to~\cite{bertsekas:2014}.\\

Moving on to the prefactor computation, the second variation, or equivalently the action of the operator
$A_\lambda$, is given by
\begin{align*}
  A_\lambda \delta \eta = \nfdv{2}{\left(\lambda F \right)}{\eta} \delta \eta = R_w^{1/2} * \left(\gamma^{(1)} \nabla \theta + c\nabla \zeta \right)
\end{align*}
with $(\gamma^{(1)}, \zeta)$ solution to
\begin{align*}
\begin{aligned}
  &\begin{cases}
    \partial_t \gamma^{(1)} = -(v\cdot\nabla)\gamma^{(1)} + D_0\Delta \gamma^{(1)} -\left( \left(R_w^{1/2} * \eta\right)\cdot\nabla\right) \gamma^{(1)} - \left(\left(R_w^{1/2} * \delta\eta\right)\cdot \nabla\right)c\,,\\
    \partial_t \zeta = -(v\cdot\nabla)\zeta - D_0\Delta\zeta - \left(\left(R_w^{1/2} * \zeta\right)\cdot\nabla\right)\theta - \left(\left(R_w^{1/2} * \eta\right)\cdot\nabla\right)\zeta\,,
  \end{cases}\\
  \text{ with } &\begin{cases}
   \gamma^{(1)}(\cdot,0)=0\,,\\
   \zeta(\cdot,T) = \lambda \left.\nfdv{2}{f}{c}\right|_{c(\cdot,T)} \gamma^{(1)}(\cdot,T) = 0\,.
  \end{cases}
\end{aligned}
\end{align*}
Note that since both the drift term and the noise covariance operator
are only linear in $c$ in equation~(\ref{eq:advdiff-spde-ito}), no
second derivative terms of them appear in the second order adjoint
equation. As a consequence of that, and the linearity of the observable, the action of the singular operator~$\tilde A_\lambda$ is identical to the action of the original operator
$A_\lambda$ here, i.e.\ $A_\lambda = \tilde A_\lambda$, and in particular we have
$\trace[A_\lambda-\tilde A_\lambda]=0$.\\

Due to the It\^o--Stratonovich correction term and as explained in remark~\ref{rem:ito-strato}, for SDEs in $\RR^n$, in general, compared to~\eqref{eq:tail-prefac-projected} and by rewriting~\eqref{eq:strato-pref-complete}, there is an
additional factor
$
  \exp\left\{ \left \langle \lambda\nabla f(\phi(T)),\tilde\gamma^{(1)} \right \rangle\right\}
$
for $\tilde\gamma^{(1)} \colon [0,T] \to \RR^n$ the solution to
\begin{align}
  \label{eq:advdiff-dc-generic}
  \begin{cases}\dot{\tilde\gamma}^{(1)} = \left[ \nabla b(\phi)\big|_{\eps=0} + \nabla \sigma(\phi) \eta \right] \tilde\gamma^{(1)} + \tfrac12 \sigma(\phi) : \nabla \sigma(\phi)\,,\\\tilde\gamma^{(1)}(0)=0\,,
  \end{cases}
\end{align}
in the prefactor, where it is the last term of~\eqref{eq:advdiff-dc-generic} that
accounts for the term of order $\eps$ in~\eqref{eq:strat-to-ito-sde}. For the concrete example of the
stochastic advection-diffusion equation~\eqref{eq:advdiff-spde-ito}, this translates to
\begin{align*}
  \exp\left\{\left\langle \lambda \left.\fdv fc\right|_{c(\cdot,T)},\tilde\gamma^{(1)}\right\rangle_{L^2(\Omega, \RR)}\right\} = \exp\left\{\lambda \left(\varphi_\ell * \tilde \gamma^{(1)}(\cdot,T)\right)(x=x_{\text{meas}})\right\}
\end{align*}
for $\tilde\gamma^{(1)} \colon \Omega \times [0,T] \to \RR$ the solution to
\begin{align*}
\begin{cases}
  \partial_t \tilde\gamma^{(1)} = \left[ D_0\Delta - v\cdot\nabla - \left( R_w^{1/2}*\eta \right)\cdot\nabla \right]\tilde \gamma^{(1)} + \tfrac12 R_w(0): \nabla^2 c\,,\\
  \tilde\gamma^{(1)}(\cdot, 0) = 0\,.
  \end{cases}
\end{align*}
The reason why we write the $\tfrac{1}{2}\int_0^T \trace\left[\sigma^\top(\phi(t))\nabla \sigma(\phi(t)) \theta(t) \right] \dd t$ term of the prefactor~\eqref{eq:strato-pref-complete} in this way is because in our code, which is based on automatic differentiation, we do not have easy access to the full adjoint field $\theta$ for non-invertible $\sigma$. In contrast to this, the expression above in terms of $\tilde{\gamma}^{(1)}$ can be evaluated just from knowledge of the optimal noise $\eta_z$, and the state space instanton field $c_z$ from the forward solver.\\

\begin{table}
  \caption{Numerical results for all necessary quantities to evaluate the tail probability estimate~\eqref{eq:advdiff-prefactor-full} for the advection-diffusion SPDE~\eqref{eq:advdiff-spde-ito} to reach a target concentration of $z \approx 0.15$, at different resolutions in space, $n_x$, and time, $n_t$.
    Since $A_{\lambda} = \tilde A_\lambda$, we omit listing $\trace[A_\lambda - \tilde A_\lambda]=0$. Throughout, $M = 200$ eigenvalues were used to approximate all of the listed determinants. We remark that the iteration over increasing penalty parameters in the augmented Lagrangian method has been stopped once the observable value $z$ is within $10\%$ of the desired observable value to safe computation time and avoid step size selection problems, as discussed in the main text. The last column, $N_{\text{eval}}$ denotes the total number of equation evaluations for the optimization and determinant computation together.}
\label{tab:advdiff-z-0.15}
\centering
\begin{tabular}{|c|c|c|c|c|c|}
\hline
$n_x^2\times n_t$ & $z$ & $\lambda_z$ & $I(z)$ & $\det_2 \left(\text{Id} -\text{pr}_{\eta_z^\perp} A_{\lambda_z} \text{pr}_{\eta_z^\perp}\right)$ \\[.2cm]
  \hline
  $64^2\times512$   & $0.149691$ &  $1.94158$ & $0.492514$ & $0.0942355$\\ 
  $128^2\times1024$ & $0.146423$ &  $1.97597$ & $0.488349$ & $0.0953783$\\ 
  $256^2\times2048$ & $0.149447$ &  $1.95045$ & $0.495423$ & $0.0959704$ \\\hline 
\end{tabular}\\[.1cm]
\begin{tabular}{|c|c|c|c|c|}
\hline
$n_x^2\times n_t$ & $\left \langle e_z, \tilde{A}_{\lambda_z} e_z \right \rangle_{L^2}$ & $\langle \lambda_z \nabla f(\phi(T)),\tilde\gamma^{(1)}\rangle$ & $C(z)$ & $ N_{\text{eval}}$\\[.2cm]
  \hline
  $64^2\times512$   & $1.79462$ & $-1.00453$ & $ 0.490017$ & 1763\\ 
  $128^2\times1024$ & $1.79129$ & $-0.969077$ & $0.507641$ & 1808\\ 
  $256^2\times2048$ & $1.72903$ & $-1.01467$ & $ 0.495233$ & 1919\\\hline 
\end{tabular}
\end{table}

All taken together, the asymptotic estimate of the tail probability $P^\eps(z)$ of the pollutant concentration at location $x_{\text{meas}}\in\Omega$ is given by
$
P^\eps(z) \overset{\eps \downarrow 0}{\sim} \eps^{1/2} (2 \pi)^{-1/2} C(z) \exp \left\{-\eps^{-1} I(z) \right\}
$
with prefactor
\begin{align}
  \label{eq:advdiff-prefactor-full}
  \begin{aligned}
C(z)& = \left[2 I(z) {\det}_2 \left(\Id - \text{pr}_{\eta_z^\perp} A_{\lambda_z} \text{pr}_{\eta_z^\perp} \right) \right]^{-1/2} \times \\ &\qquad \qquad \times  \exp \left\{ - \tfrac{1}{2} \left \langle e_z, A_{\lambda_z} e_z \right \rangle + \lambda_z \left(\varphi_\ell * \tilde \gamma^{(1)}(\cdot,T)\right)(x_{\text{meas}})\right\}\,.
\end{aligned}
\end{align}

The results of this computation are depicted in figure~\ref{fig:advdiff-cdf-and-spectrum}. This figure shows both the value of the rate function $I(z)$ and the full prefactor $C(z)$, computed via~(\ref{eq:advdiff-prefactor-full}), as a function of the observable $z$. We remark that in contrast to earlier examples, the prefactor varies significantly with $z$ over the whole considered interval. As a consequence, any LDT estimate without prefactor must be wildly inaccurate. This is indeed observed in the center panel of figure~\ref{fig:advdiff-cdf-and-spectrum}, where we show the sharp LDT estimate of the tail probability $P^\eps(z)$, i.e.~the probability that the concentration around $x_{\text{meas}}$ exceeds $z$. Solid and dashed lines correspond to the LDT estimate with and without prefactor correction, respectively. The uncorrected LDT estimate is a best effort fit, but clearly nowhere tangent to the actually observed probabilities. Compared to this, as circular and square markers, are the results of direct numerical simulations, simulating the SPDE~(\ref{eq:advdiff-spde-ito}) for $\eps\in\{0.05,0.1\}$ with approximately $5\cdot10^5$ and $10^6$ samples, respectively. We observe that the LDT estimate with prefactor explains the tail probability very well over the whole considered interval. Lastly, the right panel of figure~\ref{fig:advdiff-cdf-and-spectrum} shows the spectrum of the projected Hessian $\text{pr}_{\eta_z^\perp} A_{\lambda_z} \text{pr}_{\eta_z^\perp}$. The spectrum is shown for the first $M=200$ eigenvalues only, and for three different sets of numerical resolutions, with $n_x^2\times n_t \in \{64^2\times512, 128^2\times 1024, 256^2\times2048\}$, with no obvious differences between the three. This demonstrates that the lowest resolution is likely sufficient to obtain converged results. Nevertheless, all other computations in this section are performed with $n_x^2\times n_t = 128^2\times 1024$. We further observe that only the first few eigenvalues significantly differ from the limiting scaling that is empirically proportional to $\mu_i\propto i^{-1}$. The inset of the rightmost panel of figure~\ref{fig:advdiff-cdf-and-spectrum} depicts the cumulative product of eigenvalues necessary to compute the CF-determinant, for $0<m\le M=200$, to demonstrate that 200 eigenvalues are sufficient to compute the CF-determinant accurately.\\

Table~\ref{tab:advdiff-z-0.15} summarizes all terms that enter the prefactor computation, including the additional Stratonovich correction, for the observable value $z=0.15$, showing good agreement of the results for different numerical resolutions. Note in particular that the number of equation evaluations, $N_{\text{eval}}$, is roughly identical for all resolutions. Here, we count cost function evaluations as a single evaluation, gradient evaluations as two (forward and adjoint), and an operator-vector multiply for the eigenvalue computation as four (forward and adjoint, second order forward and adjoint). The number of solves for the optimization varies due to our termination condition, but remains roughly constant. The number of solves for the determinant computation is constant for all resolutions, since we always compute $M=200$ eigenvalues (which needs 401 operator computations, i.e. 1604 equation evaluations). 

\section{Conclusion}
\label{sec:concl}

In the present paper, we have shown how the classical SORM estimate for the approximation of extreme event probabilities in the spirit of Laplace's method needs to be modified when it is applied to infinite-dimensional path space for general diffusion processes~\eqref{eq:SDE} with multiplicative noise. This generalizes earlier work~\cite{schorlepp-tong-grafke-etal:2023}, and builds on the theory developed in~\cite{ben-arous:1988}. The central component of the estimate is the computation of CF determinants and regularized Hessian traces, which we have implemented in a matrix-free way to ensure numerical scalability to high dimensions. We have demonstrated the resulting simple method in two examples, a predator-prey model as an example of a low-dimensional ordinary SDE, and a random advection-diffusion equation as a high-dimensional SPDE example, where we exploited the low-rank property of the diffusion operator $\sigma$. The presented method should be useful for extreme event estimates in various other examples involving S(P)DEs. The source code for the predator-prey model is available under~\cite{Schorlepp-github}.\\

The novelty of the presented approach and algorithm is that it allows for the estimation of extremely rare probabilities in complex high-dimensional systems without any fitting parameters. Since the whole algorithm is developed in a scalable way, it is feasible to apply it to very high-dimensional systems, including SPDEs with multiple spatial dimensions. In particular, we demonstrated how both the computation of the minimizer, as well as the computation of the prefactor, including approximating the Carleman--Fredholm determinant of the second variation operator, can be implemented in a way such that the number of equation solves remains approximately constant independent of the spatial or temporal resolution. As such, computing tail probability estimates necessitates only hundreds of equation solves for the optimization (which we implement via L-BFGS, needing equation solves for gradient computation and line search) as well the computation of operator trace and determinant (where the number of operator evaluations empirically is roughly 2 or 3 times the number of computed eigenvalues). Consequently, the approach is already competitive for only mildly rare events with probabilities of $\approx10^{-3}$ (for which $\approx 1000$ stochastic samples would be needed to observe the event once), but becomes both more accurate and wildly more efficient for lower probabilities, making it possible to obtain highly accurate probability estimates for extremely rare events. In total, and including harnessing the sparsity of the forcing covariance of our largest problem, we can approximate on a single core, and in approximately an hour, the determinant of the second variation operator, which in the largest case, when naively discretized, corresponds to a matrix with $(2n_x^2 n_t)^2 \approx 7\cdot 10^{16}$ entries (for $n_x=256, n_t=2048$), which we of course never actually assemble.\\

Beyond further applications of the theory used here, such as a more in-depth study of realistic advection-diffusion problems, it would be interesting to generalize the presented method to rough differential equations according to the theory developed in~\cite{inahama:2013,yang-xu-pei:2025}. A natural next step are then singular SPDEs, where precise Laplace asymptotics have recently been considered~\cite{friz-klose:2022,klose:2022}. Being able to evaluate and analyze the theoretical results obtained there, and e.g.\ compare them against Monte Carlo simulations or other renormalization methods from physics, would be an interesting future direction.

\section{Acknowledgments}

The authors would like to thank G{\'e}rard Ben Arous, Nils Berglund, Tom Klose, Youssef Marzouk, and Georg Stadler for helpful discussions on different parts of this work. For the purpose of open access, the authors have applied a Creative Commons Attribution (CC BY) license to any Author Accepted Manuscript version arising from this submission. 


\appendix
\renewcommand{\thesection}{\Alph{section}}
\renewcommand{\thesubsection}{\Alph{section}.\arabic{subsection}}
\renewcommand{\thesubsubsection}{\Alph{section}.\arabic{subsection}.\arabic{subsubsection}}

\section{Alternative approach with Riccati differential equations}
\label{app:ricc}

We could alternatively carry out the prefactor computation with matrix Riccati differential equations~\cite{maier-stein:1996,schorlepp-grafke-grauer:2021,bouchet-reygner:2022,schorlepp-grafke-grauer:2023,grafke-schaefer-vanden-eijnden:2023}, where we have, in the case of additive noise,
\begin{align*}
C(z) &=
\frac{1}{\lambda_z} \exp\left\{\frac{1}{2} \int_0^T
\trace \left[\left \langle \nabla^2 b(\phi_z), \theta_z \right
\rangle  Q_z \right] \dd t\right\} \times \\ &\qquad \qquad \times\left[{\det} \left( U_z \right) \left \langle \nabla f(\phi_z(T)),
Q_z(T) U_z^{-1} \nabla f(\phi_z(T)) \right \rangle  \right]^{-1/2}
\end{align*}
for the prefactor $C(z)$ in~\eqref{eq:tail-prob-prefac-sde-additive} with
\begin{align*}
U_z := \text{Id}_n - \lambda_z \nabla^2 f
\left(\phi_z(T) \right) Q_z(T) \in \RR^{n \times n}\,,
\end{align*}
and a symmetric matrix-valued function $Q_z \colon [0,T] \to \RR^{n \times n}$ solving the Riccati differential equation
\begin{align*}
\begin{cases}
\dot{Q}_z = \sigma \sigma^\top + Q_z \nabla b\left(\phi_z
\right)^\top +
 \nabla b\left(\phi_z \right) Q_z + Q_z \left \langle \nabla^2
b(\phi_z), \theta_z\right \rangle Q_z \,,\\
Q_z(0) = 0 \in \RR^{n \times n}\,.
\end{cases}
\end{align*}
However, this can be numerically difficult to solve for high-dimensional state spaces with $n \gg 1$, can display pseudo-singularities unless specialized time steppers are used, and requires more effort to implement, see~\cite{schorlepp-tong-grafke-etal:2023} for a detailed comparison. The Riccati approach does generalize immediately to multiplicative noise without structural changes (whereas generalizing the expression~\eqref{eq:tail-prob-prefac-sde-additive} requires more effort, as we see in the present paper), and in fact generalizes to any continuous-time Markov process in $\RR^n$ that satisfies an LDP as $\eps \downarrow 0$, as discussed in~\cite{schorlepp-grafke-grauer:2023,grafke-schaefer-vanden-eijnden:2023}. It is less straightforward, however, to generalize the Riccati approach to non-Markovian processes (see e.g.~\cite{rosinberg-tarjus-munakata:2024}), which is possible without major changes for the operator determinant approach~\cite{inahama:2013}.


\section{Component form of the first and second order adjoint equations}
\label{app:index}

For completeness and to explicitly define the notation used in~\eqref{eq:first-order-adjoint-mult-noise}-\eqref{eq:second-order-adjoint-mult-noise}, we list the first and second order adjoint equations for the multiplicative-noise noise-to-observable map~\eqref{eq:noise-to-obs-mult} in index notation (with Einstein summation convention) here. The first variation is
\begin{align*}
\left(\fdv{\left(\lambda F \right)}{\eta}\right)_i  = \sigma_{ji}(\phi) \theta_j
\end{align*}
with
\begin{align*}
\begin{cases}
\dot{\phi}_i = b_i(\phi) + \sigma_{ij}(\phi) \eta_j\,, \quad
&\phi_i(0) = x_i\,,\\
\dot{\theta}_i = -\partial_i b_j(\phi) \theta_j - \partial_i \sigma_{jk}(\phi)  \eta_k \theta_j\,,
\quad &\theta_i(T) = \lambda \partial_i f(\phi(T))\,,
\end{cases}
\end{align*}
and the second variation acts as
\begin{align*}
(A_\lambda \delta \eta)_i =  \sigma_{ji}(\phi) \zeta_j +  \theta_k \partial_j \sigma_{ki}(\phi) \gamma^{(1)}_j
 \,,
\end{align*}
with
\begin{align*}
\begin{cases}
\dot{\gamma}^{(1)}_i = \partial_j b_i(\phi) \gamma^{(1)}_j + \partial_j \sigma_{ik}(\phi)
\eta_k \gamma^{(1)}_j  +  \sigma_{ij}(\phi) \delta \eta_j\,, \quad & \gamma^{(1)}_i(0) = 0\,,\\
\dot{\zeta}_i \quad = - \partial_i
b_j(\phi) \zeta_j  - \partial_i \sigma_{jk}(\phi) \eta_k  \zeta_j - 
\partial_i \sigma_{jk} (\phi) \delta \eta_k \theta_j &\\
\qquad \quad   - \partial_i 
\partial_j b_k(\phi)
\theta_k \gamma^{(1)}_j - \partial_i \partial_j \sigma_{kl}(\phi) \eta_l  \gamma^{(1)}_j \theta_k\,, \quad &
\zeta_i(T) = \lambda \partial_i \partial_j f(\phi) \gamma^{(1)}_j(T)\,.
\end{cases}
\end{align*}


\section{Parallels to Gauss--Newton approximation}
\label{app:bip:laplace}
Contrasting the decomposition and properties of the second variation~$A_\lambda$ of the noise-to-observable map~$\lambda F$ with the well-known Gauss--Newton approximation of the Hessian in inverse problems in infinite dimensions provides further insights into the construction entering theorem~\ref{thm:prefac}. Consider a Gaussian prior $\pi_{\text{pr}} = {\cal N}\left( m_{\text{pr}} , C_{\text{pr}} \right)$ on a Hilbert space ${\cal H}$, with $C_{\text{pr}}$ symmetric positive definite and TC, and a possibly nonlinear forward map $F \colon {\cal H} \to {\cal H}_{\text{obs}}$ with additive Gaussian noise $\Delta \sim {\cal N}\left( 0, C_{\text{obs}} \right)$ from the parameter $x$ to observations $y = F(x) + \Delta \in {\cal H}_{\text{obs}}$. Then given an observation $y$, the Laplace approximation to the posterior measure~$\pi_{\text{pos}}^y$ with Radon--Nikodym derivative~\cite{stuart:2010}
\begin{align*}
\dv{\pi_{\text{pos}}^y}{\pi_{\text{pr}}} (x)  \propto \exp \left\{-\frac{1}{2} \norm{F(x) - y}^2_{C_{\text{obs}}^{-1}} \right\}
\end{align*}
at the maximum a posteriori (MAP) point $x_{\text{MAP}}$ with
\begin{align*}
x_{\text{MAP}} = \argmin_x \frac{1}{2} \norm{F(x) - y}^2_{C_{\text{obs}}^{-1}} + \frac{1}{2} \norm{x - m_{\text{pr}}}^2_{C_{\text{pr}}^{-1}}
\end{align*}
reads $x \sim {\cal N}\left(x_{\text{MAP}}, C_{\text{pos}} \right)$ with covariance operator $C_{\text{pos}} = C_{\text{pr}}^{1/2} \left[\text{Id}_{\cal H} + B \right]^{-1 }C_{\text{pr}}^{1/2}$, where
\begin{align}
B := C_{\text{pr}}^{1/2} \left(\nabla F\left(x_{\text{MAP}}\right)^\top C_{\text{obs}}^{-1}  \nabla F\left(x_{\text{MAP}}\right) + \left \langle \nabla^2 F\left(x_{\text{MAP}}\right), C_{\text{obs}}^{-1} \left(F\left(x_{\text{MAP}}\right) - y \right) \right \rangle_{{\cal H}_{\text{obs}}}\right) C_{\text{pr}}^{1/2}  \,.
\label{eq:post-covariance}
\end{align}
The properties and possible low-rank approximations of the update operator $B$~\cite{spantini-solonen-cui-etal:2015} are of practical interest here, e.g.\ to sample from the Laplace approximation of the posterior~\cite{bui-tanh-ghattas-martin-etal:2013}, to use Newton Markov Chain Monte Carlo~\cite{martin-wilcox-burstedde-etal:2012}, or to identify a subspace for dimension reduction~\cite{cui-martin-marzouk-etal:2014}. The commonly used Gauss--Newton approximation consists of dropping the second derivative of $F$ and using only
\begin{align*}
B - \tilde{B} = C_{\text{pr}}^{1/2} \nabla F\left(x_{\text{MAP}}\right)^\top C_{\text{obs}}^{-1}  \nabla F\left(x_{\text{MAP} }\right) C_{\text{pr}}^{1/2}\,,
\end{align*}
which should be a good approximation if the mismatch $F\left(x_{\text{MAP}}\right) - y$ is small, and in particular will always define a symmetric positive semidefinite operator, even if used at any $x$ away from the MAP point. By writing the noise-to-observable map~\eqref{eq:noise-to-obs-mult} as $\lambda F = \lambda f \circ \Phi_T$ with $\Phi_T : L^2([0,T], \RR^n) \to \RR^n$ the solution map of the differential equation in~\eqref{eq:noise-to-obs-mult} until time $t = T$, we have
\begin{align*}
A_\lambda = \left( \nabla \Phi_T(\eta_\lambda) \right)^\top \lambda \nabla^2 f(\phi_\lambda(T)) \nabla \Phi_T(\eta_\lambda) + \left \langle  \nabla^2 \Phi_T(\eta_\lambda) ,\lambda \nabla f(\phi_\lambda(T)) \right \rangle\,.
\end{align*}
As we see in appendix~\ref{app:theory}, the operator $\tilde{A}_\lambda$ for multiplicative noise that is responsible for the generically lower regularity of $A_\lambda$ corresponds to (a part of) the second term here, and the TC operator $A_\lambda - \tilde{A}_\lambda$ is hence akin to a Gauss--Newton approximation. The key difference of our setup to inverse problems is that in the latter, it is often (except for questions of optimal experimental design~\cite{alexanderian:2021}) only important that $B$ is compact, meaning that its eigenvalues converge to zero in any way. As an example, we refer to~\cite{bui-thanh-ghattas:2012a,bui-thanh-ghattas:2012b} for an analysis of Hessians in inverse scattering problems from this perspective. Note also the benign influence in the definition of $B$ in~\eqref{eq:post-covariance} of the TC prior covariance preconditioning $C_{\text{pr}}^{1/2} \dots C_{\text{pr}}^{1/2}$, which is absent in the present extreme event setting, where we further require more stringent HS and TC properties of $A_\lambda$ and its constituents beyond compactness.


\section{Theoretical background and derivation of the main result}
\label{app:theory}

In this appendix, we provide both a heuristic motivation, and then a more detailed mathematical derivation, following the seminal work~\cite{ben-arous:1988}, of the general expression~\eqref{eq:tail-prefac-projected} for the leading prefactor in precise Laplace asymptotics~\eqref{eq:tail-asymp} of small multiplicative noise SDEs~\eqref{eq:SDE}. We first focus on the computation of sharp asymptotics of the MGF $J^\eps \colon \RR \to [0, \infty]$ here, instead of tail probabilities~\eqref{eq:tail-asymp}, with
\begin{align}
J^\eps(\lambda) = \EE \left[\exp \left\{\frac{\lambda}{\eps} f\left(X_T^\eps\right) \right\} \right] \overset{\eps \downarrow 0}{\sim} R_\lambda \exp \left\{\frac{1}{\eps} I^*(\lambda) \right\}\,,
\label{eq:mgf-def-asymp}
\end{align}
where $I^*$ denotes the LF transform of $I$. In particular, we will revisit and illustrate key steps in the proof of the following result from~\cite{ben-arous:1988} in this appendix:

\begin{theorem}
\label{thm:mgf-prefac}
The leading-order MGF prefactor~$R_\lambda$ in the asymptotic expansion~\eqref{eq:mgf-def-asymp} is given by
\begin{align}
R_\lambda = {\det}_2\left(\Id - A_\lambda \right)^{-1/2} \exp\left\{\frac12 \trace\left[A_\lambda-\tilde{A}_\lambda\right]\right\}\,,
\label{eq:mgf-prefac}
\end{align}
for the multiplicative noise SDE~\eqref{eq:SDE}, generalizing the corresponding expression for additive noise SDEs~\eqref{eq:sde-additive}
\begin{align*}
R_\lambda = \det \left(\Id - A_\lambda \right)^{-1/2}\,.
\end{align*}
\end{theorem}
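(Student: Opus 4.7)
The plan is to follow the Cameron--Martin/Girsanov strategy of~\cite{ben-arous:1988}: localize the expectation defining $J^\eps(\lambda)$ around the instanton $\eta_\lambda$, Taylor expand to quadratic order, and evaluate the resulting Gaussian integral carefully in infinite dimensions. First I would shift the integration variable by writing $\sqrt{\eps}\xi = \eta_\lambda + \sqrt{\eps}\tilde\xi$ with $\tilde\xi$ a new standard white noise. Using Girsanov's theorem to absorb the shift and the definition $I^*(\lambda) = \lambda F[\eta_\lambda] - \tfrac{1}{2}\|\eta_\lambda\|_{L^2}^2$, one obtains
\begin{align}
J^\eps(\lambda) = \exp\left\{\tfrac{1}{\eps}I^*(\lambda)\right\} \, \EE\left[\exp\left\{\tfrac{\lambda}{\eps}\left(F[\eta_\lambda + \sqrt{\eps}\tilde\xi] - F[\eta_\lambda]\right) - \tfrac{1}{\sqrt{\eps}}\langle \eta_\lambda, \tilde\xi\rangle_{L^2}\right\}\right].
\end{align}
The first-order term of the Taylor expansion of $\lambda F$ around $\eta_\lambda$ cancels exactly the $\langle \eta_\lambda, \tilde\xi\rangle$ term by virtue of the first-order optimality condition $\eta_\lambda = \lambda\,\delta F/\delta\eta\big|_{\eta_\lambda}$, which is the mechanism that removes the divergent $\eps^{-1/2}$ prefactor.

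Next I would carry out the Taylor expansion to quadratic order, so that the remaining exponent becomes $\tfrac{1}{2}\langle \tilde\xi, A_\lambda\tilde\xi\rangle$ plus a stochastic remainder of formal order $\sqrt{\eps}$. Tightness/concentration of $\tilde\xi$ in the small-noise limit, together with the second-order sufficient optimality condition~\eqref{eq:sufficent-optimal} ensuring a convergent Gaussian integral, would reduce the prefactor to $\EE[\exp\{\tfrac{1}{2}\langle\tilde\xi, A_\lambda\tilde\xi\rangle\}]$. In finite dimensions, this would be $\det(\Id - A_\lambda)^{-1/2}$. In our infinite-dimensional setting, however, $A_\lambda$ is only Hilbert--Schmidt, so the ordinary Fredholm determinant is ill-defined. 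The correct regularization is the CF determinant, for which the formal identity $\det(\Id-A_\lambda)^{-1/2} = {\det}_2(\Id - A_\lambda)^{-1/2}\exp\{-\tfrac{1}{2}\trace A_\lambda\}$ motivates the final form of~\eqref{eq:mgf-prefac}. The step to make rigorous is that the would-be divergent trace of $\tilde A_\lambda$, the HS-but-not-TC part of $A_\lambda$, is cancelled by contributions coming from iterated It\^o integrals in the stochastic Taylor expansion, leaving only the well-defined $\trace[A_\lambda - \tilde A_\lambda]$.

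Concretely, the decomposition $A_\lambda = (A_\lambda - \tilde A_\lambda) + \tilde A_\lambda$ mirrors the flow-chart picture in figure~\ref{fig:flowcharts}: the subtrahend $\tilde A_\lambda$ consists exactly of the pieces of the second-variation operator in which input and output are separated by only one integration (the red arrows), whereas $A_\lambda - \tilde A_\lambda$ has two integrations throughout and is therefore trace class. To identify the surviving trace, I would expand $(F[\eta_\lambda + \sqrt{\eps}\tilde\xi] - F[\eta_\lambda])$ stochastically by plugging the shift into the SDE~\eqref{eq:SDE}; the cross-terms involving one factor of $\eta_\lambda$ and one $d\tilde W$, upon taking expectation against the quadratic form, produce iterated It\^o integrals $\int_0^t\int_0^s(\cdots)d\tilde W_u d\tilde W_s$. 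Their Wick/It\^o interpretation is what cancels $\trace \tilde A_\lambda$ against a matching contribution in the Gaussian evaluation, leaving exactly $\exp\{\tfrac{1}{2}\trace[A_\lambda - \tilde A_\lambda]\}$ as the nontrivial exponential correction. This also matches the geometric Brownian motion check of section~\ref{sec:geombb}, where the missing exponential factor in~\eqref{eq:mgf-geom-bm-ex} is precisely of this It\^o--Stratonovich form.

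The hard part will be controlling the stochastic Taylor remainder uniformly in $\eps$ in the infinite-dimensional path space, and in particular justifying the formal bookkeeping that produces the cancellation $\trace A_\lambda \leftrightarrow \trace \tilde A_\lambda$ in a way that does not rely on either trace existing individually. This requires $L^p$ or exponential bounds on the forward flow and its first two Malliavin-type derivatives, together with a careful separation of the HS and TC pieces of $A_\lambda$ so that the Gaussian integral $\EE[\exp\{\tfrac12\langle\tilde\xi, A_\lambda\tilde\xi\rangle\}]$ is interpreted as the ${\det}_2$ formula from the outset. A secondary technical point is that the underlying Gaussian measure on $L^2([0,T],\RR^n)$ is not honest (white noise lives in a larger space), so one must verify that the quadratic forms and their regularized traces are computed with respect to the correct abstract Wiener space structure; this is the content of the key lemmas in~\cite{ben-arous:1988} that the appendix would be unpacking.
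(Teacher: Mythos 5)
Your proposal follows essentially the same route as the paper's proof (and Ben Arous's original): Girsanov shift onto the instanton, stochastic Taylor expansion, cancellation of the $O(\eps^{-1/2})$ term by first-order optimality, reduction of $R_\lambda$ to the exponential moment of a quadratic functional, and the decomposition $A_\lambda = (A_\lambda - \tilde A_\lambda) + \tilde A_\lambda$ mirroring the flow-chart picture. Two points, however, need correcting or sharpening. First, the formal identity you invoke has the wrong sign: since ${\det}_2(\Id - B) = \det(\Id - B)\exp\{\trace B\}$, one has $\det(\Id - A_\lambda)^{-1/2} = {\det}_2(\Id - A_\lambda)^{-1/2}\exp\{+\tfrac12\trace A_\lambda\}$, not $\exp\{-\tfrac12\trace A_\lambda\}$. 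With the correct sign, the bookkeeping does work out: the ``counterterm'' $-\tfrac12\trace\tilde A_\lambda$ coming from the It\^o--Stratonovich correction (equivalently, from the zero-mean property of the iterated-It\^o-integral part) combines with $+\tfrac12\trace A_\lambda$ to give $+\tfrac12\trace[A_\lambda - \tilde A_\lambda]$.

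Second, the paper's mechanism is cleaner than a ``cancellation'' of divergent traces: the second derivative $\hat Q_\lambda = M''(0)$ is an element of $\mathscr{H}_0 \oplus \mathscr{H}_2$, and because the $\nabla\sigma$-terms enter already as iterated It\^o integrals driven by the shifted Brownian motion, the singular part $\hat Q_\lambda^{\text{sing}}$ has \emph{zero mean by construction} (non-anticipating integrand), so $\EE[\hat Q_\lambda] = \trace[A_\lambda - \tilde A_\lambda]$ directly, with no infinite quantities appearing at any stage. What is left unaddressed in your proposal is how to establish rigorously that $A_\lambda - \tilde A_\lambda$ is trace class (it does \emph{not} follow from the flow-chart heuristic alone): the paper invokes Goodman's theorem on abstract Wiener space, reducing the claim to the estimate $\abs{\langle Dg, h\rangle_{\cal H}} \le c\norm{g}_\infty\norm{h}_\infty$, which is exactly where the extra integration in $\gamma^{(1)}, \gamma^{(2)}_{\text{reg}}$ buys the needed regularity and why removing $\tilde A_\lambda$ is essential. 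The final step, evaluating $\EE[\exp\{\tfrac12\hat Q_\lambda\}]$, is then an exact identity (Proposition~\ref{prop:cf-det-expec}) for exponential moments of elements of $\mathscr{H}_0\oplus\mathscr{H}_2$, giving ${\det}_2(\Id - A_\lambda)^{-1/2}\exp\{\EE[\tfrac12\hat Q_\lambda]\}$ in one stroke, rather than a limit of finite-dimensional Gaussian integrals.
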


We address the transformation onto tail probabilities~\eqref{eq:tail-asymp} to deduce the tail probability prefactor~\eqref{eq:tail-prefac-projected} from~\eqref{eq:mgf-prefac} afterwards in subsection~\ref{sec:mgf-to-prob}. We assume in this entire appendix that the observable rate function $I \colon \RR \to \RR$ is strictly convex and differentiable, and hence that $J^\eps(\lambda) < \infty$ for all $\lambda \in \RR$ as $\eps \downarrow 0$. Importantly, there is a bijective relation between $z$ and a $\lambda = \lambda_z$ in this case, such that the small-noise asymptotics of the tail probability $P^\eps(z)$ at any given $z \in \RR$ can be recovered from calculating the $J^\eps(\lambda)$ asymptotics at $\lambda = \lambda_z$. Similarly, as a stronger assumption compared to~\eqref{eq:sufficent-optimal}, we assume that $\text{Id} - A_\lambda$ is positive definite on all of $L^2([0,T],\RR^n)$ in the following. We will comment on these points in subsections~\ref{sec:mgf-to-prob} and~\ref{sec:convex-rf}.\\

Starting with a heuristic motivation of~\eqref{eq:mgf-def-asymp} and~\eqref{eq:mgf-prefac} in subsection~\ref{sec:heur} in the Stratonovich picture, we will see that~\eqref{eq:mgf-prefac} can formally be obtained based on introducing an infinite ``counter-term'' in a functional integral representation of the MGF, which regularizes the Gaussian functional integral in case the second variation operator $A_\lambda$ is only HS instead of TC, as it was for additive noise in~\cite{schorlepp-tong-grafke-etal:2023}. After this intuitive derivation, we will explain certain aspects of the underlying mathematical theory of~\cite{ben-arous:1988}, where theorem~\ref{thm:mgf-prefac} was first stated and proven, to provide a clearer picture why the result~\eqref{eq:mgf-prefac} -- and thus ultimately \eqref{eq:tail-prefac-projected} -- is actually quite natural. This is based on well-known facts from functional analysis and probability theory about (regularized) determinants of $p$-Schatten class operators (see subsection~\ref{sec:operator-dets} and~\cite{simon:1977,dunford-schwartz:1988,simon:2005,mckean:2011}), Wiener--It{\^o} chaos decompositions of random variables (see subsection~\ref{sec:wiener-chaos} and~\cite{janson:1997,nualart:2006}), and the exponential integrability of random variables in the second inhomogeneous Wiener--It{\^o} chaos (see subsection~\ref{sec:exp-int-cf-det} and \cite{janson:1997,grasselli-hurd:2005,glimm-jaffe:2012}). We explicitly introduce and discuss all of these ingredients here for a self-contained presentation.\\

After these prerequisites have been clarified, and still following~\cite{ben-arous:1988}, we first compute the leading-order term in the Laplace asymptotics of the MGF~\eqref{eq:mgf-def-asymp} in subsection~\ref{sec:leading-MGF-contr}, which is a standard result in Freidlin--Wentzell theory in the sense of Varadhan's lemma, and then localize and perform a ``stochastic'' Taylor expansion around the large deviation minimizer in subsection~\ref{sec:loc-taylor-MGF}. At that point, given the prerequisites we recalled in subsections~\ref{sec:operator-dets} to~\ref{sec:exp-int-cf-det}, it will become evident that the main step to obtain~\eqref{eq:tail-prefac-projected} is finding the Wiener--It{\^o} chaos decomposition of the second variation of the noise-to-event map $\lambda F = \lambda f \circ \Phi_T$, evaluated formally at white noise $\xi$, which we denote by $\text{``}\hat{Q}_\lambda = \dd^2 \left(\lambda f \circ \Phi_T\right)(\xi, \xi)\text{''}$. For this, one needs to show that the classical second variation operator $A_\lambda$ is indeed HS -- as we had assumed in the heuristic derivation of~\eqref{eq:mgf-def-asymp} and~\eqref{eq:mgf-prefac} in subsection~\ref{sec:heur} already -- and that the expectation $\EE \left[\hat{Q}_\lambda \right]$ is finite (and in fact given by the trace of the regularized operator $A_\lambda - \tilde{A}_\lambda$). This is detailed in subsection~\ref{sec:q-hat-wiener-chaos}, and, among other things, rests upon Goodman's theorem for abstract Wiener space~\cite{gross:1967,kuo:2006}, which we recall in subsection~\ref{sec:wiener-space-goodman} for this purpose. Putting everything together, we obtain the result~\eqref{eq:mgf-prefac} for the prefactor.

\subsection{Heuristic derivation and intuition via path integrals}
\label{sec:heur}

We motivate equations~\eqref{eq:mgf-def-asymp} and~\eqref{eq:mgf-prefac} for the asymptotic expansion of the MGF including the leading-order prefactor here through formal manipulations of functional integrals. While these calculations are purely formal, they do, however, illustrate the main idea. Following common wisdom, we first convert the It{\^o} SDE~\eqref{eq:SDE} to an equivalent Stratonovich SDE, similar to remark~\ref{rem:ito-strato}, in order to be able to use the standard rules of calculus for the Taylor expansion in the functional integral below. On the level of the noise-to-observable map $F = f \circ \Phi_T$, this means that we replace the solution map $\Phi_T \colon L^2([0,T], \RR^n) \to \RR$ by an $\eps$-dependent map $\hat{\Phi}_T^\eps$ with $\hat{\Phi}^\eps[\eta] = \hat{\phi}^\eps$, which solves
\begin{align*}
\begin{cases}
\dv{}{t} \hat{\phi}^\eps = b \left(\hat{\phi}^\eps \right) - \frac{\eps}{2} \trace \left[ \sigma^\top  \left(\hat{\phi}^\eps \right) \nabla \sigma  \left(\hat{\phi}^\eps \right)\right] + \sigma  \left(\hat{\phi}^\eps \right) \eta\,,\\
\hat{\phi}^\eps(0) = x\,.
\end{cases}
\end{align*}
Similarly, we define the corresponding noise-to-observable map as
$\hat{F}^\eps \colon L^2([0,T],\RR^n) \to \RR$, $\hat{F}^\eps[\eta]\allowbreak =  f\left(\hat{\Phi}^\eps_T[\eta]\right)$.
With this, we formally write the MGF~\eqref{eq:mgf-def-asymp} as an integral
\begin{align}
J^\eps(\lambda) = \EE \left[ \exp \left\{\frac{\lambda}{\eps} f(X^\eps_T) \right\} \right] = \int D \eta \;  \exp \left\{-\frac{1}{\eps} \left(\frac{1}{2}\norm{\eta}_{L^2}^2 - \lambda \hat{F}^\eps[\eta] \right) \right\}
\label{eq:mgf-funcint}
\end{align}
against the non-existent infinite-dimensional Lebesgue measure on path space. Still, we immediately see that the integrand has a form that is amenable to approximating it with Laplace's method as $\eps \downarrow 0$. The large deviation minimizer or instanton (noise) for the MGF at $\lambda$ is hence given by considering the dual objective to~\eqref{eq:min-prob-eta},
\begin{align*}
\eta_\lambda = \argmin_{\eta \in L^2([0,T], \RR^n)} \frac{1}{2}\norm{\eta}_{L^2}^2 - \lambda F[\eta]\,,
\end{align*}
where we used $F[\eta] = \hat{F}^{\eps = 0}[\eta]$. The leading exponential term will then give the LF transform of $I$ at $\lambda$, since
\begin{align}
\begin{aligned}
I^*(\lambda) &= \max_{z \in \RR} \left( \lambda z - I(z) \right) =\max_{z \in \RR} \left( \lambda z - \min_{\eta : F[\eta] = z} \frac{1}{2} \norm{\eta}^2_{L^2}\right) \\&= -\min_{z \in \RR} \min_{\eta : F[\eta] = z} \left( \frac{1}{2} \norm{\eta}^2_{L^2} - \lambda F[\eta]\right) = - \left( \frac{1}{2} \norm{\eta_\lambda}^2_{L^2} - \lambda F[\eta_\lambda] \right)\,.
\end{aligned}
\label{eq:legendre}
\end{align}
Now, expanding $\eta = \eta_\lambda + \sqrt{\eps} \delta \eta$ around the minimizer to second order in the fluctuations $\delta \eta$ in the functional integral representation~\eqref{eq:mgf-funcint} of the MGF yields
\begin{align*}
J^\eps(\lambda) &\overset{\eps \downarrow 0}{\sim} \exp\{+\eps^{-1} I^*(\lambda)\} \int D(\delta \eta) \exp \bigg\{-\frac{1}{2} \bigg(\left \langle \delta \eta, \left[\text{Id} - A_\lambda \right] \delta \eta \right \rangle_{L^2}\\ &\qquad \qquad \qquad - 2 \big \langle \left. \tfrac{\partial}{\partial \left(\sqrt{\eps} \right)} \right|_{\eps = 0} \left(\tfrac{\delta \left( \lambda \hat{F}^\eps \right)}{\delta \eta} \right) [\eta_\lambda], \delta \eta \big \rangle_{L^2}  - \left( \tfrac{\partial^2}{\partial \left(\sqrt{\eps} \right)^2} \left( \lambda \hat{F}^\eps\right)\right)_{\eps = 0}[\eta_\lambda]\bigg) \bigg\}\,,
\end{align*}
where $A_\lambda$ is the second-variation operator defined in~\eqref{eq:second-var-mult} and~\eqref{eq:second-order-adjoint-mult-noise}, without the It{\^o}--Stratonovich correction term since we expand around $\eps = 0$. The last two terms above can be evaluated with the adjoint state method~\cite{plessix:2006} again. This shows that
\begin{align*}
\left. \frac{\partial}{\partial \left(\sqrt{\eps} \right)} \right|_{\eps = 0} \left(\frac{\delta \left( \lambda \hat{F}^\eps \right)}{\delta \eta} \right) [\eta_\lambda] = 0\,,
\end{align*}
which is clear even without calculation, since only a term quadratic in $\sqrt{\eps}$ appeared in the original functional, as well as
\begin{align*}
\left( \frac{\partial^2}{\partial \left(\sqrt{\eps} \right)^2} \left( \lambda \hat{F}^\eps \right) \right)_{\eps = 0}[\eta_\lambda] = -\int_0^T \trace\left[\sigma^\top(\phi_\lambda(t))\nabla \sigma(\phi_\lambda(t)) \theta_\lambda(t) \right] \dd t\,.
\end{align*}
With this, we end up with
$
J^\eps(\lambda) \overset{\eps \downarrow 0}{\sim} R_\lambda \exp \left\{\eps^{-1} I^*(\lambda) \right\}\,,
$
where the leading prefactor is
\begin{align}
R_\lambda = \exp \left\{-\frac{1}{2} \int_0^T \trace\left[\sigma^\top(\phi_\lambda)\nabla \sigma(\phi_\lambda) \theta_\lambda \right] \dd t \right\} \int D(\delta \eta) \exp \left\{-\frac{1}{2} \left \langle \delta \eta, \left[\text{Id} - A_\lambda \right] \delta \eta \right \rangle_{L^2} \right\}\,.
\label{eq:rlambda-funcint}
\end{align}
If we know that the second variation operator $A_\lambda$ is a TC operator, we can simply evaluate
\begin{align*}
\int D(\delta \eta) \exp \left\{-\frac{1}{2} \left \langle \delta \eta, \left[\text{Id} - A_\lambda \right] \delta \eta \right \rangle_{L^2} \right\} = {\det} \left(\text{Id} - A_\lambda \right)^{-1/2}
\end{align*}
in terms of the Fredholm determinant $\det$ and conclude, cf.~\eqref{eq:tail-prefac-projected-mult-tc}. However, if the operator $A_\lambda$ is not TC, such that $\trace \left[ A_\lambda \right]$ is not well-defined, but we do already know that it is still HS (e.g.\ from the intuitive considerations related to figure~\ref{fig:flowcharts}), such that $\trace \left[A_\lambda^p\right] < \infty$ for $p \geq 2$, we may formally add and subtract the counter term $\pm \tfrac{1}{2} \trace \left[ A_\lambda \right]$ in the exponent of~\eqref{eq:rlambda-funcint} to make the determinant well defined. This is because, as detailed in the following subsection, we have
\begin{align}
 {\det} \left(\text{Id} - A_\lambda \right) = \exp \left\{ \trace \left[ \log \left(\text{Id} - A_\lambda\right) \right]\right\} =  \exp \left\{- \sum_{p = 1}^\infty \trace \left[ A_\lambda^p \right]\right\}\,,
 \label{eq:exptrlog-motivation}
\end{align}
and hence the counter-term removes the ill-defined trace of $A_\lambda$ itself, leaving only well-defined traces of higher powers of~$A_\lambda$. Note that this corresponds precisely to \textit{Wick ordering} the exponent~\cite{glimm-jaffe:2012}, and renders
\begin{align*}
&\int D(\delta \eta) \exp \left\{-\frac{1}{2} \left \langle \delta \eta, \left[\text{Id} - A_\lambda \right] \delta \eta \right \rangle - \frac{1}{2} \trace \left[ A_\lambda \right] \right\} \\ &\quad \text{\;``$=$''\;} \left[\det \left(\left(\text{Id} - A_\lambda \right) \exp \left\{A_\lambda \right\} \right) \right]^{-1/2} = {\det}_2 \left(\text{Id} - A_\lambda \right)^{-1/2}
\end{align*}
in~\eqref{eq:rlambda-funcint} finite. Importantly, it turns out that at the same time, the remaining trace term of opposite sign that we introduced in~\eqref{eq:rlambda-funcint} to make the determinant well-defined also results, morally speaking, in
\begin{align}
\exp \left\{\frac{1}{2}\left(\trace \left[ A_\lambda \right]  -  \int_0^T \trace\left[\sigma^\top(\phi_\lambda)\nabla \sigma(\phi_\lambda) \theta_\lambda \right] \dd t \right) \right\}  \text{\;``$=$''\;} \exp \left\{\frac{1}{2} \trace \left[A_\lambda - \tilde{A}_\lambda \right] \right\} < \infty\,,
\label{eq:exp-traatilde-cancellation}
\end{align}
if the renormalization procedure is carried out correctly, as we will see in the following subsections. It yields the trace of a TC operator $A_\lambda - \tilde{A}_\lambda$, exactly removing the part of $A_\lambda$ that leads to the second variation being only HS. Let us briefly show that the identity~\eqref{eq:exp-traatilde-cancellation} above holds at least for TC $\tilde{A}_\lambda$, in order to complete our heuristic reasoning: Introducing the propagator or resolvent ${\cal P}(t,t') = {\cal T} \exp \left\{ \int_{t'}^t L[\eta(t''), \phi(t'')] \dd t''\right\}$ where ${\cal T}$ denotes time-ordering, we can write
\begin{align*}
\gamma^{(1)}(t) &= \int_0^t {\cal P}(t,t') \sigma(\phi(t')) \delta \eta(t') \dd t'\\
\zeta_{\text{sing}}(t) &= \int_t^T {\cal P}^\top(t',t) \left( \nabla \sigma(\phi(t')) \delta \eta(t') \right)^\top \theta(t') \dd t'
\end{align*}
for the evaluation of~$\tilde{A}_\lambda \delta \eta$ according to~\eqref{eq:atilde-return} and~\eqref{eq:a-tilde-adjoint-mult-noise}. Rewriting $\tilde{A}_\lambda \delta \eta(t) = \int_0^T \tilde{\cal A}_\lambda(t,t') \delta \eta(t') \dd t'$ with a symmetric integral kernel $\tilde{\cal A}_\lambda \colon [0,T]^2 \to \RR^{n \times n}$ and assuming that $\tilde{A}_\lambda$ is TC, such that~\cite{brislawn:1988}
\begin{align*}
\trace \left[\tilde{A}_\lambda \right] = \int_0^T \trace \left[\tilde{\cal A}_\lambda \right](t,t) \; \dd t\,,
\end{align*}
enables us to derive~\eqref{eq:trace-atilde-integral} from the explicit form of the integral kernel
\begin{align}
\tilde{{\cal A}}_{\lambda, ij}(t,t') = \begin{cases}
\theta_{\lambda, k}(t) \partial_r \sigma_{ki}(\phi_\lambda(t)) {\cal P}_{rl}(t,t') \sigma_{lj}(\phi_\lambda(t'))\,, \quad & t' < t\,,\\
\theta_{\lambda, k}(t') \partial_r \sigma_{kj}(\phi_\lambda(t')) {\cal P}_{rl}(t',t) \sigma_{li}(\phi_\lambda(t))\,, \quad & t' > t\,.
\end{cases}
\label{eq:a-tilde-kernel}
\end{align}
On a high level, it now only remains to understand exactly (i) why $A_\lambda$ is indeed HS, as then the counter-term is natural to introduce given what we have already learned about CF determinants, and (ii) why the remainder $A_\lambda - \tilde{A}_\lambda$ happens to be TC. Based on the heuristic derivation of this subsection, the end result~\eqref{eq:mgf-prefac} for the MGF prefactor may appear hard to anticipate, and the cancellation which is happening to give both a finite (CF) determinant and trace at the same time may seem almost coincidental. However, treating the same problem in a proper mathematical framework in the following subsection will hopefully make~\eqref{eq:mgf-prefac} -- and~\eqref{eq:tail-prefac-projected} for tail probabilities -- more natural.

\subsection{Prerequisites}

In this subsection, with the aim of providing a self-contained exposition of theorem~\ref{thm:prefac}, we introduce a lot of the (in principle well-known) definitions and underlying theorems that we will need in the subsequent subsections to summarize the arguments of~\cite{ben-arous:1988}.

\subsubsection{Trace-class and Hilbert--Schmidt operators and their determinants}
\label{sec:operator-dets}

Let ${\cal H}$ be a real separable Hilbert space, and $B \colon {\cal H} \to {\cal H}$ a bounded linear and symmetric operator, such that $\left \langle f, B g \right \rangle_{{\cal H}} = \left \langle B f,  g \right \rangle_{{\cal H}}$ for all $f,g \in {\cal H}$. In the concrete setting of this paper, this could correspond to ${\cal H} = L^2([0,T], \RR^n)$, or, in the underlying classical Wiener space setting to be introduced in subsection~\ref{sec:wiener-space-goodman}, the space ${\cal H} = H_0^{1,2}([0,T], \RR^n)$ of absolutely continuous functions with square integrable weak derivative and initial value $0$. In particular, we are exclusively concerned with infinite-dimensional Hilbert spaces in this paper; the following definitions are trivial for $\text{dim}\, {\cal H} < \infty$.

\begin{definition}
We call $B$ trace-class (TC) iff
\begin{align*}
\trace \left[ B \right] := \sum_{i \in I} \left \langle e_i, B e_i \right \rangle_{{\cal H}}
\end{align*}
converges absolutely and independently of the choice of the (at most countable) orthonormal basis $\left(e_i \right)_{i \in I}$ of $\cal H$.
\end{definition}

Denoting the eigenvalues of $B$, potentially repeated according to their multiplicity, by $\mu_i \in \RR$ with corresponding orthonormal eigenvectors $\hat{e}_i \in {\cal H}$ for $i \in I$, we have $\trace \left[ B \right] = \sum_{i \in I} \mu_i \in (-\infty, \infty)$, where the series converges absolutely. We will \textit{always} work under the following assumption in this subsection:

\begin{assumption}
\label{assump-posdef}
The bounded linear and symmetric operator $B\colon {\cal H} \to {\cal H}$ is such that $\text{Id}_{{\cal H}} - B$ is positive definite, or, equivalently, its eigenvalues satisfy $\mu_i \in (-\infty, 1)$ for all $i \in I$.
\end{assumption}

This is because, with regard to theorem~\ref{thm:prefac} and the condition~\eqref{eq:sufficent-optimal}, the (projected and scaled) Hessian $B = B_z$ in the application considered in this paper has to satisfy assumption~\ref{assump-posdef}, which is equivalent to a second-order sufficient optimality condition or ``nondegeneracy'' of the instanton solution. For TC operators, the following definition then makes sense:

\begin{definition}
For TC operators $B$, we define the Fredholm determinant
\begin{align*}
\det \left(\text{Id}_{\cal H} - B \right) := \prod_{i \in I} \left(1 - \mu_i \right) \in (0, \infty)
\end{align*}
in terms of the eigenvalues $\mu_i$ of $B$.
\end{definition}

The convergence of the product in this determinant definition is guaranteed by the sandwich criterion, as we have for $I_0 \subset I$ finite
\begin{align}
\exp \left\{-\sum_{i \in I_0} \frac{\mu_i}{1 - \mu_i} \right\} \leq \prod_{i \in I_0} \left(1 - \mu_i \right) = \exp \left\{\sum_{i \in I_0}  \log \left(1 - \mu_i \right)\right\} \leq \exp \left\{-\sum_{i \in I_0}  \mu_i\right\}\,,
\label{eq:fred-det-converges-sandwich}
\end{align}
and the lower and upper bounds converge since $B$ is TC. Here, we only used the straightforward inequalities $x/(1+x)\leq \log (1+x) \leq x$ for $x \in (-1, \infty)$. TC operators and their Fredholm determinants are what we encountered when considering well-posed additive-noise SDEs in the introduction, following~\cite{schorlepp-tong-grafke-etal:2023}, and studied the second variation or Hessian of the noise-to-observable map $F$ in~\eqref{eq:noise-to-obs-additive}. Furthermore, Fredholm determinants are of general interest in the literature because of their relation to certain probability distributions that arise in random matrix theory and growth processes. In this context, the precise numerical evaluation of Fredholm determinants through quadrature in low-dimensional state spaces is addressed in~\cite{bornemann:2010}. In the high-dimensional state spaces that are considered in~\cite{schorlepp-tong-grafke-etal:2023,burekovic-schaefer-grauer:2024} and others, one typically finds the leading $M$ eigenvalues of $B$ with largest absolute value in a matrix-free way, and approximates
\begin{align*}
\det \left(\text{Id}_{\cal H} - B \right) \approx \prod_{i = 1}^M \left(1 - \mu_i \right)\,.
\end{align*}

Now, in the predator-prey example in the introduction and in the general formulation of theorem~\ref{thm:prefac}, we encounter so-called HS operators:

\begin{definition}
We call $B$ Hilbert--Schmidt (HS) iff
\begin{align*}
\norm{B}^2_{\text{HS}} := \trace \left[B^2 \right] = \sum_{i \in I} \norm{B e_i}^2_{{\cal H}}
\end{align*}
is finite for any orthonormal basis $\left(e_i \right)_{i \in I}$.
\end{definition}

In terms of the eigenvalues $\mu_i$ of $B$, this amounts to demanding that the sequence of eigenvalues is square summable. Obviously, $B$ being TC implies that $B$ is also HS, while the converse is not true. As equation~\eqref{eq:exptrlog-motivation} in the heuristic motivation above suggests, we need another notion of a (regularized) operator determinant to make sense of determinants of HS operators in a nontrivial way:

\begin{definition}
For HS operators $B$, we define the Carleman--Fredholm (CF) determinant
\begin{align*}
{\det}_2 \left(\text{Id}_{\cal H} - B \right) := \det \left(\left(\text{Id}_{\cal H} - B \right) \exp \left\{B \right\} \right) = \prod_{i \in I} \left(1 - \mu_i \right) \exp \left\{\mu_i \right\}
\end{align*}
in terms of the eigenvalues $\mu_i$ of $B$.
\end{definition}

We commented under~\eqref{eq:exptrlog-motivation} that the ``counter-term'' in the exponent removes the ill-defined trace of $B$. More precisely, for $I_0 \subset I$ finite we now have
\begin{align*}
\exp \left\{-\sum_{i \in I_0} \frac{\mu_i^2}{1 - \mu_i} \right\} \leq \prod_{i \in I_0} \left(1 - \mu_i \right) \exp \left\{\mu_i \right\} = \exp \left\{\sum_{i \in I_0}  \log \left(1 - \mu_i \right) + \mu_i\right\} \leq 1
\end{align*} 
using the same straightforward inequalities for the log as in~\eqref{eq:fred-det-converges-sandwich}, such that the CF determinant ${\det}_2 \left(\text{Id}_{\cal H} - B \right)$ indeed converges to a number in $(0,1]$ for HS operators $B$. If $B$ is TC, the CF determinant factorizes, since both terms converge individually, and we have ${\det}_2 \left(\text{Id}_{\cal H} - B \right) = \det \left(\text{Id}_{\cal H} - B \right) \exp \left\{\trace \left[B \right] \right\}$.

\begin{example}
Consider an operator $B$ with eigenvalues $\mu_i = 1 / (2i)^2$ for $i = 1, 2, 3, \dots$. Then $B$ is TC with $\trace \left[B \right] = \pi^2 / 24$ and $\det \left(\text{Id}_{\cal H} - B \right) = 2 / \pi$, and the CF determinant factorizes as ${\det}_2 \left(\text{Id}_{\cal H} - B \right) = \det \left(\text{Id}_{\cal H} - B \right) \exp \left\{\trace \left[B \right] \right\} = 2/\pi \cdot \exp \left\{\pi^2 / 24 \right\}$. Now consider an operator~$B$ with eigenvalues $\mu_i = 1 / 2i$ for $i = 1, 2, 3, \dots$ instead. Then $B$ is only HS, but not TC, and while $\prod_{i=1}^M (1 - \mu_i)$ just converges to $0$ as $M \to \infty$ (and hence does not provide a useful notion of a determinant), the CF determinant is ${\det}_2 \left(\text{Id}_{\cal H} - B \right) = 1 / \sqrt{\pi} \cdot \exp \left\{\gamma / 2 \right\} \in (0,1)$, where $\gamma$ is the Euler--Mascheroni constant.
\end{example}

The numerical computation of CF determinants through quadratures in the context of low-dimensional rough SDEs is addressed in~\cite{friz-gassiat-pigato:2022}. In contrast to this, we will again find the dominant $M$ eigenvalues of $B$ in terms of their absolute value in the following in a matrix-free way, and approximate
\begin{align*}
{\det}_2 \left(\text{Id}_{\cal H} - B \right) \approx \prod_{i =1}^M \left(1 - \mu_i \right) \exp \left\{\mu_i \right\}\,.
\end{align*}

\begin{remark}
One can generalize the definition of ${\det}_2$ in a straightforward way to regularized $p$-determinants ${\det}_p$ for $p$-Schatten class operators with $p = 1,2,3,\dots$. We refer the interested reader to~\cite{simon:2005}, where, furthermore, an example from quantum field theory where ${\det}_3$ appears after renormalization is given. Due to their special role in determining characteristic functions of elements in the second Wiener--It{\^o} chaos, as discussed below, and hence their appearance in theorem~\ref{thm:prefac}, CF determinants ${\det}_2$ are perhaps the most important generalization of Fredholm determinants, certainly for our present application.
\end{remark}

We collect a few standard results about TC and HS operators, which we will need in the following, below. The first one concerns a simple characterization of HS operators on $L^2$-spaces through integral kernels:

\begin{theorem}
\label{thm:hs-kernel}
Let ${\cal H} = L^2\left(\Omega, \RR^n\right)$ for $\Omega \subseteq \RR^d$.
\begin{enumerate}
\item Let ${\cal B} \in L^2\left(\Omega \times \Omega, \RR^{n \times n}\right)$ be symmetric. Then the operator $B \colon {\cal H} \to {\cal H}$, defined via
\begin{align*}
(Bf)(x) = \int_{\Omega} {\cal B}(x,y) f(y) \dd y
\end{align*}
for $f \in {\cal H}$ defines a symmetric HS operator, and $\norm{B}_{\text{HS}} = \norm{{\cal B}}_{L^2\left(\Omega \times \Omega, \RR^{n \times n}\right)}$.
\item Conversely, let $B \colon {\cal H} \to {\cal H}$ be a symmetric HS operator. Then there exists a symmetric integral kernel ${\cal B} \in L^2\left(\Omega \times \Omega, \RR^{n \times n}\right)$ such that $(Bf)(x) = \int_{\Omega} {\cal B}(x,y) f(y) \dd y$ for all $f \in {\cal H}$, and $\norm{B}_{\text{HS}} = \norm{{\cal B}}_{L^2\left(\Omega \times \Omega, \RR^{n \times n}\right)}$.
\end{enumerate}
Hence, the space of symmetric functions in $L^2\left(\Omega \times \Omega, \RR^{n \times n}\right)$, and the space of symmetric HS operators on $L^2\left(\Omega, \RR^n\right)$, are isometrically isomorphic.
\end{theorem}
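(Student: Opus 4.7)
The plan is to reduce the statement to the canonical isometric isomorphism between $L^2(\Omega\times\Omega,\RR^{n\times n})$ and the Hilbert space of HS operators on ${\cal H}=L^2(\Omega,\RR^n)$, realized concretely through a tensor-product basis. First I would fix an orthonormal basis $(e_i)_{i\in I}$ of ${\cal H}$; the family $\{(x,y)\mapsto e_i(x)e_j(y)^\top\}_{i,j\in I}$ is then an orthonormal basis of $L^2(\Omega\times\Omega,\RR^{n\times n})$, as follows from Fubini and the completeness of $(e_i)$. Both directions of the theorem will be read off from expansions in this basis.

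For part (1), given a symmetric ${\cal B}\in L^2(\Omega\times\Omega,\RR^{n\times n})$, the Cauchy--Schwarz inequality together with Fubini gives that ${\cal B}(x,\cdot)\in L^2$ for almost every $x$ and that $B\colon f\mapsto \int_\Omega {\cal B}(\cdot,y)f(y)\,dy$ is a bounded operator on ${\cal H}$. Writing ${\cal B}=\sum_{i,j}c_{ij}\,e_i\otimes e_j$ with $c_{ij}=\langle e_i,Be_j\rangle_{\cal H}$, the symmetry of ${\cal B}$ translates to $c_{ij}=c_{ji}$, so $B$ is symmetric, and Parseval in $L^2(\Omega\times\Omega,\RR^{n\times n})$ yields
\begin{align}
\|{\cal B}\|_{L^2}^2=\sum_{i,j}|c_{ij}|^2=\sum_{i}\|Be_i\|_{\cal H}^2=\|B\|_{\text{HS}}^2\,,
\end{align}
which is finite, proving both HS membership and the norm equality.

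For part (2), given a symmetric HS operator $B$, I set $c_{ij}:=\langle e_i,Be_j\rangle_{\cal H}$. The HS assumption is exactly $\sum_{i,j}|c_{ij}|^2=\|B\|_{\text{HS}}^2<\infty$, so the series $\sum_{i,j}c_{ij}\,e_i\otimes e_j$ converges in $L^2(\Omega\times\Omega,\RR^{n\times n})$ to some ${\cal B}$, with $\|{\cal B}\|_{L^2}=\|B\|_{\text{HS}}$. Symmetry of $B$ gives $c_{ij}=c_{ji}$, hence ${\cal B}$ is symmetric. To check that ${\cal B}$ represents $B$, I would fix $f\in{\cal H}$, expand $f=\sum_j\langle e_j,f\rangle_{\cal H}e_j$, and use continuity of $B$ to write $Bf=\sum_{i,j}c_{ij}\langle e_j,f\rangle_{\cal H}e_i$ in ${\cal H}$; a second Parseval argument identifies this with $\int_\Omega {\cal B}(\cdot,y)f(y)\,dy$ in $L^2$.

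The main technical obstacle is the last step, namely justifying the interchange of the double sum and the integral in identifying the operator defined by the kernel ${\cal B}$ with the one reconstructed from the matrix elements $c_{ij}$. I would handle this by first verifying the identity on finite linear combinations of basis vectors, where everything is literally a finite sum, and then passing to the $L^2$-limit using that both the map $f\mapsto \int {\cal B}(\cdot,y)f(y)\,dy$ and the map $f\mapsto \sum_{i,j}c_{ij}\langle e_j,f\rangle_{\cal H}e_i$ are bounded with operator norm controlled by $\|{\cal B}\|_{L^2}=\|B\|_{\text{HS}}$; alternatively, once the kernel series is known to converge in $L^2$, one can apply Fubini directly. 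Uniqueness of the kernel (up to almost-everywhere equivalence) is automatic from uniqueness of the tensor-basis expansion, so the assignment ${\cal B}\leftrightarrow B$ is well-defined and isometric in both directions.
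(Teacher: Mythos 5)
Your proof is correct and gives a genuinely different route from the paper's. The paper's proof of part 2 invokes the spectral theorem for compact self-adjoint operators: since $B$ is symmetric and HS, it is compact, so one writes $B = \sum_{i\in I}\mu_i\,\hat e_i\langle\hat e_i,\cdot\rangle$ in an eigenbasis and sets ${\cal B}=\sum_{i\in I}\mu_i\,\hat e_i\otimes\hat e_i$; convergence of this single-indexed series in $L^2(\Omega\times\Omega,\RR^{n\times n})$ is exactly $\sum_i\mu_i^2<\infty$, and symmetry of the kernel is manifest. You instead work with a general orthonormal basis $(e_i)$ of ${\cal H}$ and the matrix coefficients $c_{ij}=\langle e_i,Be_j\rangle$, constructing ${\cal B}=\sum_{i,j}c_{ij}\,e_i\otimes e_j$. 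This avoids any appeal to the spectral theorem and makes the isometric isomorphism with the tensor-product basis of $L^2(\Omega\times\Omega,\RR^{n\times n})$ fully explicit, at the cost of a double-indexed series and the extra step of verifying that the two definitions of $B$ agree (which you handle correctly by checking finite linear combinations and passing to the $L^2$-limit using boundedness). One small ordering issue in your part 1: you derive $c_{ij}=c_{ji}$ from the symmetry of ${\cal B}$ and then infer symmetry of $B$, but to write $c_{ij}=\langle e_i,Be_j\rangle = \langle Be_i,e_j\rangle$ you are already using symmetry of $B$ — it is cleaner to first verify $\langle f,Bg\rangle=\langle Bf,g\rangle$ directly from the kernel symmetry ${\cal B}(x,y)={\cal B}(y,x)^\top$ via Fubini, and then read off $c_{ij}=c_{ji}$ as a consequence. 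This is cosmetic and does not affect the correctness of the argument.
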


This so-called Hilbert--Schmidt kernel theorem typically makes it straightforward to prove that a given operator on $L^2\left(\Omega, \RR^n\right)$ is HS if the associated kernel can be computed, whose square integrability is then usually easy to see. For the proof of theorem~\ref{thm:hs-kernel}, we only remark that 1.\ follows by computation and using Fubini's theorem, while for 2.\ one sets ${\cal B} = \sum_{i\in I} \mu_i \hat{e}_i \otimes \hat{e}_i$ in terms of the eigenvalues and orthonormal eigenfunctions of $B$.\\

On the other hand, showing that a given operator $B$ on ${\cal H} = L^2\left(\Omega, \RR^n\right)$ is TC is often harder. If we already now that $B$ is TC and has an associated integral kernel ${\cal B}$, then it is known that the following holds~\cite{brislawn:1988}:

\begin{theorem}
Let ${\cal B}$ be the integral kernel of a TC operator $B \colon L^2\left(\Omega, \RR^n\right) \to L^2\left(\Omega, \RR^n\right)$. Then
\begin{align}
\trace \left[B \right] = \int_{\Omega} \lim_{r \downarrow 0} \ \trace\left( \mathrm{boxav}_r \left(  {\cal B} \right) \right)(x,x) \; \dd x\,,
\label{eq:trace-integral-boxav}
\end{align}
where $\mathrm{boxav}_r \left(  {\cal B} \right) (x,x) $ denotes the box average over a box with side lengths $2r$ centered at $(x,x)$.
\end{theorem}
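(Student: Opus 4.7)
The plan is to follow Brislawn's strategy: factor the symmetric TC operator through a Hilbert--Schmidt square root, identify a canonical $L^1$ representative of the diagonal of the kernel whose integral gives $\trace[B]$, and then invoke the Lebesgue differentiation theorem to show the box averages converge to this representative almost everywhere.

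First I would reduce to the case of positive TC operators. The spectral decomposition $B = \sum_i \mu_i\, \hat e_i \otimes \hat e_i$ with $\sum_i |\mu_i| < \infty$ splits as $B = B_+ - B_-$ with both $B_\pm \geq 0$ and TC, and linearity of kernels, traces, and the box-averaging map reduces the claim to the case $B \geq 0$. For such $B$, take $T := B^{1/2}$, which is HS since $\|T\|_{\mathrm{HS}}^2 = \trace[T^2] = \trace[B] < \infty$. By theorem~\ref{thm:hs-kernel}, $T$ has a symmetric kernel $\mathcal{T} \in L^2(\Omega\times\Omega, \RR^{n\times n})$, and the kernel of $B = T^2$ is given by the composition
\begin{equation}
\mathcal{B}(x,y) = \int_\Omega \mathcal{T}(x,z)\,\mathcal{T}(z,y)\,\dd z,
\end{equation}
with the integral absolutely convergent for a.e.\ $(x,y)$ by Cauchy--Schwarz applied to $\mathcal{T}(x,\cdot)$ and $\mathcal{T}(\cdot,y)$.

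Next I would identify a canonical representative of the diagonal. Using the symmetry $\mathcal{T}(z,x) = \mathcal{T}(x,z)^\top$, we have
\begin{equation}
\trace\bigl[\mathcal{B}(x,x)\bigr] = \int_\Omega \|\mathcal{T}(x,z)\|_F^2\,\dd z,
\end{equation}
which is nonnegative and lies in $L^1(\Omega)$, with $\int_\Omega \trace[\mathcal{B}(x,x)]\,\dd x = \|\mathcal{T}\|_{L^2}^2 = \|T\|_{\mathrm{HS}}^2 = \trace[B]$ by Fubini. This yields the correct numerical value of the trace, and the remaining task is to match this with the stated box-average limit.

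Finally, I would show the box average converges to this representative pointwise almost everywhere. Writing out
\begin{equation}
\trace\bigl(\mathrm{boxav}_r(\mathcal{B})\bigr)(x,x) = \int_\Omega \trace\!\left[\bigl(\mathrm{av}_r \mathcal{T}(\cdot,z)\bigr)(x)\,\bigl(\mathrm{av}_r \mathcal{T}(z,\cdot)\bigr)(x)\right]\,\dd z,
\end{equation}
where $\mathrm{av}_r f(x) := (2r)^{-d} \int_{B_r(x)} f(y)\,\dd y$, section-wise Lebesgue differentiation applied in $x$ to $\mathcal{T}(\cdot,z) \in L^2 \subset L^1_{\mathrm{loc}}$ yields pointwise a.e.\ convergence of the integrand to $\|\mathcal{T}(x,z)\|_F^2$. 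The Hardy--Littlewood maximal inequality controls $\sup_r |\mathrm{av}_r \mathcal{T}(\cdot, z)|$ by an $L^2$-in-$x$ majorant, which after Cauchy--Schwarz yields a dominating $L^1$ function, allowing two successive dominated convergence arguments to exchange $\lim_{r\downarrow 0}$ with both integrals. The main obstacle is precisely this final passage: the diagonal $\{(x,x)\} \subset \Omega\times\Omega$ has Lebesgue measure zero in the ambient space, so direct differentiation of $\mathcal{B}$ as an $L^1_{\mathrm{loc}}(\Omega\times\Omega)$ function provides no information on the diagonal, and one must exploit the composition structure $\mathcal{B} = \mathcal{T}\circ\mathcal{T}$ together with maximal-function estimates to make sense of the pointwise diagonal values at all.
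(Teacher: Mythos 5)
The paper cites but does not reproduce Brislawn's proof, and your proposal reconstructs its essential strategy: factor $B$ through Hilbert--Schmidt operators, read off an $L^1$ diagonal from the kernel composition integral, and combine Lebesgue differentiation with Hardy--Littlewood maximal bounds to show the box averages converge to that diagonal almost everywhere. You correctly identify the crux --- the diagonal is a null set in $\Omega\times\Omega$, so the factored structure is indispensable and a naive appeal to Lebesgue differentiation of $\mathcal{B}$ itself would give nothing. Two points deserve more care. First, your reduction $B = B_+ - B_-$ and the choice $T = B^{1/2}$ tacitly assume $B$ symmetric, which the statement of the theorem does not; Brislawn's argument covers arbitrary TC operators by writing $B = ST$ for two HS operators (e.g.\ $S = U|B|^{1/2}$ and $T = |B|^{1/2}$ from the polar decomposition), so that $\mathcal{B}(x,y) = \int_\Omega \mathcal{S}(x,z)\mathcal{T}(z,y)\,\dd z$; the diagonal $\int_\Omega \trace[\mathcal{S}(x,z)\mathcal{T}(z,x)]\,\dd z$ is then no longer manifestly nonnegative, but the same maximal-function machinery applies after an extra Cauchy--Schwarz bound by $M\mathcal{S}(x,z)\,M\mathcal{T}(x,z)$. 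For the paper's symmetric projected Hessians your restricted version suffices, but the general statement is worth retaining. Second, the ``section-wise Lebesgue differentiation'' step needs a Tonelli argument to upgrade ``for each fixed $z$, a.e.\ $x$'' to ``for a.e.\ $x$, a.e.\ $z$''; once that is in place, the $L^2\to L^2$ boundedness of the maximal operator supplies, via Cauchy--Schwarz, the $L^1(\dd z)$ dominator (for a.e.\ $x$) and then the $L^1(\dd x)$ dominator needed to pass the limit through both integrals. You gesture at both of these correctly, so the gaps are of the routine, fill-in-the-details variety rather than conceptual.
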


The problem here is of course that a priori ${\cal B}$ is not even defined pointwise and hence this box averaging procedure will generally be necessary. Conversely, showing that an operator $B$, potentially with known kernel ${\cal B}$, is TC in the first place is more difficult. A classical Fourier series example given by Carleman~\cite{carleman:1917} shows that it is not enough for ${\cal B}$ to be continuous and $\Omega$ compact (which renders the right-hand side of~\eqref{eq:trace-integral-boxav} finite) to conclude that the trace of the operator $B$ is finite. In line with this and for the present application, it is easily shown below in subsection~\ref{sec:q-hat-wiener-chaos} from the HS kernel theorem~\ref{thm:hs-kernel} that~$\tilde{A}_\lambda$ is HS, while proving that $A_\lambda - \tilde{A}_\lambda$ is TC makes use of the special problem structure in the form of Goodman's theorem for abstract Wiener space, cf.\ subsection~\ref{sec:wiener-space-goodman}.

\subsubsection{Wiener--It{\^o} chaos decomposition}
\label{sec:wiener-chaos}

We briefly introduce the notion of the Wiener--It{\^o} chaos decomposition of a real-valued random variable that depends on an isonormal Gaussian process here. The latter will be given by a standard $n$-dimensional Brownian motion $W = \left(W_t \right)_{t \in [0,T]}$ for our application, the former by (the Taylor expansion of) the noise-to-observable map $F$. We can think of this as a useful choice of basis for the space of ${\cal G}$-measurable random variables, where ${\cal G}$ is the sigma algebra generated by $W$. This choice of basis will make the subsequent calculation of the expectation that determines the MGF prefactor~$R_\lambda$ easier. This subsection mainly follows~\cite{nualart:2006}, where the corresponding proofs can be found.

\begin{definition}
Let ${\cal H}$ be a real separable Hilbert space with inner product $\left \langle \cdot, \cdot \right \rangle_{\cal H}$. A stochastic process $W = \{W(h) \mid h \in {\cal H} \}$ indexed by ${\cal H}$ on a probability space $(\Omega, {\cal F}, \PP)$ is called isonormal Gaussian process if the real-valued random variables~$W(h)$ are Gaussian with
\begin{align}
\EE \left[W(h) \right] = 0 \text{ and } \EE \left[W(h) W(g) \right] = \left \langle h, g \right \rangle_{\cal H}
\label{eq:isonorm-def}
\end{align}
for all $g, h \in H$.
\end{definition}

The definition implies that $h \mapsto W(h)$ is linear, and hence defines a linear isometry from ${\cal H}$ onto a subspace $\mathscr{H}_1 \subset L^2(\Omega, {\cal G}, \PP)$. The most important example for us is indeed given by a Brownian motion, for which ${\cal H} = L^2 \left([0,T], \RR^n \right)$. Writing $W_i(t) := W\left(\mathds{1}_{[0,t] \times \{i\}}  \right)$ for an isonormal Gaussian process $W$ indexed by this ${\cal H}$, the random variables $W_i(t)$ will need to be centered Gaussian by definition, and furthermore
\begin{align*}
\EE \left[ W_i(t) W_j(s) \right] = \delta_{ij} \int_0^T \mathds{1}_{[0,t]}(t') \mathds{1}_{[0,s]}(t') \dd t' = \min \{t,s \} \delta_{ij}
\end{align*}
by~\eqref{eq:isonorm-def}. By finding a continuous version of this process, we see that $W$, constructed in this way, defines a standard $n$-dimensional Brownian motion, with $W(h) = \int_0^T h(t) \dd W_t$ for a general $h \in L^2\left([0,T], \RR^n \right)$.\\

Because we deal with Gaussian processes, a basis in terms of the (probabilist's) Hermite polynomials $\text{He}_m$ turns out to be convenient, since they are orthogonal with respect to the Gaussian weights. More precisely, we use the following

\begin{definition}
For $m = 0, 1, 2, \dots$, we define $\mathrm{He}_m \colon \RR \to \RR$ through the generating function
\begin{align*}
\exp \left\{tx - \frac{t^2}{2} \right\} = \sum_{m = 0}^\infty \frac{t^m}{m!} \mathrm{He}_m(x)\,,
\end{align*}
for all $x, t \in \RR$. Explicitly, $\mathrm{He}_0(x) = 1$, $\mathrm{He}_1(x) = x$, $\mathrm{He}_2(x) = x^2 - 1 $, and so forth.
\end{definition}

and have the following key property, which follows immediately from the generating function:

\begin{lemma}
Let $X,Y$ be bivariate centered Gaussian with $\EE\left[ X^2 \right] = \EE \left[Y^2 \right] = 1$. Then
\begin{align*}
\EE \left[\mathrm{He}_m(X) \mathrm{He}_l(Y) \right] = \begin{cases}
0\,, \quad & m \neq l\,,\\
m! \EE \left[XY \right]^m\,, \quad & m = l\,.
\end{cases}
\end{align*}
\end{lemma}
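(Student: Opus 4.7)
The plan is to exploit the generating function identity for the Hermite polynomials exactly as the lemma statement hints. Let $\rho = \EE[XY]$, and introduce dummy variables $s, t \in \RR$. I would multiply the generating functions for $\mathrm{He}_m(X)$ and $\mathrm{He}_l(Y)$ and take expectations, with the goal of matching coefficients on both sides.

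First I would expand
\begin{align}
\exp\!\left\{sX - \tfrac{s^2}{2}\right\} \exp\!\left\{tY - \tfrac{t^2}{2}\right\} = \sum_{m,l \geq 0} \frac{s^m t^l}{m!\, l!}\, \mathrm{He}_m(X)\, \mathrm{He}_l(Y)\,,
\end{align}
and then take expectations termwise (this is justified since, for $s,t$ in a neighborhood of $0$, the bivariate series is absolutely convergent in $L^1(\PP)$ by Gaussian integrability of $X,Y$, so Fubini applies). On the other hand, since $sX + tY$ is centered Gaussian with variance $s^2 + 2st\rho + t^2$, the Gaussian moment generating function gives
\begin{align}
\EE\!\left[\exp\!\left\{sX + tY - \tfrac{s^2+t^2}{2}\right\}\right] = \exp\{st\rho\} = \sum_{m\geq 0} \frac{(st\rho)^m}{m!}\,.
\end{align}

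The main step is then to equate the two expressions and match coefficients of $s^m t^l$. The left-hand side contributes $\EE[\mathrm{He}_m(X)\mathrm{He}_l(Y)]/(m!\,l!)$, while the right-hand side contributes $\rho^m/m!$ if $m = l$ and $0$ otherwise. This immediately yields $\EE[\mathrm{He}_m(X)\mathrm{He}_l(Y)] = 0$ for $m \neq l$ and $\EE[\mathrm{He}_m(X)\mathrm{He}_m(Y)] = m!\, \rho^m$, as claimed.

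The only technical point worth checking carefully is the termwise expectation / coefficient-matching step, but this is standard: the double series on the left is a formal power series identity between two real-analytic functions of $(s,t)$ in a neighborhood of the origin, so matching Taylor coefficients is rigorous. No further subtlety is required, and the lemma follows directly.
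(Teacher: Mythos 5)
Your argument is correct and is precisely the generating-function computation the paper has in mind (the paper simply asserts the lemma ``follows immediately from the generating function''): multiplying the two generating functions, using $\EE\left[\exp\left\{sX+tY\right\}\right]=\exp\left\{\tfrac{1}{2}\left(s^2+2st\rho+t^2\right)\right\}$ to get $\exp\{st\rho\}$, and matching coefficients of $s^m t^l$. The justification of termwise expectation via Gaussian integrability is a reasonable way to make the ``immediate'' step rigorous, so nothing further is needed.
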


\begin{definition}
For all $m \in \NN$, we define the $m$-th (homogeneous) Wiener chaos $\mathscr{H}_m \subset L^2(\Omega, {\cal G}, \PP)$ as the closed linear subspace generated by $\{\mathrm{He}_m(W(h)) \mid h \in {\cal H}\}$. In particular, $\mathscr{H}_0$ is the set of constant random variables, $\mathscr{H}_1$ is the set of random variables generated by $\{W(h) \mid h \in {\cal H}\}$, and so forth.
\end{definition}

The most important property of these subspaces is the following decomposition:

\begin{theorem}
The Wiener chaos subspaces give an orthogonal decomposition $L^2(\Omega, {\cal G}, \PP) = \bigoplus_{m = 0}^\infty \mathscr{H}_m$.
\end{theorem}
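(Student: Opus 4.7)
The plan is to establish the two content-bearing claims separately: first the orthogonality $\mathscr{H}_m \perp \mathscr{H}_l$ in $L^2(\Omega,\mathcal{G},\mathbb{P})$ for $m\neq l$, and then the density of $\bigoplus_{m\geq 0}\mathscr{H}_m$ in $L^2(\Omega,\mathcal{G},\mathbb{P})$, which is the nontrivial ingredient.

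For orthogonality, I would argue by a standard polarization/closure reduction. The subspace $\mathscr{H}_m$ is by definition the $L^2$-closure of the linear span of $\{\mathrm{He}_m(W(h)):h\in\mathcal{H}\}$, so it suffices to verify $\mathbb{E}[\mathrm{He}_m(W(h))\mathrm{He}_l(W(g))]=0$ for all $h,g\in\mathcal{H}$ when $m\neq l$. If $h$ or $g$ is zero this is immediate; otherwise, normalize to $X=W(h)/\|h\|_{\mathcal{H}}$, $Y=W(g)/\|g\|_{\mathcal{H}}$, both standard Gaussians, and apply the lemma stated just above the theorem, which gives vanishing off-diagonal expectations. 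Use bilinearity of $\mathrm{He}_m$ in its normalization factor (coming from the fact that Hermite polynomials are homogeneous of degree $m$) and pass to limits to conclude orthogonality of the closed spans.

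For completeness, the standard route is via the Wick exponentials $\mathcal{E}(h):=\exp\{W(h)-\tfrac{1}{2}\|h\|_{\mathcal{H}}^2\}$ for $h\in\mathcal{H}$. The generating function identity for Hermite polynomials, applied with $t=1$ and $x=W(h)$, yields the $L^2$-convergent expansion
\begin{equation}
\mathcal{E}(h)=\sum_{m=0}^{\infty}\frac{1}{m!}\mathrm{He}_m(W(h))\,\|h\|_{\mathcal{H}}^{m?}
\end{equation}
(with the precise scaling coming from plugging $tx\mapsto W(h)$ and $t^2\mapsto\|h\|_{\mathcal{H}}^2$ into the generating function, so actually $\mathcal{E}(h)=\sum_m \tfrac{1}{m!}\mathrm{He}_m(W(h)/\|h\|_{\mathcal{H}})\,\|h\|_{\mathcal{H}}^m$ after a short computation). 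Each summand lies in $\mathscr{H}_m$, so $\mathcal{E}(h)\in\overline{\bigoplus_m\mathscr{H}_m}$ for every $h$. It then remains to show the exponentials $\{\mathcal{E}(h):h\in\mathcal{H}\}$ are total in $L^2(\Omega,\mathcal{G},\mathbb{P})$. The usual argument takes $X\in L^2(\Omega,\mathcal{G},\mathbb{P})$ with $\mathbb{E}[X\,\mathcal{E}(h)]=0$ for all $h$, specializes to finite linear combinations $h=\sum_{j=1}^{N}t_j h_j$ with $h_j$ fixed, and observes that the resulting function
\begin{equation}
(t_1,\ldots,t_N)\mapsto \mathbb{E}\bigl[X\exp\bigl\{\textstyle\sum_j t_j W(h_j)\bigr\}\bigr]
\end{equation}
vanishes identically. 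Analyticity in $t$ and unicity of moment-generating functions of Gaussian vectors imply $\mathbb{E}[X\,p(W(h_1),\ldots,W(h_N))]=0$ for every polynomial $p$; since polynomials are dense in $L^2$ with respect to the Gaussian law of $(W(h_1),\ldots,W(h_N))$, we get $\mathbb{E}[X\mid\sigma(W(h_1),\ldots,W(h_N))]=0$. Letting the finite collections exhaust a countable generating family of $\mathcal{G}$ and applying the martingale convergence theorem gives $X=0$ almost surely, which is the completeness claim.

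The main obstacle is the totality of the Wick exponentials, since orthogonality is an essentially mechanical consequence of the Hermite identity already recorded in the excerpt. Within totality, the delicate step is the passage from $\mathbb{E}[X\,\mathcal{E}(h)]=0$ to $X=0$: one must ensure that the collection $\{W(h):h\in\mathcal{H}\}$ actually generates $\mathcal{G}$ (which holds by the definition of $\mathcal{G}$ as the $\sigma$-algebra generated by the isonormal process) and then invoke density of polynomials in Gaussian $L^2$, which rests on the Gaussian moments growing slowly enough that Hermite polynomials form an orthonormal basis — a fact implicit in the one-dimensional Hermite theory and extendable to finitely many variables by tensorization. Everything else is routine once these points are in place.
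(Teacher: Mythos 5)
The paper does not prove this theorem; it states it as a known fact and refers the reader to Nualart's book, so there is no internal proof to compare against. Your overall strategy — orthogonality from the Hermite covariance lemma, completeness from totality of Wick exponentials combined with density of polynomials in Gaussian $L^2$ and martingale convergence over an exhausting family of finite-dimensional $\sigma$-algebras — is exactly the standard textbook argument, and the completeness half is sound.

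There is, however, a genuine error in your orthogonality step. You write that Hermite polynomials ``are homogeneous of degree $m$'' and use this to reduce $\mathrm{He}_m(W(h))$ for general $h$ to the unit-variance case $\mathrm{He}_m(W(h)/\|h\|_{\mathcal H})$. This is false: $\mathrm{He}_2(x)=x^2-1$ is manifestly not homogeneous, and in general $\mathrm{He}_m(cx)$ is not $c^m\mathrm{He}_m(x)$ but a linear combination of $\mathrm{He}_k(x)$ with $k\le m$, $k\equiv m\ (\mathrm{mod}\ 2)$. Consequently there is no ``bilinearity in the normalization factor'' that lets you pass from the lemma to orthogonality of $\mathrm{He}_m(W(h))$ and $\mathrm{He}_l(W(g))$ for arbitrary $h,g$; indeed, with the paper's stated definition taken literally (allowing arbitrary $h\in\mathcal H$), orthogonality actually fails, since e.g.\ $\mathbb{E}[\mathrm{He}_2(W(h))]=\|h\|_{\mathcal H}^2-1\neq0$ puts $\mathscr{H}_2\not\perp\mathscr{H}_0$. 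The correct fix, which is what Nualart does, is to define $\mathscr{H}_m$ as the closed span of $\{\mathrm{He}_m(W(h)):h\in\mathcal H,\ \|h\|_{\mathcal H}=1\}$; the paper's definition silently assumes this. With that normalization the lemma gives orthogonality directly and no reduction argument is needed — and, conveniently, your Wick-exponential expansion already lands in these normalized terms $\mathrm{He}_m(W(h)/\|h\|_{\mathcal H})$, so the completeness half needs no change.
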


It is also straightforward to give an orthonormal basis of $\mathscr{H}_m$ to make this more explicit. We use $\odot$ to denote the symmetrized tensor product in the following.

\begin{proposition}
\label{prop:wiener-iso}
There is an isometry $I_m \colon {\cal H}^{\odot m} \to \mathscr{H}_m$ that provides an orthonormal basis of the $m$-th Wiener--It{\^o} chaos~$\mathscr{H}_m$ as follows:
Let $\left(e_i \right)_{i \in I}$ be an orthonormal basis of ${\cal H}$, and let $a = \left( a_i \right)_{i \in I}$ with $a_i \in \NN$ for all $i \in I$ be a multi-index with $\sum_{i \in I} a_i = m$, then $I_m$ is defined on basis elements of ${\cal H}^{\odot m}$ as
\begin{align}
I_m\left(\bigodot_{i \in I} e_i^{\otimes a_i}\right) = \prod_{i \in I}  \left( \tfrac{1}{a_i!} \mathrm{He}_{a_i} \left(W(e_i)\right) \right)\,,
\label{eq:wiener-iso-def}
\end{align}
and extended linearly to general symmetric tensors.
\end{proposition}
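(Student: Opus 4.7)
The plan is to define $I_m$ first in a coordinate-free way using the Wick exponential, then verify that this specializes to the basis formula~\eqref{eq:wiener-iso-def}, and finally identify its image with $\mathscr{H}_m$. Concretely, for $h \in {\cal H}$ with $\norm{h}_{\cal H} = 1$ I would set $I_m(h^{\otimes m}) := \mathrm{He}_m(W(h))$, extended by homogeneity to $I_m(h^{\otimes m}) = \norm{h}^m \mathrm{He}_m(W(h)/\norm{h})$; equivalently, $I_m(h^{\otimes m})$ is characterized as the coefficient of $t^m/m!$ in the generating-function identity $\exp\{tW(h) - \tfrac{1}{2} t^2 \norm{h}^2\} = \sum_{m \geq 0} \frac{t^m}{m!} I_m(h^{\otimes m})$. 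Since the rank-one symmetric tensors $\{h^{\otimes m} : h \in {\cal H}\}$ span ${\cal H}^{\odot m}$ by polarization, linear extension then determines $I_m$ on the full symmetric tensor power.

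Next I would establish the isometry by computing $\EE[I_m(h^{\otimes m}) I_m(g^{\otimes m})]$. Since $(W(h), W(g))$ is jointly centered Gaussian with covariance $\left \langle h, g \right \rangle_{\cal H}$, multiplying the two generating functions and taking the expectation yields
\begin{align}
\EE\bigl[\exp\{sW(h) - \tfrac{1}{2}s^2\norm{h}^2\} \exp\{tW(g) - \tfrac{1}{2}t^2\norm{g}^2\}\bigr] = \exp\{st \left \langle h, g \right \rangle_{\cal H}\}\,,
\end{align}
and extracting the $(st)^m$-coefficient gives $\EE[I_m(h^{\otimes m}) I_m(g^{\otimes m})] = m! \left \langle h, g \right \rangle_{\cal H}^m$, which matches the standard inner product on ${\cal H}^{\odot m}$ (with the normalization $\norm{h^{\otimes m}}^2_{{\cal H}^{\odot m}} = m! \norm{h}^{2m}$ used e.g.\ in~\cite{nualart:2006}). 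The explicit basis formula~\eqref{eq:wiener-iso-def} is then recovered by writing $h = \sum_i c_i e_i$ with $\sum_i c_i^2 = \norm{h}^2$ and factoring the generating function using independence of the $W(e_i)$'s (which follows from orthonormality of the $e_i$'s and the isonormality of $W$): $\exp\{tW(h) - \tfrac{1}{2} t^2 \norm{h}^2\} = \prod_i \sum_{a_i \geq 0} \frac{(t c_i)^{a_i}}{a_i!} \mathrm{He}_{a_i}(W(e_i))$, whose $t^m$-coefficient, compared with the multinomial expansion of $h^{\otimes m}$ in the tensor basis $\bigodot_i e_i^{\otimes a_i}$, reproduces~\eqref{eq:wiener-iso-def}.

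Finally, surjectivity onto $\mathscr{H}_m$ is essentially built into the construction: by definition every generator $\mathrm{He}_m(W(h))$ of $\mathscr{H}_m$ lies in the range of $I_m$, and since $I_m$ is an isometry from a complete space, its range is automatically closed and hence equals $\mathscr{H}_m$. I expect no conceptual obstacle here; the only genuine work is careful bookkeeping of the combinatorial factors $m!$ and $\prod_i a_i!$, together with the convention on $\left \langle \cdot, \cdot \right \rangle_{{\cal H}^{\odot m}}$, so that the tensor side and the Hermite side agree in~\eqref{eq:wiener-iso-def}. Verifying that the generating-function characterization of $I_m(h^{\otimes m})$ is internally consistent (and in particular independent of any representation choice) is likewise a standard exercise in Gaussian analysis, based on the orthogonality relation $\EE[\mathrm{He}_m(X)\mathrm{He}_l(Y)] = \delta_{ml}\, m!\, \EE[XY]^m$ for jointly standard Gaussian $X,Y$.
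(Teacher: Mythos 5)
The paper does not prove this proposition itself; it is stated as a background fact with a pointer to Nualart's book, and so there is no ``paper proof'' for comparison. Your sketch -- rank-one definition of $I_m$ via the Hermite generating function, Gaussian exponential-moment computation for the isometry, polarization to extend linearly, factorization over independent $W(e_i)$'s to recover the basis formula, and closed-range plus density of Hermite generators for surjectivity -- is the standard argument in this setting and is mathematically sound. One point worth double-checking before you trust~\eqref{eq:wiener-iso-def} literally: carrying out your own factorization step carefully (expand $h = \sum_i c_i e_i$, read off the $t^m$-coefficient, and compare with the multinomial expansion $h^{\otimes m} = \sum_a \tfrac{m!}{\prod_i a_i!}\left(\prod_i c_i^{a_i}\right)\bigodot_i e_i^{\otimes a_i}$ in the orbit-average convention for $\bigodot$) yields $I_m\bigl(\bigodot_i e_i^{\otimes a_i}\bigr) = \prod_i \mathrm{He}_{a_i}(W(e_i))$, \emph{without} the $\tfrac{1}{a_i!}$ factors that appear in the paper's displayed formula. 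Whether that discrepancy is real or just reflects a different normalization of $\bigodot$ depends on a convention the paper never pins down, so you should state which normalization of the symmetric tensor product (and of $\|\cdot\|_{{\cal H}^{\odot m}}$) you are using and verify that the $a_i!$'s come out consistently; everything else in your plan goes through.
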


Importantly, for $W$ a Brownian motion, the maps $I_m$ are exactly the (iterated) It{\^o} integrals against $W$, in the sense that for $f_m \colon [0,T]^m \to \RR^{n \times n \times \dots \times n} \cong \RR^{(n^m)}$ totally symmetric and square integrable
\begin{align*}
I_m(f_m) = \sum_{i_1, \dots, i_m = 1}^n \int_0^T \int_0^{t_m} \dots \int_{0}^{t_2} f_{m, i_1, \dots, i_m}(t_1, \dots, t_m) \dd W_{t_1}^{i_1} \dots \dd W_{t_m}^{i_m}\,.
\end{align*}
Here, the right-hand side is not just notation, but iterated It{\^o} integrals from stochastic integration theory. In particular, by the HS kernel theorem~\ref{thm:hs-kernel}, the coefficient function $f_2 \colon [0,T]^2 \to \RR^{n \times n}$ for the second homogeneous Wiener chaos component of a given ${\cal G}$-measurable random variable $Y$ will always define a HS operator with $f_2$ as its integral kernel. Conversely, any symmetric HS operator on $L^2\left([0,T], \RR^n \right)$ defines a random variable in the second homogeneous Wiener chaos $\mathscr{H}_2$.\\

More concretely, we remark that since $L^2(\Omega, {\cal G}, \PP)$ decomposes into the direct orthogonal sum of the Wiener chaos subspaces~$\mathscr{H}_m$, we can decompose any real-valued random variable $Y$ that is measurable with respect to the sigma algebra generated by a Brownian motion in terms of its chaos components:
\begin{align*}
Y = \sum_{m = 0}^\infty I_m\left(f_m^{Y} \right)\,,
\end{align*}
where we emphasize the dependence on $Y$. To find the (deterministic) coefficient functions $f_m^{Y} \colon\allowbreak [0,T]^m \to \RR^{n \times n \times \dots \times n}$ for a given~$Y$, one can either explicitly rewrite $Y$ in terms of iterated integrals, or, as e.g.\ in~\cite{grasselli-hurd:2005}, proceed via Fr{\'e}chet derivatives of the generating functional $Z_Y \colon {\cal H} \to \RR$ with
\begin{align}
Z_Y(h) = \EE \left[ Y \exp \left\{ W(h) - \frac{1}{2} \int_0^T \norm{h(t)}^2 \dd t \right\} \right]\,,
\label{eq:gen-func-chaos}
\end{align}
for which
$f_m^Y = \delta^m Z_Y / \delta h^m$ at $h = 0$.

\subsubsection{Exponentials of the second Wiener--It{\^o} chaos component and Carleman--Fredholm determinants}
\label{sec:exp-int-cf-det}

At this point, it should be clear that to prove~\eqref{eq:mgf-prefac} for the leading-order MGF tail prefactor $R_\lambda$, one performs a Taylor expansion of the noise-to-observable map $F$ up to second order around the large deviation minimizer, and then uses a Wiener-It{\^o} chaos decomposition of the resulting real-valued random variable $Y = \tfrac{1}{2}\hat{Q}_\lambda$ (later defined in~\eqref{eq:qhat}) in the exponent of $R_\lambda = \EE \left[\exp \left\{Y \right\} \right]$. Since the Taylor expansion terminates at second order, we will have $Y \in \mathscr{H}_0 \oplus \mathscr{H}_2$, with $f_1^Y = 0$ because of stationarity, where it then remains to show that the zeroth chaos component $f_0^Y = \EE \left[ Y \right]$ is indeed finite, and that $f_2^Y$ is square-integrable. We have the following proposition that allows for calculating $R_\lambda$ from this decomposition, which has repeatedly appeared throughout the literature~\cite{janson:1997,ben-arous:1988,dunford-schwartz:1988,grasselli-hurd:2005,ferreiro-castilla-utzet:2011,inahama:2013}:

\begin{proposition}
\label{prop:cf-det-expec}
Let $Y \in \mathscr{H}_0 \oplus \mathscr{H}_2$ with
\begin{align*}
Y = \EE \left[Y \right] + \int_{0}^T \int_0^{t_2} {\cal B}_{ij}(t_1,t_2) \dd W^i_{t_1} \dd W^j_{t_2}\,,
\end{align*}
where the symmetric HS integral operator $B$ associated with the integral kernel ${\cal B}$ is such that $\Id - B$ is positive definite (cf.\ assumption~\ref{assump-posdef}). Then
\begin{align*}
\EE \left[\exp \left\{Y \right\} \right] = {\det}_2 \left(\Id - B \right)^{-1/2} \exp \left\{\EE \left[Y \right] \right\}\,.
\end{align*}
\end{proposition}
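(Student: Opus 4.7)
\textbf{Proof plan for Proposition \ref{prop:cf-det-expec}.}

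The plan is to diagonalize the problem using the spectral decomposition of $B$, reduce the expectation to a product of one-dimensional Gaussian integrals, and recognize the result as a Carleman--Fredholm determinant. First I would invoke the spectral theorem for the compact self-adjoint operator $B$ on ${\cal H} = L^2([0,T],\RR^n)$: since $B$ is HS, there exist an orthonormal basis $(\hat e_i)_{i\in I}$ of eigenfunctions and real eigenvalues $\mu_i$ with $\sum_i \mu_i^2 = \norm{B}_{\text{HS}}^2 < \infty$, and the positive definiteness of $\Id - B$ (assumption~\ref{assump-posdef}) ensures $\sup_i \mu_i \leq 1 - c$ for some $c > 0$. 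The kernel expands as ${\cal B} = \sum_i \mu_i \, \hat e_i \otimes \hat e_i$ with convergence in $L^2([0,T]^2, \RR^{n\times n})$, equivalently in the HS norm.

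Next I would translate this into the chaos decomposition. Set $\xi_i = W(\hat e_i)$; by the defining property~\eqref{eq:isonorm-def} of the isonormal process, $(\xi_i)_{i\in I}$ is an i.i.d.\ family of standard Gaussians. By proposition~\ref{prop:wiener-iso} applied with $m=2$ and the multi-index concentrated at $i$,
\begin{align}
I_2(\hat e_i \otimes \hat e_i) = \tfrac{1}{2!}\text{He}_2(\xi_i) = \tfrac{1}{2}(\xi_i^2 - 1)\,.
\end{align}
Since $I_2\colon {\cal H}^{\odot 2} \to \mathscr{H}_2$ is an isometry and the HS series for ${\cal B}$ converges in the symmetric tensor norm, applying $I_2$ termwise gives
\begin{align}
Y - \EE[Y] \;=\; I_2({\cal B}) \;=\; \tfrac{1}{2}\sum_{i\in I} \mu_i (\xi_i^2 - 1)
\end{align}
with convergence in $L^2(\Omega, \PP)$; this identifies the double It\^o integral in the statement with a sum of recentered chi-squares.

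The third step is the actual moment computation. For finite $N$, independence of the $\xi_i$ and the classical Gaussian identity $\EE[\exp\{a\xi^2\}] = (1-2a)^{-1/2}$ for $a<1/2$ give
\begin{align}
\EE\left[\exp\left\{\tfrac{1}{2}\sum_{i=1}^N \mu_i(\xi_i^2-1)\right\}\right] = \prod_{i=1}^N \frac{e^{-\mu_i/2}}{\sqrt{1-\mu_i}} = \prod_{i=1}^N \bigl[(1-\mu_i)\,e^{\mu_i}\bigr]^{-1/2}\,,
\end{align}
all factors being finite and positive because $\mu_i \leq 1-c < 1$. Using $\log[(1-\mu_i)e^{\mu_i}] = \mu_i + \log(1-\mu_i) = O(\mu_i^2)$ uniformly in $i$ (again thanks to $\sup_i\mu_i < 1$), and $\sum_i \mu_i^2 < \infty$, the infinite product converges to a number in $(0,\infty)$, which by definition equals ${\det}_2(\Id - B)^{-1/2}$.

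It remains to pass from the truncated to the full identity, which I expect to be the only genuine technical point. I would justify the interchange of limit and expectation by showing that the partial sums $Y_N - \EE[Y_N] := \tfrac{1}{2}\sum_{i\leq N}\mu_i(\xi_i^2-1)$ converge to $Y-\EE[Y]$ both almost surely (along a subsequence, by $L^2$-convergence) and in $L^1(\Omega,\exp\{\cdot\})$. A clean way is to exhibit a uniform bound $\EE[\exp\{(1+\delta)(Y_N-\EE[Y_N])\}] \leq C_\delta < \infty$ for some $\delta>0$, which holds by repeating the factorization above with $(1+\delta)\mu_i$ in place of $\mu_i$: this is still admissible provided $(1+\delta)\sup_i \mu_i < 1$, which holds for small $\delta$ by the spectral gap assumption. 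Uniform integrability of $\exp\{Y_N - \EE[Y_N]\}$ then yields convergence of expectations, and multiplying by $\exp\{\EE[Y]\}$ gives the stated formula $\EE[\exp\{Y\}] = {\det}_2(\Id - B)^{-1/2}\exp\{\EE[Y]\}$. The only place where positive definiteness (rather than mere invertibility) of $\Id - B$ is really used is precisely in securing this uniform integrability.
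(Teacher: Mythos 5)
Your proof follows essentially the same route as the paper's: spectral decomposition of the symmetric HS operator $B$, identification of $I_2({\cal B})$ with $\tfrac{1}{2}\sum_i \mu_i(\xi_i^2-1)$ via proposition~\ref{prop:wiener-iso} and the isonormal property, termwise Gaussian integration, and matching the resulting product to the definition of ${\det}_2$. The one point where you go beyond the paper is the uniform-integrability argument (using the spectral gap $\sup_i\mu_i<1$, which holds since $B$ is compact with $\mu_i<1$ for all $i$) to justify passing the infinite product through the expectation; the paper treats this interchange as immediate, so your added care is a genuine, correct tightening rather than a different method.
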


\begin{proof}
Using the isometry from proposition~\ref{prop:wiener-iso} and decomposition of ${\cal B}$ in terms of eigenvalues $\mu_i$ and orthonormal eigenfunctions $\hat{e}_i$ of $B$, we have
\begin{align*}
I_2\left({\cal B} \right) \overset{\text{linear}}{=} \sum_{i = 1}^\infty \mu_i \, I_2\left(\hat{e}_i \otimes \hat{e}_i \right)\overset{\eqref{eq:wiener-iso-def}}{=} \frac{1}{2} \sum_{i = 1}^\infty \mu_i \, \mathrm{He}_2 \left(\int_0^T \left \langle \hat{e}_i(t) , \dd W_t \right \rangle \right) \overset{d}{=} \frac{1}{2} \sum_{i = 1}^\infty \mu_i \left(\xi_i^2 - 1 \right)
\end{align*}
with $\xi_i = \int_0^T \left \langle \hat{e}_i(t) , \dd W_t \right \rangle$ independent and standard normally distributed. But then
\begin{align*}
\EE \left[\exp \left\{Y \right\} \right] &= \exp \left\{\EE \left[Y \right] \right\} \prod_{i = 1}^\infty \EE \left[\exp \left\{ \frac{1}{2} \mu_i \left(\xi_i^2 - 1 \right) \right\} \right] \\
&= \exp \left\{\EE \left[Y \right] \right\} \prod_{i = 1}^\infty  \exp \left\{-\frac{1}{2} \mu_i \right\} \frac{1}{\sqrt{2 \pi}}\int_{-\infty}^\infty \exp \left\{ -\frac{1}{2} \left(1-\mu_i\right) \xi_i^2\right\} \dd \xi_i \\
&= \exp \left\{\EE \left[Y \right] \right\} \prod_{i = 1}^\infty  \exp \left\{-\frac{1}{2} \mu_i \right\} \frac{1}{\sqrt{1 - \mu_i}} = {\det}_2 \left(\Id - B \right)^{-1/2} \exp \left\{\EE \left[Y \right] \right\}
\end{align*}
by the definition and convergence of the CF determinant from subsection~\ref{sec:operator-dets}.
\end{proof}

\subsubsection{Classical and abstract Wiener space and Goodman's theorem}
\label{sec:wiener-space-goodman}

The general setting for the following calculations can be formulated as an abstract Wiener space~\cite{gross:1967,kuo:2006} (even though we will in fact only need the \textit{classical} Wiener space here). Let
\begin{enumerate}
\item ${\cal H}$ be a real, separable Hilbert space with inner product $\left \langle \cdot, \cdot \right \rangle_{\cal H}$ and norm $\norm{\cdot}_{\cal H}$ -- this is the Cameron--Martin space,
\item $\mathscr{B} \overset{i}{\hookleftarrow} {\cal H}$ be a Banach space, the completion of ${\cal H}$ with respect to a measurable (with respect to a Gaussian measure on~$\mathscr{B}$) semi-norm $\norm{\cdot}_{\mathscr{B}}$ -- this is the ambient space on which the Wiener measure can actually be defined,
\item $\mathscr{B}^*$ be the continuous dual space of $\mathscr{B}$, which by restriction of domain and the Riesz representation theorem is a subset of ${\cal H}$.
\end{enumerate}
Then the tuple $(i, {\cal H}, \mathscr{B})$ is called abstract Wiener space; schematically $\mathscr{B}^* \overset{r}{\rightarrow} {\cal H}^* \simeq {\cal H} \overset{i}{\hookrightarrow} \mathscr{B}$. The most important example, and indeed the one that we will exclusively consider in the following, is the classical Wiener space, where
\begin{enumerate}
\item ${\cal H} = H_{0}^{1,2}\left([0,T],\RR^n \right)$ with $\left \langle g,h \right \rangle_1 = \int_0^T \left \langle \dot{g}_t, \dot{h}_t \right \rangle \dd t$, i.e.\ the space of absolutely continuous functions with initial value $0$ whose weak derivative is square integrable,
\item $\mathscr{B} = C_0\left([0,T],\RR^n \right)$ with $\norm{h}_\infty = \sup_{t \in [0,T]} \norm{h_t}$, i.e.\ the space of continuous functions with initial value $0$ with the supremum norm.
\end{enumerate}

In our present setting, the solution map for the SDE~\eqref{eq:SDE} is for example actually defined as $\Phi \colon {\cal H} \to {\cal H}$, $h \mapsto \phi$ where $\dd \phi_t = b(\phi_t) \dd t + \sigma(\phi_t) \dd h_t$ with $\phi_0 = 0$. The (scaled) noise-to-observable map on the Cameron--Martin space is then defined as $\lambda F = \lambda \tilde{f} \circ \Phi \colon {\cal H} \to \RR$ with $\tilde{f} \colon {\cal H} \to \RR$, $\phi \mapsto \tilde{f}(\phi) =f(\phi_T)$. Its second variation operator maps $A_\lambda \colon {\cal H} \to {\cal H}$ by definition. Note that for the following subsections, we will work in terms of these \textit{redefinitions} of the noise-to-observable map $F$ and operators $A_\lambda$ and $\tilde{A}_\lambda$, now defined on ${\cal H}$, instead of $L^2([0,T],\RR^n)$ in the introduction and heuristic motivation in subsection~\ref{sec:heur}.\\

To show that part of the Hessian $A_\lambda - \tilde{A}_\lambda \colon {\cal H} \to {\cal H}$ is TC below in subsection~\ref{sec:q-hat-wiener-chaos}, we need Goodman's theorem for abstract Wiener space~\cite{kuo:2006}

\begin{theorem}
\label{thm:goodman}
A continuous linear operator $D \colon \mathscr{B} \to \mathscr{B}^*$ is TC when its domain is restricted to~${\cal H}$.
\end{theorem}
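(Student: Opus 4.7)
The strategy is to exploit the factorization $D|_{\cal H} = \iota^* D \iota$, where $\iota \colon {\cal H} \hookrightarrow \mathscr{B}$ is the continuous inclusion and $\iota^* \colon \mathscr{B}^* \hookrightarrow {\cal H}$ its Riesz-dual, and to combine this with the fact that for an abstract Wiener space the inclusion $\iota$ has very strong compactness properties -- ultimately inherited from the measurability of the seminorm $\|\cdot\|_{\mathscr{B}}$ -- which will force $D|_{\cal H}$ to be trace class.

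First I would record the bilinear estimate that follows immediately from the continuity hypothesis: for all $x,y \in {\cal H}$,
\begin{align}
|\langle D \iota x, \iota y \rangle_{\cal H}| = |(D \iota x)(\iota y)| \leq \|D\|_{\mathscr{B} \to \mathscr{B}^*} \, \|x\|_{\mathscr{B}} \, \|y\|_{\mathscr{B}},
\end{align}
using that $\mathscr{B}^*$ sits inside ${\cal H}$ by Riesz, so the ${\cal H}$-inner product coincides with the dual pairing on these elements. This reduces the question to a statement purely about a bilinear form on ${\cal H}$ dominated by the ambient Banach seminorm.

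The deeper step, and what I expect to be the main obstacle, is converting this seminorm bound into a \emph{trace-class} statement on ${\cal H}$. For this I would invoke the Gross factorization theorem, which is the nontrivial functional-analytic input behind the whole abstract Wiener space framework: because $\|\cdot\|_{\mathscr{B}}$ is a measurable seminorm on ${\cal H}$ (equivalently, the canonical Gaussian cylinder measure on ${\cal H}$ extends to a Radon Gaussian measure on $\mathscr{B}$, as is built into the definition of abstract Wiener space), the embedding admits a factorization $\iota = V \circ S$ with $S \colon {\cal H} \to K$ Hilbert--Schmidt into some auxiliary Hilbert space $K$ and $V \colon K \to \mathscr{B}$ bounded. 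Taking adjoints, $\iota^* = S^* V^*$ with $S^* \colon K \to {\cal H}$ likewise Hilbert--Schmidt. Verifying (or citing) this factorization is the only place where the abstract Wiener structure, as opposed to pure operator theory, really enters.

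With the factorization in hand, the conclusion is formal via the Schatten ideal calculus. Writing
\begin{align}
D|_{\cal H} = \iota^* D \iota = S^* \bigl(V^* D V\bigr) S,
\end{align}
the middle factor $V^* D V \colon K \to K$ is bounded, and $S, S^*$ are Hilbert--Schmidt, so that $D|_{\cal H}$ belongs to the Schatten class product $\mathscr{S}_2 \cdot \mathscr{S}_\infty \cdot \mathscr{S}_2 \subset \mathscr{S}_1$, i.e.\ it is trace class on ${\cal H}$. This ideal inclusion itself is a one-line consequence of the H\"older-type bound $\|AB\|_{\mathscr{S}_1} \leq \|A\|_{\mathscr{S}_2} \|B\|_{\mathscr{S}_2}$. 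All of the routine bookkeeping -- symmetry, absolute convergence of the trace, independence of the chosen orthonormal basis -- then follows from the standard properties of the trace-class ideal recalled in subsection~\ref{sec:operator-dets}.
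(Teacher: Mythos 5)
The paper does not prove Theorem~\ref{thm:goodman}; it only cites~\cite{gross:1967,kuo:2006} and then proves Corollary~\ref{cor:goodman} from it. Your proof attempt therefore supplies something the paper omits, but it hinges on a factorization that is false. The claim that the embedding $\iota\colon{\cal H}\hookrightarrow\mathscr{B}$ factors as $\iota = V\circ S$ with $S\colon{\cal H}\to K$ Hilbert--Schmidt and $V\colon K\to\mathscr{B}$ bounded would force, for \emph{every} orthonormal basis $(e_n)$ of ${\cal H}$, the bound $\sum_n\|\iota e_n\|_{\mathscr{B}}^2\leq\|V\|^2\|S\|_{\mathrm{HS}}^2<\infty$. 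This already fails for the classical Wiener space ${\cal H}=H_0^{1,2}([0,1],\RR)$, $\mathscr{B}=C_0([0,1],\RR)$: the Faber--Schauder functions $e_{j,k}(t)=\int_0^t h_{j,k}(s)\,\dd s$, with $h_{j,k}$ the $L^2$-orthonormal Haar functions, form (together with $e_0(t)=t$) an orthonormal basis of ${\cal H}$ with $\|e_{j,k}\|_\infty=2^{-j/2-1}$, and since there are $2^j$ of them at dyadic level $j$, $\sum_{j\geq0}\sum_{0\leq k<2^j}\|e_{j,k}\|_\infty^2=\sum_{j\geq0}2^j\cdot2^{-j-2}=\infty$. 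The abstract Wiener space embedding is $\gamma$-radonifying, which is strictly weaker than the $2$-summing (Hilbert--Schmidt-factorable) property your argument requires; there is no ``Gross factorization theorem'' of the form you invoke.

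Because the sandwich $D|_{\cal H}=S^*(V^*DV)S$ is unavailable, the Schatten-ideal step $\mathscr{S}_2\cdot\mathscr{S}_\infty\cdot\mathscr{S}_2\subset\mathscr{S}_1$ never applies. The actual proof is probabilistic rather than operator-theoretic: for $D$ symmetric and positive, one exchanges sum and expectation to get $\trace\left[D|_{\cal H}\right]=\sum_n(D\iota e_n)(\iota e_n)=\EE\left[(DW)(W)\right]\leq\|D\|_{\mathscr{B}\to\mathscr{B}^*}\,\EE\left[\|W\|_{\mathscr{B}}^2\right]$, where $W=\sum_n\xi_n\,\iota e_n$ is the canonical $\mathscr{B}$-valued Gaussian and $\EE\left[\|W\|_{\mathscr{B}}^2\right]<\infty$ by Fernique's theorem; reducing the general symmetric case to the positive one is where the measurable-norm structure of the abstract Wiener space is genuinely used, cf.~\cite{kuo:2006}. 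Your opening estimate $|\langle D\iota x,\iota y\rangle_{\cal H}|\leq\|D\|\,\|x\|_{\mathscr{B}}\|y\|_{\mathscr{B}}$ is correct, but by itself it only yields a bounded bilinear form on $\mathscr{B}\times\mathscr{B}$ and is precisely the hypothesis of Corollary~\ref{cor:goodman}, whose proof in the paper deduces trace-class \emph{from} Theorem~\ref{thm:goodman}, not the other way around.
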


or more precisely the following corollary, for $D = A_\lambda - \tilde{A}_\lambda$:

\begin{corollary}
\label{cor:goodman}
Let $D \colon {\cal H} \to {\cal H}$ be a symmetric linear bounded operator (which, notably, is a priori only defined on ${\cal H}$). Assume that it satisfies the inequality
\begin{align}
\abs{\left \langle D g, h \right \rangle_{\cal H}} \leq c \cdot \norm{g}_{\mathscr{B}} \cdot \norm{h}_{\mathscr{B}}
\label{eq:quadrat-form-on-H}
\end{align}
for the bilinear form $g,h \mapsto \left \langle D g, h \right \rangle_{\cal H}$ on ${\cal H} \times {\cal H}$ for any $g, h \in {\cal H}$, where the constant $c \geq 0$ does not depend on $g$ and $h$. Then~$D$ is TC.
\end{corollary}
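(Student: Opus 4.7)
The plan is to reduce the statement to Goodman's theorem~\ref{thm:goodman} by extending $D$ to a continuous linear map $\bar{D} \colon \mathscr{B} \to \mathscr{B}^*$. The hypothesis~\eqref{eq:quadrat-form-on-H} is exactly the $\mathscr{B}$-continuity estimate needed to perform this extension by density, using that ${\cal H}$ is continuously and densely embedded in~$\mathscr{B}$.

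First, I would fix $g \in {\cal H}$ and consider the linear functional $\ell_g \colon {\cal H} \to \RR$ defined by $\ell_g(h) = \left \langle D g, h \right \rangle_{\cal H}$. Assumption~\eqref{eq:quadrat-form-on-H} gives $\abs{\ell_g(h)} \leq c \norm{g}_{\mathscr{B}} \norm{h}_{\mathscr{B}}$, so $\ell_g$ is continuous with respect to the $\mathscr{B}$-norm on its domain. Since ${\cal H}$ is dense in $\mathscr{B}$, $\ell_g$ extends uniquely to an element $\tilde{\ell}_g \in \mathscr{B}^*$ satisfying $\norm{\tilde{\ell}_g}_{\mathscr{B}^*} \leq c \norm{g}_{\mathscr{B}}$. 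The assignment $g \mapsto \tilde{\ell}_g$ then defines a linear operator $\hat{D} \colon {\cal H} \to \mathscr{B}^*$ which is bounded when ${\cal H}$ is equipped with the $\mathscr{B}$-norm; by density of ${\cal H}$ in $\mathscr{B}$ once more, $\hat{D}$ extends uniquely to a continuous linear map $\bar{D} \colon \mathscr{B} \to \mathscr{B}^*$ of operator norm at most $c$.

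Applying Goodman's theorem~\ref{thm:goodman} to $\bar{D}$ shows that its restriction $\bar{D}\big|_{\cal H}$, viewed as an operator from ${\cal H}$ into ${\cal H}$ through the Riesz identification $r \colon \mathscr{B}^* \to {\cal H}$, is trace-class. To finish, I would verify that this restricted operator coincides with the original~$D$: by construction, for every $g, h \in {\cal H}$ one has $\bar{D}(g)(h) = \ell_g(h) = \left \langle D g, h \right \rangle_{\cal H}$, which means that $r(\bar{D}(g)) = D g$ as elements of~${\cal H}$. Hence $D$ is trace-class. I do not expect any genuine obstacle here: the entire content of the corollary is the observation that~\eqref{eq:quadrat-form-on-H} is precisely the analytic condition that makes Goodman's theorem applicable, after one uses density to move from the Cameron--Martin space to the ambient Banach space. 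Symmetry of $D$ is not needed for the argument itself; it only ensures that the bilinear form appearing in~\eqref{eq:quadrat-form-on-H} is symmetric without further comment.
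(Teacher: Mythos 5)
Your proof is correct and takes essentially the same route as the paper: the paper extends the bilinear form $(g,h)\mapsto\langle g,Dh\rangle_{\cal H}$ in both arguments at once to a continuous bilinear form on $\mathscr{B}\times\mathscr{B}$, then takes the partial insertion map $\mathscr{B}\to\mathscr{B}^*$, whereas you extend first in $h$ (for fixed $g$) and then in $g$ -- a cosmetic reordering of the same density argument, leading in both cases to a continuous map $\mathscr{B}\to\mathscr{B}^*$ to which Goodman's theorem is applied. Your closing remark that symmetry is not actually used in the argument is also accurate.
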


\begin{proof}
Assuming~\eqref{eq:quadrat-form-on-H} holds,
we can extend the bilinear form $(g, h) \mapsto \left \langle g, Dh \right \rangle_{\cal H}$, defined on ${\cal H} \times {\cal H}$, to a continuous bilinear form ${\cal Q} \colon \mathscr{B} \times \mathscr{B} \to \RR$ by density of ${\cal H}$ in $\mathscr{B}$.
Explicitly, for $\mathfrak{g}, \mathfrak{h} \in \mathscr{B}$, we can find sequences $(g_n)_{n \in \NN}, (h_n)_{n \in \NN} \subset {\cal H}$,
such that $g_n \to \mathfrak{g}$ and $h_n \to \mathfrak{h}$ with respect to $\norm{\cdot}_\mathscr{B}$.
Then we set ${\cal Q}\left(\mathfrak{g}, \mathfrak{h} \right) = \lim_{n\to \infty} \left \langle g_n, D h_n \right \rangle_H$,
which is well-defined and continuous because of~\eqref{eq:quadrat-form-on-H}. Now, the partial insertion map ${\cal Q}_1 \colon \mathscr{B} \to \mathscr{B}^*$, $\mathfrak{g} \mapsto {\cal Q}_1(\mathfrak{g}) = {\cal Q}\left(\mathfrak{g}, \cdot \right)$
defines a map into the (continuous) dual space $\mathscr{B}^*$ of $\mathscr{B}$,
and the map ${\cal Q}_1$ itself is bounded and linear, where linearity is clear and boundedness follows from
\begin{align*}
\norm{{\cal Q}_1(\mathfrak{g})}_{\mathscr{B}^*} = \inf \left\{ c' \geq 0 \mid \forall \mathfrak{h} \in \mathscr{B}: \; \abs{{\cal Q}_1(\mathfrak{g})(\mathfrak{h})} \leq c' \norm{\mathfrak{h}}_\mathscr{B} \right\} \leq c \norm{\mathfrak{g}}_\mathscr{B}\,.
\end{align*}
Then, by Goodman's theorem~\ref{thm:goodman}, restricting ${\cal Q}_1$ to ${\cal H}$ yields a TC operator, but when we are already on ${\cal H}$, ${\cal Q}_1(g) = \left \langle D g, \cdot \right \rangle_{\cal H}$ by construction, so after the Riesz isometry, we conclude that $D$ itself is TC.
\end{proof}

We remark that, as we will see, the inequality~\eqref{eq:quadrat-form-on-H} relates exactly to the reasoning presented in figure~\ref{fig:flowcharts} as to why $A_\lambda - \tilde{A}_\lambda$ is TC. This is because cutting off the dashed red arrows in figure~\ref{fig:flowcharts} to get from~$A_\lambda$ to $A_\lambda - \tilde{A}_\lambda$ makes the estimate~\eqref{eq:quadrat-form-on-H} possible in the first place, as there are in some sense enough time derivatives in the resulting expressions.

\subsection{Leading order contribution of minimizer}
\label{sec:leading-MGF-contr}

First, we note that the leading-order exponential term in the asymptotic expansion~\eqref{eq:mgf-def-asymp} is directly obtained from Varadhan's lemma applied to the family of $C_0([0,T],\RR^n)$-valued random variables $(\sqrt{\eps} W)_{\eps > 0}$. This family of Brownian motions satisfies a large deviation principle as $\eps \downarrow 0$ with Schilder's rate function $S[h] = \tfrac{1}{2} \norm{\dot{h}}_{L^2}^2 \mathds{1}_{h \in {\cal H}} + \infty  \mathds{1}_{h \not \in {\cal H}}$, where ${\cal H} = H_{0}^{1,2}\left([0,T],\RR^n \right)$ as in the previous subsection. With this, we immediately get
\begin{align*}
\lim_{\eps \downarrow 0} \left( \eps \log J^\eps(\lambda) \right) = \lim_{\eps \downarrow 0} \left( \eps \log \EE \left[\exp \left\{\frac{\lambda}{\eps} f\left(X_T^\eps\right) \right\} \right] \right) = - \inf_{h \in {\cal H}} \left( S[h] - \lambda F[h] \right) \overset{\eqref{eq:legendre}}{=} I^*(\lambda)\,,
\end{align*}
by following the same line of reasoning as in subsection~\ref{sec:heur}. We denote the large deviation minimizer for this problem, which we assume to exist, by $h_\lambda \in {\cal H}$, with $\eta_\lambda = \dot{h}_\lambda \in L^2([0,T],\RR^n)$ in the notation of the introduction.

\subsection{Taylor expansion around minimizer}
\label{sec:loc-taylor-MGF}

Skipping a number of technical details and localization arguments of~\cite{ben-arous:1988}, the first step in the proof of the MGF prefactor formula~\eqref{eq:mgf-prefac}, as the next-to-leading-order correction to the result from Varadhan's lemma, is now to shift the Brownian motion $\sqrt{\eps} W$ of the SDE~\eqref{eq:SDE} by the large deviation minimizer $h_\lambda$ using Girsanov's theorem. This leads to
\begin{align*}
J^\eps(\lambda) &=  \EE \left[\exp \left\{\frac{\lambda}{\eps} f\left(X_T^\eps\right) \right\} \right]\\&=  \EE \left[\exp \left\{\frac{\lambda}{\eps} f\left(Z_T^\eps\right) \right\} \exp \left\{ -\frac{1}{\sqrt{\eps}} \int_0^T \left \langle \dot{h}_\lambda, \dd W \right \rangle - \frac{1}{2 \eps} \int_0^T \norm{\dot{h}_\lambda}^2 \dd t \right\} \right]\,,
\end{align*}
where the shifted process is given by
\begin{align*}
\begin{cases}
\dd Z_t^\eps = \left[ b(Z_t^\eps) + \sigma(Z_t^\eps) \dot{h}_\lambda \right] \dd t + \sqrt{\eps} \sigma(Z_t^\eps) \dd W_t\,,\\
Z_0^\eps = x\,.
\end{cases}
\end{align*}
Then, given a realization of $Z^\eps$, an ordinary one-dimensional Taylor expansion of $M(\eps) := \lambda f(Z_T^\eps)$ up to second order in $\sqrt{\eps}$ around $\eps = 0$ is used, with remainder term in integral form:
\begin{align*}
M(\eps) = M(0) + \sqrt{\eps} M'(0) + \eps \int_0^1 M''(\eps u) (1 - u) \dd u\,.
\end{align*}
Using this expansion,~\cite{ben-arous:1988} goes on to show that asymptotically
\begin{align}
\begin{aligned}
J^\eps(\lambda) &\overset{\eps \downarrow 0}{\sim} \EE \bigg[\exp \left\{\frac{1}{\eps} \left(M(0) + \sqrt{\eps} M'(0) + \tfrac{1}{2} \eps \hat{Q}_\lambda \right) \right\} \times \\ &\qquad \times \exp \left\{ -\frac{1}{\sqrt{\eps}} \int_0^T \left \langle \dot{h}_\lambda, \dd W \right \rangle - \frac{1}{2 \eps} \int_0^T \norm{\dot{h}_\lambda}^2 \dd t \right\} \bigg]
\end{aligned}
\label{eq:taylor-mgf}
\end{align}
holds (where we write $\hat{Q}_\lambda := M''(0)$) by suitably bounding the remainder term. We will just use this here, and merely calculate the appearing terms. Obviously, we have $M(0) = \lambda f(\phi_\lambda(T))$, such that the exponential $O(1/\eps)$ terms in~\eqref{eq:taylor-mgf} give~$I^*(\lambda)$ as expected. The next term becomes
\begin{align*}
M'(0) = \left \langle \lambda \nabla f(\phi_\lambda(T)), \gamma^{(1)}_T \right \rangle = \int_0^T \left \langle \sigma^\top(\phi_\lambda) \theta_\lambda , \dd W \right \rangle\,,
\end{align*}
such that the exponential $O(1/\sqrt{\eps})$ terms in~\eqref{eq:taylor-mgf} cancel by the first-order optimality conditions. Note that $\gamma^{(1)} = \gamma^{(1)}[W]$ here is defined analogously to~\eqref{eq:second-order-adjoint-mult-noise}, but this is now an SDE driven by the Brownian motion $W$:
\begin{align*}
\begin{cases}
\dd \gamma^{(1)}_t = L\left[\dot{h}_\lambda, \phi_\lambda\right] \gamma^{(1)}_t \dd t + \sigma(\phi_\lambda) \dd W_t\,,\\
\gamma^{(1)}_0 = 0\,.
\end{cases}
\end{align*}
The second derivative of $M$ at $\eps = 0$ is given by
\begin{align}
\hat{Q}_\lambda = \left \langle \gamma^{(1)}_T, \lambda \nabla^2 f(\phi_\lambda(T)) \gamma^{(1)}_T \right \rangle + \left \langle \lambda \nabla f(\phi_\lambda(T)), \gamma^{(2)}_T \right \rangle\,,
\label{eq:qhat}
\end{align}
with
\begin{align}
\begin{cases}
\dd \gamma_t^{(2)} = \left[ L\left[\dot{h}_\lambda, \phi_\lambda\right] \gamma^{(2)}_t + \left(\nabla^2 b(\phi_\lambda) + \nabla^2 \sigma(\phi_\lambda) \dot{h}_\lambda \right) : \left(\gamma_t^{(1)} \right)^{\otimes 2} \right] \dd t + 2 \gamma^{(1)}_t \nabla \sigma(\phi_\lambda) \dd W_t\,,\\
\gamma^{(2)}_0 = 0\,.
\label{eq:gamma-2}
\end{cases}
\end{align}
Importantly, $\gamma^{(1)}$ is driven by the same Brownian motion $W$, so the last term in~\eqref{eq:gamma-2} corresponds to an iterated It{\^o} integral. Putting everything together, we arrive at
\begin{align}
J^\eps(\lambda) \overset{\eps \downarrow 0}{\sim} R_\lambda \exp \left\{\frac{1}{\eps} I^*(\lambda) \right\} \quad \text{with} \quad R_\lambda = \EE \left[\exp \left\{\frac{1}{2} \hat{Q}_\lambda \right\} \right]\,,
\label{eq:mgf-intermediate-expec}
\end{align}
and hence need to consider properties of $\hat{Q}_\lambda$ in the following.

\subsection{Wiener chaos decomposition of $\hat{Q}_\lambda$: Second variation $A_\lambda$ Hilbert--Schmidt, $A_\lambda-\tilde{A}_\lambda$  trace-class}
\label{sec:q-hat-wiener-chaos}

In order to conclude with the result~\eqref{eq:mgf-prefac} from~\eqref{eq:mgf-intermediate-expec}, it remains to analyze the real-valued random variable $\hat{Q}_\lambda$, which depends on the Brownian motion $W$, by performing a Wiener chaos decomposition (which can, due to the definition of $\hat{Q}_\lambda$ in~\eqref{eq:qhat}, only contain terms in the zeroth and second homogeneous chaos)
\begin{align*}
\tfrac{1}{2}\hat{Q}_{\lambda} = \tfrac{1}{2} \EE \left[\hat{Q}_{\lambda} \right] + \int_0^T \int_0^t f_2^{\hat{Q}_\lambda / 2}(t,s) \dd W_s \dd W_t = \tfrac{1}{2} \EE \left[\hat{Q}_{\lambda} \right] + I_2\left(f_2^{\hat{Q}_\lambda / 2}\right)\,.
\end{align*}
Specifically, one needs to (a) connect the chaos coefficients to the classical second variation operator $A_\lambda$ of the noise-to-observable map as well as the trace of $A_\lambda - \tilde{A}_\lambda$ from the introduction, and (b) show that these terms in the Wiener chaos expansion are actually well-defined and hence $\tfrac{1}{2}\hat{Q}_{\lambda} \in \mathscr{H}_0 \oplus \mathscr{H}_2$, with finite expectation $\EE \left[\hat{Q}_{\lambda} \right]$ and HS operator corresponding to the kernel $f_2^{\hat{Q}_\lambda / 2}$. Evaluating~$R_\lambda$ afterwards is then simply a matter of applying proposition~\ref{prop:cf-det-expec}.\\

To address (a), using the linearity of the equations, we split off the explicit iterated integral term in $\gamma^{(2)} = \gamma^{(2)}_{\text{reg}} +  \gamma^{(2)}_{\text{sing}}$ appearing in ~\eqref{eq:qhat} via
\begin{align*}
\begin{cases}
\dd \gamma_{\text{reg}}^{(2)} = \left[ L\left[\dot{h}_\lambda, \phi_\lambda\right] \gamma^{(2)}_{\text{reg}} + \left(\nabla^2 b(\phi_\lambda) + \nabla^2 \sigma(\phi_\lambda) \dot{h}_\lambda \right) : \left(\gamma_t^{(1)} \right)^{\otimes 2} \right] \dd t\,,\\
\gamma^{(2)}_{\text{reg},0} = 0\,,
\end{cases}
\end{align*}
and
\begin{align}
\begin{cases}
\dd \gamma_{\text{sing}}^{(2)} = L\left[\dot{h}_\lambda, \phi_\lambda\right] \gamma^{(2)}_{\text{sing}} \dd t + 2 \gamma^{(1)}_t \nabla \sigma(\phi_\lambda) \dd W_t\,,\\
\gamma^{(2)}_{\text{sing},0} = 0\,,
\label{eq:gamma-sing}
\end{cases}
\end{align}
and define $\hat{Q}_\lambda = \hat{Q}_\lambda^{\text{reg}} + \hat{Q}_\lambda^{\text{sing}}$, where
\begin{align*}
\hat{Q}_\lambda^{\text{sing}} &= \left \langle \lambda \nabla f(\phi_\lambda(T)), \gamma^{(2)}_{\text{sing},T} \right \rangle\\
\hat{Q}_\lambda^{\text{reg}} &= \left \langle \gamma^{(1)}_T, \lambda \nabla^2 f(\phi_\lambda(T)) \gamma^{(1)}_T \right \rangle + \left \langle \lambda \nabla f(\phi_\lambda(T)), \gamma^{(2)}_{\text{reg},T} \right \rangle\,.
\end{align*}
It is straightforward to see that $\EE \left[\hat{Q}_\lambda^{\text{sing}} \right] = 0$ because of the It{\^o} integral and non-anticipating integrand in~\eqref{eq:gamma-sing}; more explicitly, the random variable $\hat{Q}_\lambda^{\text{sing}}$ is already an iterated It{\^o} integral with
\begin{align*}
\hat{Q}_\lambda^{\text{sing}} = 2\int_0^T \int_0^t \tilde{\cal A}_\lambda (t,t') \dd W_{t'} \dd W_{t}
\end{align*}
in terms of the symmetric integral kernel $\tilde{\cal A}_\lambda \colon [0,T]^2 \to \RR^{n \times n}$ of $\tilde{A}_\lambda$ as introduced in~\eqref{eq:a-tilde-kernel} in subsection~\ref{sec:heur}. Denoting the integral kernel of the operator $D = A_\lambda - \tilde{A}_\lambda$ by ${\cal D} = {\cal A}_\lambda - \tilde{\cal A}_\lambda$, one can show through direct computation, using the definitions of $A_\lambda$ and $\tilde{A}_\lambda$ from the introduction, as well as the definition of $\hat{Q}_\lambda^{\text{reg}}$ above, that
\begin{align*}
\hat{Q}_\lambda^{\text{reg}} = \underbrace{\int_0^T \trace \left[{\cal D} \right](t,t) \dd t}_{= \EE \left[\hat{Q}_\lambda^{\text{reg}} \right] \text{ by It{\^o} isometry}} + \underbrace{2 \int_0^T \int_0^t {\cal D}(t,t') \dd W_{t'} \dd W_t}_{\text{e.g.\ via generating functional~\eqref{eq:gen-func-chaos}}}\,,
\end{align*}
such that, if $A_\lambda - \tilde{A}_\lambda$ is indeed trace class and $A_\lambda$ HS, we arrive at
\begin{align*}
\tfrac{1}{2}\hat{Q}_{\lambda} = \tfrac{1}{2} \trace \left[A_\lambda - \tilde{A}_\lambda \right] + I_2(A_\lambda) \in \mathscr{H}_0 \oplus \mathscr{H}_2\,,
\end{align*}
leading to theorem~\ref{thm:mgf-prefac} via proposition~\ref{prop:cf-det-expec} and equation~\eqref{eq:mgf-intermediate-expec}.\\

Finally, to address point (b) above,~\cite{ben-arous:1988} uses corollary~\ref{cor:goodman} of Goodman's theorem to show that $D = A_\lambda - \tilde{A}_\lambda \colon {\cal H } \to {\cal H}$ is TC. For $g, h \in {\cal H}$, the left-hand side of~\eqref{eq:quadrat-form-on-H}, using the definition of $D$ acting on functions in ${\cal H}$, becomes
\begin{align*}
\abs{\left \langle g, D h \right \rangle_{1}} \leq \abs{\left \langle \gamma^{(1)}[g](T), \lambda \nabla^2 f(\phi_\lambda(T)) \gamma^{(1)}[h](T) \right \rangle} + \abs{\left \langle \lambda \nabla f(\phi_\lambda(T)), \gamma^{(2)}_{\text{reg}}[g,h](T) \right \rangle}
\end{align*}
with
\begin{align*}
\begin{cases}
\dd \gamma^{(1)}[g] = L \gamma^{(1)}[g] \dd t + \sigma(\phi_\lambda) \dd g \,,\\
\gamma^{(1)}[g](0) = 0\,,
\end{cases}
\end{align*}
and
\begin{align*}
\begin{cases}
  \dd  \gamma^{(2)}_{\text{reg}}[g,h] = L \gamma^{(2)}_{\text{reg}}[g,h] \dd t + \left(\nabla^2 b + \nabla^2 \sigma \eta_\lambda\right):\left(\gamma^{(1)}[g] \otimes \gamma^{(1)}[h]\right) \dd t\,,\\
  \gamma^{(2)}_{\text{reg}}[g,h](0) = 0\,.
\end{cases}
\end{align*}
Writing the propagator for $L = L[\phi_\lambda, \eta_\lambda] = \nabla b(\phi_\lambda) + \nabla \sigma(\phi_\lambda) \eta_\lambda$ as ${\cal P}$ again, we see that
\begin{align*}
\norm{\gamma^{(1)}[g](T)|} &= \norm{\int_0^T {\cal P}(T,s) \sigma(\phi_\lambda(s)) \dot{g}(s) \, \dd s} \\
&\overset{\text{i.b.p.}}{=} \norm{\int_0^T \dv{}{s}\left( {\cal P}(T,s) \sigma(\phi_\lambda(s)) \right) g(s) \, \dd s} \leq \text{const} \cdot \norm{g}_\infty\,,
\end{align*}
so that $\abs{\left \langle D g, h \right \rangle_1} \leq c \cdot \norm{g}_\infty \cdot \norm{h}_\infty$ follows after a few triangle inequalities and supremum estimates (assuming $\eta_\lambda \in C([0,T],\RR^n)$, such that the propagator ${\cal P}$ is continuously differentiable~\cite{teschl:2024}). Note that the subtraction of $\tilde{A}_\lambda$ is indeed crucial for this estimate to go through, as otherwise the differential equation for the \textit{full} $\gamma^{(2)}[g,h]$ would contain $\left(\nabla\sigma \dot{g}\right)\gamma^{(1)}[h] + \left(\nabla\sigma \dot{h}\right)\gamma^{(1)}[g]$ with ``bare'' derivative terms $\dot{g}$ and $\dot{h}$, and hence cannot be bounded in the same way. This corresponds to the intuitive explanation in section~\ref{sec:theory-result} of why $A_\lambda - \tilde{A}_\lambda$ is TC, while $A_\lambda$ itself will not be TC in general, cf.\ figure~\ref{fig:flowcharts}.\\

Lastly, it is obvious that $\tilde{A}_\lambda\colon {\cal H} \to {\cal H}$ is HS by the HS kernel theorem~\ref{thm:hs-kernel}, since the integral kernel~$\tilde{\cal A}_\lambda$ as given explicitly in~\eqref{eq:a-tilde-kernel} is square-integrable, so $\tilde{A}_\lambda\colon L^2([0,T], \RR^n) \to L^2([0,T], \RR^n)$ acting on $L^2$ is HS, and ${\cal H}$ and $L^2([0,T], \RR^n)$ are isometric via integration / differentiation. Thus, $A_\lambda = A_\lambda - \tilde{A}_\lambda + \tilde{A}_\lambda$ as the sum of two HS operators is HS, thereby concluding our outline of the proof of theorem~\ref{thm:mgf-prefac} given in~\cite{ben-arous:1988}.

\subsection{Transforming from MGFs to tail probabilities}
\label{sec:mgf-to-prob}

After establishing the asymptotic expansion~\eqref{eq:mgf-def-asymp} with prefactor~\eqref{eq:mgf-prefac} for the MGF in case of a strictly convex rate function, it is -- at least on a formal level --  simple to conclude with theorem~\ref{thm:prefac} for tail probabilities under this assumption. We first state the following general

\begin{proposition}
\label{prop:tail-trafo}
If a sharp asymptotic expansion $J^\eps(\lambda) \sim R_\lambda \exp \left\{\eps^{-1} I^*(\lambda) \right\}$ of the MGF holds as $\eps \downarrow 0$, then the tail probability satisfies
\begin{align*}
P^\eps(z) \overset{\eps \downarrow 0}{\sim}  \eps^{1/2} (2 \pi)^{-1/2}
  \, C(z) \,\exp\left\{-\frac1\eps I(z)\right\}
\end{align*}
with
\begin{align}
C(z) = R_{\lambda_z} \sqrt{I''(z)} \lambda_z^{-1}\,,
\label{eq:cz-from-mgf-trafo}
\end{align}
where $\lambda_z = I'(z)$ is the Lagrange multiplier associated with $z$.
\end{proposition}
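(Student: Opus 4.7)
My plan is to prove this via exponential tilting (Esscher transform) combined with a local Gaussian approximation under the tilted measure, which is the standard route from Laplace-type asymptotics of an MGF to sharp tail asymptotics (à la Bahadur--Rao). The key identity I would start from is
\begin{align}
P^\eps(z) = \PP[f(X_T^\eps) \geq z] = J^\eps(\lambda_z) \, e^{-\lambda_z z/\eps} \, \EE^{\tilde{\PP}^{\lambda_z,\eps}}\!\left[ e^{-\lambda_z (Y^\eps - z)/\eps} \, \mathbf{1}_{Y^\eps \geq z} \right],
\end{align}
where $Y^\eps = f(X_T^\eps)$ and $\tilde{\PP}^{\lambda_z,\eps}$ is the probability measure with Radon--Nikodym derivative $d\tilde{\PP}^{\lambda_z,\eps}/d\PP = e^{\lambda_z Y^\eps/\eps}/J^\eps(\lambda_z)$. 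The tilting parameter is chosen as $\lambda_z = I'(z)$ precisely so that the convex-conjugacy relation $I^*(\lambda_z) - \lambda_z z = -I(z)$ holds, making $z$ the ``typical'' value under the tilted measure.

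Next, I would invoke the hypothesis $J^\eps(\lambda) \sim R_\lambda \exp\{\eps^{-1} I^*(\lambda)\}$ in a neighborhood of $\lambda_z$, which via the change-of-measure formula for the MGF of $Y^\eps$ under $\tilde{\PP}^{\lambda_z,\eps}$ and a standard second-derivative (Taylor) argument implies that, under $\tilde{\PP}^{\lambda_z,\eps}$, the rescaled variable $U^\eps := (Y^\eps-z)/\sqrt{\eps}$ converges in distribution to $\mathcal{N}(0,(I''(z))^{-1})$ as $\eps \downarrow 0$. This is the Gaussian CLT associated with the tilted family; the variance $1/I''(z)$ is inherited from the curvature of $I^*$ at $\lambda_z$ (using $(I^*)''(\lambda_z) = 1/I''(z)$).

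The core computation is then the asymptotics of the remaining expectation. In terms of $U^\eps$,
\begin{align}
\EE^{\tilde{\PP}^{\lambda_z,\eps}}\!\left[ e^{-\lambda_z U^\eps/\sqrt{\eps}} \, \mathbf{1}_{U^\eps \geq 0} \right]
\sim \int_0^\infty e^{-\lambda_z u/\sqrt{\eps}} \sqrt{\tfrac{I''(z)}{2\pi}} \, du = \sqrt{\tfrac{I''(z)}{2\pi}} \, \frac{\sqrt{\eps}}{\lambda_z},
\end{align}
the final equality obtained by the substitution $v = \lambda_z u/\sqrt{\eps}$ (the Gaussian weight is frozen to its value at $u=0$ since the integral concentrates on $u = O(\sqrt{\eps})$). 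Multiplying everything together gives
\begin{align}
P^\eps(z) \sim R_{\lambda_z} e^{(I^*(\lambda_z) - \lambda_z z)/\eps} \cdot \sqrt{\eps/(2\pi)} \cdot \sqrt{I''(z)}/\lambda_z = \eps^{1/2}(2\pi)^{-1/2} C(z) e^{-I(z)/\eps}
\end{align}
with $C(z)$ as claimed.

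The main obstacle is rigorously justifying the Gaussian approximation and the exchange of the $\eps \downarrow 0$ limit with the integral against $e^{-\lambda_z U^\eps/\sqrt{\eps}}\mathbf{1}_{U^\eps \geq 0}$. Weak convergence of $U^\eps$ is not enough because the test function is $\eps$-dependent and unbounded on $[0,\infty)$; one needs either a local CLT (a density-level statement) for $Y^\eps$ under $\tilde{\PP}^{\lambda_z,\eps}$, or equivalently an inverse-Laplace contour argument applied to the sharp MGF expansion. Standard ways to secure this are: (i) extend the sharp MGF expansion to a complex strip around $\lambda_z$ and invert the Laplace transform by the steepest-descent method, or (ii) exploit smoothness of the conditional density of $Y^\eps$ (which in the diffusion setting follows from Malliavin-calculus nondegeneracy of $f(X_T^\eps)$ near the design point). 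Either route yields the needed uniform control; once available, the remaining arithmetic above is routine.
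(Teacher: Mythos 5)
Your proof is correct and arrives at the stated formula for $C(z)$, but it takes a genuinely different route from the paper's. The paper proceeds analytically: it applies the saddlepoint method to the contour integral for the inverse Laplace transform of the MGF to obtain the density of $f(X_T^\eps)$, and then integrates that density by Laplace's method to get the tail. You instead take the probabilistic Bahadur--Rao route: you tilt the measure by $e^{\lambda_z Y^\eps/\eps}/J^\eps(\lambda_z)$, use the conjugacy identity $I^*(\lambda_z) - \lambda_z z = -I(z)$ to extract the correct exponential, and reduce the prefactor to a one-sided exponential integral against a local Gaussian density for $(Y^\eps - z)/\sqrt{\eps}$ under the tilted law, with variance $(I^*)''(\lambda_z) = 1/I''(z)$ read off from the curvature hypothesis on $J^\eps$. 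Your arithmetic checks out: the density at the origin contributes $\sqrt{I''(z)/(2\pi)}$, the exponential integral contributes $\sqrt{\eps}/\lambda_z$, and combining with $R_{\lambda_z} e^{-I(z)/\eps}$ gives exactly $C(z) = R_{\lambda_z}\sqrt{I''(z)}/\lambda_z$. The tilting approach is more transparent probabilistically and shows cleanly where each factor in $C(z)$ comes from; the paper's inverse-Laplace route is more directly suited to converting a sharp MGF expansion into a sharp density expansion. Notably, both routes leave essentially the same gap: the paper explicitly declines to control the error term, and you correctly identify that weak convergence under the tilted measure is insufficient against the $\eps$-dependent, unbounded test function $e^{-\lambda_z u/\sqrt{\eps}}\mathbf{1}_{u\geq 0}$, and that a local (density-level) CLT or a contour-integral argument with uniform error control is needed — so your proposal is, if anything, more forthcoming about the rigor obstacle than the paper's one-line remark.
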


This does not rely on the specific form of the prefactor $R_\lambda$, but merely follows from successive applications of the saddlepoint method, as $\eps \downarrow 0$, to the contour integral for the inverse Laplace transform to convert from the MGF to the probability density function of $f(X_T^\eps)$, as well as Laplace's method for the integral of the probability density function to get the tail probability~\cite{deuschel-etal:2014,schorlepp-grafke-grauer:2023,schorlepp-tong-grafke-etal:2023}. The main mathematical difficulty in rigorously proving this proposition, which we do not address here, is controlling the error term in the asymptotic expansion.\\

It is desirable to simplify this result further, and hence obtain theorem~\ref{thm:prefac}, for at least two reasons: First, note that~\eqref{eq:cz-from-mgf-trafo} contains the second derivative of the rate function at $z$, which is unknown in general, and e.g.\ needs to be determined via finite differences of $\lambda_z$ for different $z$'s. While this is in principle feasible to do, it is in a certain sense against the spirit of having a complete tail probability estimate for a given $z$ from solving a single optimization problem and evaluating determinants for that particular~$z$. Second, and more importantly, as we comment below, the tail probability prefactor in terms of~\eqref{eq:cz-from-mgf-trafo} is only well-defined for strictly convex rate functions $I$ (with $I''(z) > 0$ and well-defined $R_{\lambda_z}$), whereas the expression given in theorem~\ref{thm:prefac} is in fact possible to evaluate independently of the convexity of the rate function, as long as the second-order sufficient optimality condition~\eqref{eq:sufficent-optimal} (and differentiability) holds.\\

Using proposition~\ref{prop:tail-trafo}, we can now finally derive theorem~\ref{thm:prefac}. As noted already in~\cite{schorlepp-tong-grafke-etal:2023}, by differentiating the first-order optimality condition for the instanton noise $\eta_z$
\begin{align*}
\eta_z = \lambda_z \left. \frac{\delta F}{\delta \eta} \right|_{\eta_z}\,,
\end{align*}
with respect to $z$ and recalling $\lambda_z = I'(z) = \left \langle \eta_z, \partial_z \eta_z \right \rangle$, we obtain that
\begin{align}
\partial_z \eta_z = \frac{I''(z)}{\lambda_z} \eta_z + A_{\lambda_z} \partial_z\eta_z \quad \Leftrightarrow \quad \left[\text{Id} -  A_{\lambda_z}\right] \partial_z \eta_z = \frac{I''(z)}{\lambda_z} \eta_z\,.
\label{eq:deriv-optimality}
\end{align}
From~\eqref{eq:cz-from-mgf-trafo}, we then see that
\begin{align}
\begin{aligned}
C(z) &= R_{\lambda_z} \sqrt{I''(z)} \lambda_z^{-1} \overset{\eqref{eq:deriv-optimality}}{=} \frac{R_{\lambda_z}}{\left \langle \eta_z, \left(\Id - A_{\lambda_z} \right)^{-1} \eta_z \right \rangle_{L^2}^{1/2}} \\
&\overset{\eqref{eq:mgf-prefac}}{=} \left[ 2 I(z) \left \langle  e_z, \left(\Id - A_{\lambda_z} \right)^{-1} e_z \right \rangle {\det}_2 \left( \Id - A_{\lambda_z} \right)  \right]^{-1/2} \exp \left\{\frac{1}{2} \trace \left[A_{\lambda_z} - \tilde{A}_{\lambda_z} \right] \right\}\,.
\end{aligned}
\label{eq:tail-prefac-intermediate}
\end{align}
To make progress at this point, we note that Cramer's rule for finite-rank or TC operators~$A_{\lambda_z}$ and unit vectors $e_z$ reads~\cite{mckean:2011,schorlepp-tong-grafke-etal:2023}
\begin{align}
\left \langle e_z, \left[ \text{Id} - A_{\lambda_z} \right]^{-1} e_z \right \rangle = \frac{\det \left(\text{Id} - \text{pr}_{\eta_z^\perp }A_{\lambda_z} \text{pr}_{\eta_z^\perp} \right)}{\det \left(\text{Id} -  A_{\lambda_z} \right)}\,.
\label{eq:trace-class-adjugate}
\end{align}
Based on this, and the observation that for TC operators $B:L^2\left([0,T],\RR^n\right) \to L^2\left([0,T],\RR^n\right)$, we have
\begin{align}
\begin{aligned}
\det \left( \exp \left\{B \right\} \right) = \exp \left\{ \trace [ B ] \right\} &= \exp \left\{ \trace \left[ \left(\text{Id} - e_z^{\otimes 2} + e_z^{\otimes 2} \right) B \left(\text{Id} - e_z^{\otimes 2} + e_z^{\otimes 2} \right) \right] \right\}\\ &= \exp \left\{ \trace \left[ \text{pr}_{\eta_z^\perp} B \text{pr}_{\eta_z^\perp} \right] \right\} \exp \left\{\left \langle e_z, B e_z \right \rangle \right\}\,,
\end{aligned}
\label{eq:trace-projection}
\end{align}
we propose the following generalization of Cramer's rule to HS operators  $A_{\lambda_z}$ (cf.~\cite{ikeda-manabe:1996})
\begin{align}
\left \langle e_z, \left[ \text{Id} - A_{\lambda_z} \right]^{-1} e_z \right \rangle = \frac{{\det}_2 \left(\text{Id} - \text{pr}_{\eta_z^\perp }A_{\lambda_z} \text{pr}_{\eta_z^\perp} \right)}{{\det}_2 \left(\text{Id} -  A_{\lambda_z} \right)} \exp \left\{  \left \langle e_z, A_{\lambda_z} e_z \right \rangle \right\}\,,
\label{eq:hilbert-schmidt-adjugate}
\end{align}
by requiring that the expression reduces to~\eqref{eq:trace-class-adjugate} if $A_{\lambda_z}$ does happen to be TC instead of just HS. More rigorously, one can use a continuity argument to show~\eqref{eq:hilbert-schmidt-adjugate} via finite-rank  approximations $A_{\lambda_z}^{(k)}$ to a HS operator~$A_{\lambda_z}$, since for all such approximations $A_{\lambda_z}^{(k)}$ the equality~\eqref{eq:hilbert-schmidt-adjugate} holds, and the left-hand side and right-hand side will converge as $k \to \infty$. To complete the derivation of the main result~\eqref{eq:tail-prefac-projected} of this paper from~\eqref{eq:tail-prefac-intermediate}, we simply apply Cramer's rule~\eqref{eq:hilbert-schmidt-adjugate}, as well as~\eqref{eq:trace-projection} with $B = A_{\lambda_z} - \tilde{A}_{\lambda_z}$, to equation~\eqref{eq:tail-prefac-intermediate}.

\subsection{Remarks on the convexity of the rate function}
\label{sec:convex-rf}

Note that throughout this appendix, we have assumed that the rate function is strictly convex and differentiable, in order to have a finite MGF for all $\lambda \in \RR$, and a bijective relation between $\lambda$ and $z$. We briefly outline an argument here that the main result, theorem~\ref{thm:prefac}, for the tail probability of $f(X_T^\eps)$ remains valid even in the non-convex case with $I''(z) \leq 0$ (as long as $\lambda_z > 0$), without going into more technical details.
Starting from equation~\eqref{eq:deriv-optimality}, which holds regardless of convexity of the rate function $z \mapsto I(z)$, we can take the inner product of~\eqref{eq:deriv-optimality} with~$\partial_z \eta_z$, resulting in
\begin{align*}
I''(z) = \left \langle \partial_z \eta_z , \left[\text{Id} -  A_{\lambda_z}\right] \partial_z \eta_z \right \rangle\,.
\end{align*}
Note that by the second-order sufficient optimality condition~\eqref{eq:sufficent-optimal}, the operator $\text{Id} -  A_{\lambda_z}$ is positive definite on the space~$\eta_z^\perp$, or, more generally, on the tangent space of the constraint hypersurface $\{ \eta \colon F[\eta] = z \}$ at $\eta = \eta_z$. Hence, assuming~\eqref{eq:sufficent-optimal}, the only direction in which $\text{Id} -  A_{\lambda_z}$ can fail to be positive definite is the surface normal $\delta F / \delta \eta |_{\eta_z} \propto \eta_z$. Because of $\langle \delta F / \delta \eta |_{\eta_z}, \partial_z \eta_z \rangle = 1$ by the chain rule, the vector $\partial_z \eta_z$ always has a nonzero component along the surface normal direction, and hence $I''(z) \leq 0$ implies that $\text{Id} -  A_{\lambda_z}$ will not be positive definite in the normal direction, rendering the MGF prefactor~$R_{\lambda_z}$ ill-defined.\\

To address this issue, we note that first of all, numerically solving the constrained optimization problem~\eqref{eq:min-prob-eta} for a $z$ with~$I''(z)\leq 0$ non-convex is not an issue in practice, as long as penalty-type methods are used to incorporate the constraint (this is in contrast to solving the dual problem, which may lead to a finite or infinite duality gap in this case). Having access to~$\eta_z$ and~$\lambda_z$ in this way, one can show that the prefactor~\eqref{eq:tail-prefac-projected} -- which is still \textit{possible} to evaluate by~\eqref{eq:sufficent-optimal} and leads to a well-defined result -- is in fact the correct expression to use, by employing a reparameterization argument, cf.~\cite{alqahtani-grafke:2021}. This reparameterization does not need to be computed explicitly in practice, where~\eqref{eq:tail-prefac-projected} can just be used directly, and is only a theoretical tool here to convexify the problem to apply the results of this appendix. The idea is to construct a function $g \colon \RR \to \RR$ with $g' > 0$ (orientation-preserving) and $g''(z) < 0$ (strictly concave at $z$), such that for the reparameterized observable $\hat{f} = g \circ f$, we have $\hat{I}''(\hat{z}) > 0$ with $\hat{I}$ the rate function of $\hat{f}$, at $\hat{z} = g(z)$. Furthermore, since $\hat{I}'(\hat{z}) = \hat{\lambda}_{\hat{z}} = \lambda_z / g'(z)$, and of course $\hat{\eta}_{\hat{z}} = \eta_z$ (as we only change the parameterization, not the constraint set itself in~\eqref{eq:min-prob-eta}), we can compute
\begin{align}
\text{Id} - \hat{A}_{\hat{\lambda}_{\hat{z}}} = \text{Id} - \left. \frac{\delta^2 \left(\hat{\lambda}_{\hat{z}} \cdot g \circ F \right)}{\delta \eta^2} \right|_{\hat{\eta}_{\hat{z}}} = \text{Id} - A_{\lambda_z} - \lambda_z  \frac{g''(z)}{g'(z)} \left( \left. \frac{\delta F}{\delta \eta} \right|_{\eta_z} \right)^{\otimes 2}\,,
\label{eq:reparam-perturb}
\end{align}
so the reparameterization leads to a rank-1 perturbation of the second variation operator $\text{Id} - A_{\lambda_z}$ precisely in the constraint surface's normal direction. For suitably constructed $g$, this renders the operator $\text{Id} - \hat{A}_{\hat{\lambda}_{\hat{z}}}$ for the reparameterized problem positive definite on the whole space. Then the derivation of this appendix applies and leads to a prefactor $\hat{C}(\hat{z})$ given by~\eqref{eq:tail-prefac-projected} for the asymptotic expansion of the tail probability $\PP \left[g \circ f(X_T^\eps) \geq \hat{z} \right] = \PP \left[f(X_T^\eps) \geq z \right]$, where all quantities in~\eqref{eq:tail-prefac-projected} are replaced by their ``hat-equivalents''. From~\eqref{eq:reparam-perturb}, we have $\text{pr}_{\hat{\eta}_{\hat{z}}^\perp} \hat{A}_{\hat{\lambda}_{\hat{z}}} \text{pr}_{\hat{\eta}_{\hat{z}}^\perp} = \text{pr}_{\eta_z^\perp} A_{\lambda_z} \text{pr}_{\eta_z^\perp}$, and since $\eta_z$, $\phi_z$ and $\theta_z$ remain invariant under the reparameterization, so does the potentially non-trace class part $\tilde{A}$ of the second variation defined through~\eqref{eq:atilde-return} and~\eqref{eq:a-tilde-adjoint-mult-noise}, leading to $\hat{C}(\hat{z}) = C(z)$. In total, this implies that~\eqref{eq:tail-prefac-projected} holds even for non-convex rate functions.


\bibliographystyle{siamplain}
\bibliography{../bib}

\end{document}